\newtheorem{theorem}{Theorem}
\newtheorem{corollary}{Corollary}
\newtheorem{lemma}{Lemma}
\newtheorem{definition}{Definition}
\newtheorem{fact}{Fact}
\newtheorem{proposition}{Proposition}
\newtheorem{example}{Example}
\newcommand{\addtag}{\refstepcounter{equation}\tag{\theequation}}
\newcommand{\prob}[2][]{\text{\bf Pr}\ifthenelse{\not\equal{}{#1}}{_{#1}}{}\!\left[{\def\givenn{\middle|}#2}\right]}
\newcommand{\expect}[2][]{\mathbb{E} \ifthenelse{\not\equal{}{#1}}{_{#1}}{}\!\left[{\def\givenn{\middle|}#2}\right]}
\DeclareMathOperator{\argmax}{argmax}
\newcommand{\given}{\,\mid\,}
\newcommand{\signal}{x}
\newcommand{\distof}[1]{\Delta_{#1}}
\newcommand{\val}{v}
\DeclareMathOperator{\OPT}{OPT}
\DeclareMathOperator{\opt}{opt}
\DeclareMathOperator{\ALG}{ALG}
\DeclareMathOperator{\FTL}{FTL}
\DeclareMathOperator{\EW}{EW}
\DeclareMathOperator{\Regret}{Regret}
\DeclareMathOperator{\MAB}{MAB}
  \DeclareMathOperator{\SDA}{SDA}
  \DeclareMathOperator{\OLA}{OLA}
 \DeclareMathOperator{\btl}{btl}
 \DeclareMathOperator{\BTL}{BTL}
 \DeclareMathOperator{\BTPL}{PBTL}
  \DeclareMathOperator{\MAXPTRB}{MAXPTRB}
 \DeclareMathOperator{\FTPL}{PFTL}
 \DeclareMathOperator{\OSV}{OSV}
  \DeclareMathOperator{\SV}{SV}
\newcommand{\olv}{u}
\newcommand{\olvs}{{\mathbf \olv}}
\newcommand{\eolvs}{\tilde{\olvs}}
\newcommand{\eolv}{\tilde{\olv}}
\newcommand{\vali}[1][i]{v_{#1}}
\newcommand{\vals}{{\mathbf v}}
\newcommand{\alloc}{x}
\newcommand{\ballocs}{{\mathbf \alloc}}
\newcommand{\pay}{p}
\newcommand{\bpays}{{\mathbf \pay}}
\newcommand{\bids}{{\mathbf b}}
\newcommand{\bidsmi}[1][i]{{\mathbf b}_{-#1}}
\newcommand{\balloci}[1][i]{\alloc_{#1}}
\newcommand{\bpayi}[1][i]{\pay_{#1}}
\newcommand{\cumval}{U}
\newcommand{\regret}{r}
\newcommand{\regreti}[1][i]{\regret_{#1}}
\newcommand{\eregret}{\tilde{\regret}}
\newcommand{\eregreti}[1][i]{\eregret_{#1}}
\newcommand{\evali}[1][i]{\tilde{\val}_{#1}}
\newcommand{\olact}{a}
\newcommand{\pr}{\alpha}
\newcommand{\prs}{{\boldsymbol \pr}}
\newcommand{\epr}{\tilde{\pr}}
\newcommand{\eprs}{\tilde{\prs}}
\newcommand{\Qmat}{\tilde{A}}
\newcommand{\pvec}{\prs}
\newcommand{\qvec}{\eprs}
\newcommand{\pij}{\pr^i_\olact}
\newcommand{\pveci}[1][i]{\pvec^{#1}}
\newcommand{\vveci}[1][i]{\mathbf{\olvs}^{#1}}
\newcommand{\zee}{z}
\newcommand{\roundalt}{{i'}}
\newcommand{\rowprob}{\alpha}
\newcommand{\rowstrat}{\boldsymbol{\rowprob}}
\newcommand{\colprob}{\beta}
\newcommand{\colstrat}{\boldsymbol{\colprob}}
\newcommand{\regretcond}{\rho}
\newcommand{\regretvec}{\boldsymbol{\regretcond}}
\def\xMin{0}
\def\xMax{1}
\newcommand{\mabreductionfig}{\begin{tikzpicture}[>=stealth, node distance=4.2cm, font=\small]

\node[draw, rounded corners, minimum width=1.5cm, minimum height=2.8cm, align=center] (ola)    {\textbf{OLA}};
\node[draw, rounded corners, minimum width=1.5cm, minimum height=2.8cm, align=center, right=of ola] (mab) {\textbf{MAB}};
\node[draw, rounded corners, minimum width=1.5cm, minimum height=2.8cm, align=center, right=of mab] (world) {\textbf{WORLD}};

\path let \p1 = (ola.south),   \p2 = (ola.north)   in coordinate (olaR)   at ($(\p1)!0.6666!(\p2)$);
\path let \p1 = (mab.south),   \p2 = (mab.north)   in coordinate (mabR)   at ($(\p1)!0.6666!(\p2)$);
\path let \p1 = (world.south), \p2 = (world.north) in coordinate (worldR) at ($(\p1)!0.6666!(\p2)$);

\path let \p1 = (ola.south),   \p2 = (ola.north)   in coordinate (olaL)   at ($(\p1)!0.3333!(\p2)$);
\path let \p1 = (mab.south),   \p2 = (mab.north)   in coordinate (mabL)   at ($(\p1)!0.3333!(\p2)$);
\path let \p1 = (world.south), \p2 = (world.north) in coordinate (worldL) at ($(\p1)!0.3333!(\p2)$);

\draw[->, thick] (olaR -| ola.east) -- node[above]{\(\eprs\)} (mabR -| mab.west);
\draw[->, thick] (mabR -| mab.east) -- node[above]{\(\olact \sim \prs=(1-\varepsilon)\eprs+\varepsilon/k\)} (worldR -| world.west);

\draw[->, thick] (mabL -| mab.west) -- node[below]{\(\eolvs=\bigl(0,\ldots,\tfrac{\olv_\olact}{\pr_\olact},\ldots,0\bigr)\)} (olaL -| ola.east);
\draw[->, thick] (worldL -| world.west) -- node[below]{\(\olv_\olact\)} (mabL -| mab.east);

\end{tikzpicture}}
\newcommand{\swapreductionfig}{\begin{tikzpicture}[>=stealth, font=\small]

\def\boxw{2.1cm}
\def\boxh{0.8cm}
\def\vsep{0.4cm}
\def\hgap{3.5cm} 

\node[draw, rounded corners, minimum width=\boxw, minimum height=\boxh, align=center] (A1) {$\OLA_{1}$};
\node[draw, rounded corners, minimum width=\boxw, minimum height=\boxh, align=center,
      below=\vsep of A1] (A2) {$\OLA_{2}$};
\node[align=center, below=\vsep of A2] (Adots) {$\vdots$};
\node[draw, rounded corners, minimum width=\boxw, minimum height=\boxh, align=center,
      below=\vsep of Adots] (Ak) {$\OLA_{k}$};

\coordinate (TopRef) at ($(A1.north east)+(\hgap,0)$);
\coordinate (BotRef) at ($(Ak.south east)+(\hgap,0)$);

\node[draw, rounded corners, fit=(TopRef)(BotRef),
      inner xsep=0.8cm, inner ysep=0pt] (H) {};
\node[align=center] at (H.center) {$\SDA$};

\coordinate (WTop) at ($(TopRef)+(\hgap+0.8cm,0)$);
\coordinate (WBot) at ($(BotRef)+(\hgap+0.8cm,0)$);
\node[draw, rounded corners, fit=(WTop)(WBot),
      inner xsep=0.9cm, inner ysep=0pt] (World) {};
\node[align=center] at (World.center) {WORLD};

\foreach \name in {A1,A2,Ak}{
  \coordinate (\name-66) at ($(\name.south east)!0.6666!(\name.north east)$);
  \coordinate (\name-33) at ($(\name.south east)!0.3333!(\name.north east)$);
}

\foreach \name in {A1,A2,Ak}{
  \coordinate (H-\name-66) at ($(H.west |- \name-66)$);
  \coordinate (H-\name-33) at ($(H.west |- \name-33)$);
}

\draw[->, thick] (A1-66) -- node[above]{$\qvec_1$} (H-A1-66);
\draw[->, thick] (A2-66) -- node[above]{$\qvec_2$} (H-A2-66);
\draw[->, thick] (Ak-66) -- node[above]{$\qvec_k$} (H-Ak-66);

\draw[->, thick] (H-A1-33) -- node[below]{$\eolvs_1 = \pr_1\,\olvs$} (A1-33);
\draw[->, thick] (H-A2-33) -- node[below]{$\eolvs_2 =\pr_2\,\olvs$} (A2-33);
\draw[->, thick] (H-Ak-33) -- node[below]{$\eolvs_k =\pr_k\,\olvs$} (Ak-33);

\coordinate (H-e66) at ($(H.south east)!0.6666!(H.north east)$);
\coordinate (H-e33) at ($(H.south east)!0.3333!(H.north east)$);
\coordinate (W-w66) at ($(World.south west)!0.6666!(World.north west)$);
\coordinate (W-w33) at ($(World.south west)!0.3333!(World.north west)$);

\draw[->, thick] (H-e66) -- node[above]{$\olact \sim \pvec$} (W-w66);

\draw[->, thick] (W-w33) -- node[below]{$\olvs$} (H-e33);

\end{tikzpicture}}
\newcommand{\btlvsoptfig}{\begin{tikzpicture}[x=3cm,y=1cm,font=\small,rounded corners=1pt]

\def\barh{0.6}
\def\gap{0.30}

\fill[black!20,rounded corners=1pt] (0.0,0) rectangle (0.4,\barh);   \fill[black!20,rounded corners=1pt] (0.4,0) rectangle (0.7,\barh);   \fill[black!20,rounded corners=1pt] (0.7,-\barh-\gap) rectangle (1.4,-\gap); \fill[black!20,rounded corners=1pt] (1.4,0) rectangle (2.0,\barh);   

\fill[pattern=north east lines,pattern color=black,rounded corners=1pt] (0.0,0) rectangle (0.4,\barh);
\fill[pattern=north east lines,pattern color=black,rounded corners=1pt] (0.4,0) rectangle (0.7,\barh);
\fill[pattern=north east lines,pattern color=black,rounded corners=1pt] (0.4,-\barh-\gap) rectangle (1.4,-\gap);
\fill[pattern=north east lines,pattern color=black,rounded corners=1pt] (1.0,0) rectangle (2.0,\barh);

\foreach \x/\xx in {0/0.4,0.4/0.7,0.7/1.0,1.0/2.0}
  \draw[line width=0.5pt,rounded corners=1pt] (\x,0) rectangle (\xx,\barh);
\foreach \x/\xx in {0/0.2,0.2/0.4,0.4/1.4,1.4/1.6}
  \draw[line width=0.5pt,rounded corners=1pt] (\x,-\barh-\gap) rectangle (\xx,-\gap);

\foreach \x/\label in {0.2/1,0.55/2,0.85/3,1.5/4}
  \node[fill=white,inner sep=1pt,rounded corners=1pt] at (\x,0.5*\barh) {\label};
\foreach \x/\label in {0.1/1,0.3/2,0.9/3,1.5/4}
  \node[fill=white,inner sep=1pt,rounded corners=1pt] at (\x,-\gap-0.5*\barh) {\label};

\node[anchor=east] at (-0.25, 0.5*\barh) {Action 1};
\node[anchor=east] at (-0.25,-\gap-0.5*\barh) {Action 2};

\begin{scope}[shift={(-.3,-2)}]
\def\lw{0.35}   \def\lh{\barh}  

\fill[pattern=north east lines,pattern color=black,rounded corners=1pt]
    (0,0) rectangle +( \lw,\lh );
  \draw[rounded corners=1pt,line width=0.5pt]
    (0,0) rectangle +( \lw,\lh );
\node[fill=white,rounded corners=1pt,inner sep=1pt]
    at (0+0.5*\lw, 0+0.5*\lh) {$i$};
\node[anchor=west] at (\lw+0.06, 0.5*\lh) {$\btl^i$};

\begin{scope}[shift={(1,0)}]
    \fill[black!20,rounded corners=1pt] (0,0) rectangle +( \lw,\lh );
    \draw[rounded corners=1pt,line width=0.5pt] (0,0) rectangle +( \lw,\lh );
    \node[fill=white,rounded corners=1pt,inner sep=1pt]
      at (0+0.5*\lw, 0+0.5*\lh) {$i$};
    \node[anchor=west] at (\lw+0.06, 0.5*\lh) {$\opt^i$};
  \end{scope}
\end{scope}

\end{tikzpicture}}
\newcommand{\pftlX}{.7cm}
\newcommand{\pftlBarH}{0.6}
\newcommand{\pftlRC}{1pt}
\newcommand{\pftlCoinXShift}{0.55}
\newcommand{\pftlLblInset}{0.10}
\newcommand{\pftlTotalX}{7.4} \newcommand{\pftlLeftXmin}{-0.8} 
\newcommand{\pftlYAone}{0}
\newcommand{\pftlYAtwo}{-1.0}
\newcommand{\pftlYAthree}{-2.0}
\newcommand{\pftlYAfour}{-3.0}
\newcommand{\pftlWAone}{3.2}
\newcommand{\pftlWAtwo}{2.4}
\newcommand{\pftlWAthree}{5.6}
\newcommand{\pftlWAfour}{4.8}
\newcommand{\pftlDrawBar}[4][Action]{\coordinate (btl@base) at (0,#2);
  \fill[black!20,rounded corners=\pftlRC] (btl@base) rectangle ++(#3,\pftlBarH);
  \draw[line width=0.5pt,rounded corners=\pftlRC] (btl@base) rectangle ++(#3,\pftlBarH);
\node[anchor=east] at (-0.40,#2+0.5*\pftlBarH) {#1 #4};
\node[anchor=east,inner sep=1pt] at (#3-\pftlLblInset,#2+0.5*\pftlBarH) {$\cumval^{i-1}_{#4}$};
}
\newcommand{\pftlDrawCoin}[2]{\draw[line width=0.5pt] (#2+\pftlCoinXShift,#1+0.5*\pftlBarH) circle (0.18);
  \draw[line width=0.3pt] (#2+\pftlCoinXShift,#1+0.5*\pftlBarH) circle (0.13);
  \node[scale=0.6] at (#2+\pftlCoinXShift,#1+0.5*\pftlBarH) {\$};
}
\newcommand{\pftlDrawBrokenCoin}[3]{\draw[line width=0.5pt,dashed,color=black!50] (#2+\pftlCoinXShift,#1+0.5*\pftlBarH) circle (0.18);
  \draw[line width=0.3pt,dashed,color=black!50] (#2+\pftlCoinXShift,#1+0.5*\pftlBarH) circle (0.13);
  \node[scale=0.8,text=black!50] at (#2+\pftlCoinXShift,#1+0.5*\pftlBarH) {#3};
}
\newcommand{\pftlAppendBox}[4]{\draw[line width=0.5pt,rounded corners=\pftlRC] (#2,#1) rectangle ++(#3,\pftlBarH);
\coordinate (btl@bl) at (#2,#1);
  \coordinate (btl@tr) at ($(#2,#1)+(#3,\pftlBarH)$);
  \coordinate (btl@C)  at ($ (btl@bl)!0.5!(btl@tr) $);
\draw[line width=0.5pt,color=black!50] (btl@C) circle (0.18);
  \draw[line width=0.3pt,color=black!50] (btl@C) circle (0.13);
  \node[scale=0.8,text=black!50] at (btl@C) {#4};
}
\newcommand{\pftlAppendGrayBox}[4]{\fill[black!20,rounded corners=\pftlRC] (#2,#1) rectangle ++(#3,\pftlBarH);
  \draw[line width=0.5pt,rounded corners=\pftlRC] (#2,#1) rectangle ++(#3,\pftlBarH);
  \node at (#2+0.5*#3,#1+0.5*\pftlBarH) {#4};
}
\newcommand{\pftlreadycoinsfig}{\begin{tikzpicture}[x=\pftlX,y=1cm,font=\small,rounded corners=\pftlRC]
\path[use as bounding box] (\pftlLeftXmin,-3.5) rectangle (\pftlTotalX,1);

  \pftlDrawBar[]{\pftlYAone}{\pftlWAone}{1}
  \pftlDrawBar[]{\pftlYAtwo}{\pftlWAtwo}{2}
  \pftlDrawBar[]{\pftlYAthree}{\pftlWAthree}{3}
  \pftlDrawBar[]{\pftlYAfour}{\pftlWAfour}{4}

  \pftlDrawCoin{\pftlYAone}{\pftlWAone}
  \pftlDrawCoin{\pftlYAtwo}{\pftlWAtwo}
  \pftlDrawCoin{\pftlYAthree}{\pftlWAthree}
  \pftlDrawCoin{\pftlYAfour}{\pftlWAfour}
\end{tikzpicture}}
\newcommand{\pftlflippedcoinsfig}{\begin{tikzpicture}[x=\pftlX,y=1cm,font=\small,rounded corners=\pftlRC]
  \path[use as bounding box] (\pftlLeftXmin,-3.5) rectangle (\pftlTotalX,1);
  
  \pftlDrawBar[]{\pftlYAone}{\pftlWAone}{1}
  \pftlDrawBar[]{\pftlYAtwo}{\pftlWAtwo}{2}
  \pftlDrawBar[]{\pftlYAthree}{\pftlWAthree}{3}
  \pftlDrawBar[]{\pftlYAfour}{\pftlWAfour}{4}

\pftlDrawBrokenCoin{\pftlYAone}{\pftlWAone}{2}

\pftlAppendBox{\pftlYAtwo}{\pftlWAtwo}{1}{1}
  \pftlAppendBox{\pftlYAtwo}{\pftlWAtwo+1}{1}{3}
  \pftlDrawBrokenCoin{\pftlYAtwo}{\pftlWAtwo+2}{4}

\pftlDrawBrokenCoin{\pftlYAthree}{\pftlWAthree}{6}

\pftlAppendBox{\pftlYAfour}{\pftlWAfour}{1}{5}
  \pftlDrawCoin{\pftlYAfour}{\pftlWAfour+1}

\end{tikzpicture}}
\newcommand{\pftlfinalcoinfig}{\begin{tikzpicture}[x=\pftlX,y=1cm,font=\small,rounded corners=\pftlRC]
  \path[use as bounding box] (\pftlLeftXmin,-3.5) rectangle (\pftlTotalX,1);

  \pftlDrawBar[]{\pftlYAone}{\pftlWAone}{1}
  \pftlDrawBar[]{\pftlYAtwo}{\pftlWAtwo}{2}
  \pftlDrawBar[]{\pftlYAthree}{\pftlWAthree}{3}
  \pftlDrawBar[]{\pftlYAfour}{\pftlWAfour}{4}

\pftlAppendGrayBox{\pftlYAone}{\pftlWAone}{1}{$\olv^{i}_{1}$}

\pftlAppendBox{\pftlYAtwo}{\pftlWAtwo}{1}{1}
  \pftlAppendBox{\pftlYAtwo}{\pftlWAtwo+1}{1}{3}
  \pftlAppendGrayBox{\pftlYAtwo}{\pftlWAtwo+2}{1}{$\olv^{i}_{2}$}

\pftlAppendGrayBox{\pftlYAthree}{\pftlWAthree}{1}{$\olv^{i}_{3}$}

\pftlAppendBox{\pftlYAfour}{\pftlWAfour}{1}{5}
  \pftlAppendBox{\pftlYAfour}{\pftlWAfour+1}{1}{7}
  \pftlAppendGrayBox{\pftlYAfour}{\pftlWAfour+2}{0.6}{$\olv^{i}_{4}$}
\end{tikzpicture}}
\newcommand{\inferencefig}{\begin{tikzpicture}
        \begin{axis}[
            width=2.5in,
            height=2.5in,
        axis lines = left,
        xlabel = {Value $\vali[j]$},
        ylabel = {Regret $\regreti[j]$},
        xmin=0, xmax=3,
        ymin=0, ymax=0.5,
        xtick={0},
        ytick={0},
        clip mode=individual
    ]

\addplot[name path=blue, blue, domain=0:3] {.2*x + .1};
    \addplot[name path=red, red, domain=0:3] {.1*x + .2};
    \addplot[name path=green, green, domain=0:3] {-.1*x + .3};
    \addplot[name path=top,domain=0:3] {4};

    \addplot[black, thick, domain=0:0.5] {-.1*x + .3};
    \addplot[black, thick, domain=0.5:1] {.1*x + .2};
    \addplot[black, thick, domain=1:3] {.2*x + .1};

    \addplot[gray!30] fill between[of=green and top, soft clip={domain=0:0.5}];
    \addplot[gray!30] fill between[of=red and top, soft clip={domain=0.5:1}];
    \addplot[gray!30] fill between[of=blue and top, soft clip={domain=1:3}];

    \addplot [mark=*,black] coordinates {(0.5,.25)};
    \draw (axis cs:0.5,0) -- (axis cs:0.5,-0.02) node[below] {$\evali[j]$};
    \draw (axis cs:0,0.25) -- (axis cs:-0.1,0.25) node[left] {$\eregreti[j]$};

    \end{axis}
\end{tikzpicture}}
\newcommand{\pricecompfig}{\begin{tikzpicture}
  \begin{axis}[
      width=2.5in,
      height=2.5in,
        axis lines = left,
        xlabel = {Seller 1},
        ylabel = {Seller 2},
        xmin=0, xmax=1,
        ymin=0, ymax=1,
        xtick={0.2,0.4,1},
        ytick={0.3,0.6,1},
        xticklabels={$c_1$,$p_1$,1},
        yticklabels={$c_2$,$p_2$,1},
    ]

    \addplot[fill=lightgray,draw=none] coordinates {(0.4,0) (0.4,0.6) (0.8,1) (1,1) (1,0)} \closedcycle;

\addplot[very thick, dashed, gray] coordinates {(0.4,0) (0.4,0.6)};
    \addplot[very thick, dashed, gray] coordinates {(0,0.6) (0.4,0.6)};
\addplot[very thick, dashed, gray] coordinates {(0.4,0.6) (0.8,1)};

\addplot[black] coordinates {(0.4,0) (0.4,1)};
    \addplot[black] coordinates {(0,0.6) (1,0.6)};

\node at (axis cs:0.7,0.35) {$A_1$};
    \node at (axis cs:0.2,0.8) {$A_2$};
    \node at (axis cs:0.2,0.35) {$\emptyset$};

      \addplot [mark=*,black] coordinates {(0.75,.75)} node[above right] {$\vals$};
    
      \draw [black,thick,arrows=<->] (axis cs: 0.75,.75) -- (axis cs: 0.75,0.6);
      \draw [black,thick,arrows=<->] (axis cs: 0.75,.75) -- (axis cs: 0.4,0.75);

    \end{axis}
\end{tikzpicture}}
\newcommand{\pricecompuniformfig}{\begin{tikzpicture}
    \begin{axis}[
        width=2.5in, height=2.5in,
        axis lines = left,
        xlabel = {Seller 1},
        ylabel = {Seller 2},
        xmin=0, xmax=1,
        ymin=0, ymax=1,
        xtick={0.1,0.5,0.6,1},
        ytick={0.2,0.55,0.66,1},
        xticklabels={$c_1$,$p_1^{eq}$,{$p_1^c$},1},
        yticklabels={$c_2$,$p_2^{eq}$,{$p_2^c$},1},
    ]

\addplot[black, thick] coordinates {(0.5,0) (0.5,0.55)};
    \addplot[black, thick] coordinates {(0,0.55) (0.5,0.55)};
    \addplot[black, thick] coordinates {(0.5,0.55) (0.95,1)};

\addplot[thick, dashed] coordinates {(0.6,0) (0.6,0.66)};
    \addplot[thick, dashed] coordinates {(0,0.66) (0.6,0.66)};
    \addplot[thick, dashed] coordinates {(0.6,0.66) (1,1.06)};

    \end{axis}
\end{tikzpicture}}
\newcommand{\vposteriorscore}[4][]{\pgfmathsetmacro{\psMraw}{#2}\pgfmathsetmacro{\psXraw}{#3}\pgfmathsetmacro{\psYraw}{(\psXraw < \psMraw) ? (1 - \psXraw/(2*\psMraw)) : (\psXraw/(2*\psMraw))}\edef\psX{\psXraw}\edef\psY{\psYraw}\addplot[dashed,#1] coordinates {(\psX,0) (\psX,\psY)};
  \addplot[only marks,mark=square*,mark size=2pt,#1] coordinates {(\psX,\psY)};
\if\relax\detokenize{#4}\relax
\else
    \addplot[dashed,#1] coordinates {(0,\psY) (\psX,\psY)};
    \addplot[draw=none,forget plot] coordinates {(0,\psY)} node[anchor=east,font=\small]{#4};
  \fi
}
\newcommand{\vmisreportscore}[4][]{\pgfmathsetmacro{\m}{#2}\pgfmathsetmacro{\r}{#3}\pgfmathsetmacro{\p}{#4}\pgfmathsetmacro{\useLeft}{(\r <= \m) ? 1 : 0}\pgfmathsetmacro{\yR}{\useLeft*(1 - \r/(2*\m)) + (1-\useLeft)*(\r/(2*\m))}\pgfmathsetmacro{\yP}{\useLeft*(1 - \p/(2*\m)) + (1-\useLeft)*(\p/(2*\m))}

\addplot[only marks,mark=o,mark size=3.5pt,thick,#1] coordinates {(\r,\yR)};
  \addplot[only marks,mark=*,mark size=2pt,#1]       coordinates {(\p,\yP)};

\addplot[dashed,#1] coordinates {(\r,0) (\r,\yR)};

\pgfmathsetmacro{\eps}{0.0005} \pgfmathtruncatemacro{\sameXY}{
    (abs(\p-\r) < \eps) && (abs(\yP-\yR) < \eps) ? 1 : 0
  }

  \ifnum\sameXY=0
\pgfmathsetmacro{\dx}{\p-\r}\pgfmathsetmacro{\dy}{\yP-\yR}\pgfmathsetmacro{\len}{sqrt(\dx*\dx+\dy*\dy)}\pgfmathsetmacro{\xm}{(\r+\p)/2}\pgfmathsetmacro{\ym}{(\yR+\yP)/2}\pgfmathsetmacro{\amp}{0.08*\len}\pgfmathsetmacro{\safeLen}{max(\len,0.001)}\pgfmathsetmacro{\ux}{-\dy/\safeLen}\pgfmathsetmacro{\uy}{ \dx/\safeLen}\pgfmathsetmacro{\xctrl}{\xm + \amp*\ux}\pgfmathsetmacro{\yctrl}{\ym + \amp*\uy}\edef\RR{\r}\edef\PP{\p}\edef\YR{\yR}\edef\YP{\yP}\edef\XC{\xctrl}\edef\YC{\yctrl}\addplot+[
      no marks, dashed, ->,shorten >=2pt, smooth,#1
    ] coordinates {(\RR,\YR) (\XC,\YC) (\PP,\YP)};
  \fi
}
\newcommand{\vmisreportpt}[4]{\pgfmathsetmacro{\vpM}{#2}\pgfmathsetmacro{\vpR}{#3}\pgfmathsetmacro{\vpQ}{#4}\pgfmathsetmacro{\vpUseLeft}{(\vpR <= \vpM) ? 1 : 0}\pgfmathsetmacro{\vpY}{\vpUseLeft*(1 - \vpQ/(2*\vpM)) + (1-\vpUseLeft)*(\vpQ/(2*\vpM))}\expandafter\xdef\csname #1@x\endcsname{\vpQ}\expandafter\xdef\csname #1@y\endcsname{\vpY}}
\newcommand{\vcombpts}[5][]{\pgfmathsetmacro{\xone}{\csname #2@x\endcsname}\pgfmathsetmacro{\yone}{\csname #2@y\endcsname}\pgfmathsetmacro{\xtwo}{\csname #3@x\endcsname}\pgfmathsetmacro{\ytwo}{\csname #3@y\endcsname}\pgfmathsetmacro{\pp}{#4}\addplot[very thick,dashed,#1] coordinates {(\xone,\yone) (\xtwo,\ytwo)};
\pgfmathsetmacro{\den}{\xtwo - \xone}\pgfmathsetmacro{\ypRaw}{
    (\den == 0) ? \yone : ( \yone + ((\pp - \xone)/\den) * (\ytwo - \yone) )
  }\edef\yp{\ypRaw}\addplot[dashed,#1] coordinates {(0,\yp) (1,\yp)};
  \addplot[draw=none,forget plot] coordinates {(0,\yp)}node[anchor=east,font=\small]{#5};
\addplot[only marks,mark=*,mark size=2pt,#1]        coordinates {(\xone,\yone)};
  \addplot[only marks,mark=*,mark size=2pt,#1]        coordinates {(\xtwo,\ytwo)};
  \addplot[only marks,mark=square*,mark size=2pt,#1]  coordinates {(\pp,\yp)};
}
\title{The Economics of No-Regret Learning Algorithms \thanks{Thanks
    to Michihiro Kandori and the Econometric Scociety for inviting
    this review article to the 2025 World Congress of the Econometric
    Society.  Thanks to the
    participants of the world congress for insightful discussion and questions.  Thanks to
    Nicole Immorlica for discussing the article and providing
    exceptional recommendations.  This review article is appearing in
    {\em Advances in Economics and Econometrics: Thirteenth World
      Congress, Volume 2}.}}
\author{Jason Hartline}
\date{}
\begin{document}

\maketitle

\begin{abstract}
A fundamental challenge for modern economics is to understand what
happens when actors in an economy are replaced with algorithms.  Like
rationality has enabled understanding of outcomes of classical
economic actors, no-regret can enable the understanding of outcomes of
algorithmic actors.  This review article covers the classical computer
science literature on no-regret algorithms to provide a foundation for
an overview of the latest economics research on no-regret algorithms,
focusing on the emerging topics of manipulation, statistical
inference, and algorithmic collusion.\\[2ex]
\noindent {\bf Keywords:} no-regret learning, multi-armed bandit, swap
regret, correlated equilibrium, econometrics, Stackelberg equilibrium, manipulation, algorithmic collusion.
\end{abstract}

\section{Introduction}
\label{s:intro}

Everything is an algorithm.  At a high level, an algorithm takes
inputs, processes them, and produces outputs.  For a decision maker,
inputs might be preferences, signals, and beliefs; while outputs might
be actions.  For an economy, inputs are actions of individuals and
realizations of state variables such as supply, while outputs are
outcomes according to the aggregation rule of the economy.  Firms and
individuals can be thought of as taking actions via algorithms.  For
example, the algorithm could be Bayes-Nash best response: Bayesian
update based on signals, calculate equilibrium strategies of others,
and best respond according to that equilibrium.  More so, individuals
and firms are more often now literally employing computerized
algorithms to make these decisions.  Perhaps in a repeated game, a
computerized algorithm will have inputs according to historical
actions and payoffs, and will repeatedly make the next decision.  For
example, sellers in an online marketplaces such as Airbnb, eBay,
Amazon, and Google Ads all employ algorithms.  Many of these
marketplaces even offer algorithms as products to their participants.
Of course the rules of the marketplaces themselves are also governed
by algorithms.

The field of algorithms within theoretical computer science aims to
make sense of the world through the perspective that everything is an
algorithm.  For economic scenarios such as the ones mentioned above,
this involves integrating central concepts in economics such as game
theory, mechanism design, and information design and asking how the
theory of algorithms gives new perspective and understanding.  A
classic contribution from the mid 2000s is the algorithmic complexity
of Nash equilibrium.\footnote{This work was recognized by the Game
Theory Society's 2008 Kalai Prize in Computer Science and Game Theory.
See further discussion in \Cref{s:repeated-games}.}  Under standard
assumptions of algorithmic complexity, in general two-player bimatrix
games, there is no algorithm that computes a Nash equilibrium in every
such game in a reasonable amount of time.  Thus, while Nash equilibria
are guaranteed to exist in all such games---a key property that has
led to their widespread adoption in economic theory---algorithms cannot
generally find them.  And if algorithms cannot find them, neither can
individuals, interacting or not.  After all, individuals and groups of
individuals are algorithms too (recall, they map inputs to outputs via
a process).

The premier venue for interdisciplinary research between economics
and computer science is the Association for Computing Machinery's
annual Symposium on Economics and Computation (EC).  Top papers from
computer science, operations research, and economics regularly appear
at EC before journal publication.  It is quite impossible in the
format of a single Econometric Society World Congress review article
to adequately summarize all the activities in this interface.  The
EC 2025 call for papers lists the following topics (in alphabetical
order): agent-based modeling, algorithmic fairness and data privacy,
{\bf auctions and pricing}, {\bf behavioral economics and bounded
  rationality}, blockchain and cryptocurrencies, contract design,
crowdsourcing and information elicitation, decision theory,
econometrics, {\bf economic aspects of learning algorithms}, economic
aspects of neural networks and large language models, {\bf equilibrium
  analysis and the price of anarchy}, {\bf equilibrium computation and
  complexity}, fair division, industrial organization, information
design, laboratory and field experiments, market design and matching
markets, market equilibria, mechanism design, {\bf online algorithms},
{\bf online platforms and applications}, social good and ethics,
social choice and voting theory, and social networks and social learning.
Instead, this review article will focus on one topic -- {\em the
  economics of no-regret learning algorithms} -- and be somewhat
comprehensive.  The areas above that include this topic are in bold.

Correlated equilibrium \citep{aum-74} is a tractable equilibrium notion and
corresponds to dynamics of simple ``no regret'' learning algorithms in
repeated games.  Correlated equilibrium is a canonical relaxation of
Nash equilibrium.  While Nash requires the distribution of actions
selected by players to be independent, i.e., the joint distribution of
play is an outer product, i.e., a rank-one matrix; correlated
equilibrium allows any joint distribution of play and requires that
each action of each player be in a conditional best response with
respect to this joint distribution.  In the same bimatrix games where
Nash equilibrium is intractable, correlated equilibria can be computed
easily by linear programs.  Moreover, \citet{FV-97} and \citet{HM-00}
show that there are simple learning algorithms that, via repeated
interaction, reach correlated equilibrium.  In fact, many standard so
called ``no regret'' algorithms from the extensive computer science
literature on online learning, i.e., learning from repeated
interactions---that are not explicitly designed for strategic
scenarios---result in the empirical distribution of play converging to
a coarse correlated equilibrium (a further relaxation of correlated
equilibrium, \citealp{MV-78}).  No regret---the condition that after
playing a game for a while, a player does not regret the actions they
took versus actions from benchmark strategies---is a relaxed notion of
rationality that is easy to achieve (via algorithms,
\citealp{han-57}).  It is in a sense a lower bound on rationality, as
rational best response also satisfies no regret.  Thus, it is well
accepted now, in the computer science literature, that the stage-game
equilibrium concepts of correlated and coarse correlated equilibrium
correspond to computationally and informationally tractable outcomes
of strategic interaction.

This article will review the basic definitions of correlated and
coarse correlated equilibrium, the classic algorithms that achieve
them in repeated interactions, and some new consequences for economics
that arise from these concepts.  Specifically, there are two canonical
formulations of regret, best-in-hindsight regret and swap regret.  For
these concepts, the review article highlights the following results:

\begin{itemize}

\item Swap regret algorithms result in correlated equilibrium
  \citep{FV-97,HM-00}, and best-in-hindsight regret algorithms result in
  coarse correlated equilibrium \citep[e.g.,][]{you-04}:

\item There is a reduction that converts any no best-in-hindsight
  regret algorithm into a no swap regret algorithm \citep{BM-07}.

\item These concepts differ in their robustness to manipulation in a
  Stackelberg sense.  First, any reasonable learning algorithm against
  a static Stackelberg leader will learn to best respond, giving the
  Stackelberg leader and follower (the algorithm) their Stackelberg
  equilibrium payoffs.  Second, no-best-in-hindsight-regret algorithms
  can be manipulated beyond the Stackelberg payoff (increasing the
  leader's payoff), while no-swap-regret algorithms guarantee no
  improvement on the leader's Stackelberg payoff
  \citep{BMSW-18,DSS-19}.

\item The relaxation of Nash equilibrium to no-regret or low-regret
  outcomes admits methods for structural inference that identify the
  rationalizable set of preferences and regrets that are consistent
  with empirical play \citep{NST-15}.

\item Algorithmic collusion \citep[e.g.,][]{CCDP-20} and how to
  regulate algorithms for collusion can be understood from no-regret
  guarantees and the non-manipulability of no-swap-regret algorithms
  \citep{HLZ-24,HWZ-25}.
\end{itemize}

This overview of no-regret learning algorithms presents the most
fundamental methods and results and connects them to the forefront of
current research at the interface between computer science and
economics.  The analysis provided herein focuses on developing the
main intuitions, with full proofs of the most fundamental results.

\section{Online Learning}
\label{s:online-learning}

Online learning is the problem of making sequential decisions in an
unknown and potentially adversarial environment, with the goal of
performing nearly as well as the best fixed action in hindsight.  The
classical online learning model is stateless, in that the actions only
affect the payoff in the current rounds and have no effect on payoffs
in future rounds.  In each of $n$ rounds, a decision maker chooses one
of $k$ possible actions, receives a payoff that depends on the chosen
action, and then observes the payoffs of all actions. The goal is for an algorithm's performance to approach that of the best action
in hindsight as the number of rounds grows.

The term {\em online} comes from the area of {\em online algorithms}
which refers to a family of algorithmic problems where the algorithm
must make decisions before all of the algorithm's inputs are revealed.

\subsection{Model}

Consider a learner repeatedly choosing one of $k$ actions $a \in A$ over $n$ rounds.
Let $\olv_\olact^i \in [0,h]$ denote the payoff of action $\olact$ in
round $i$. In each round $i$:
\begin{enumerate}
\item the learner chooses an action $\olact^i$,
\item observes all payoffs $\olvs^i = (\olv_1^i,\ldots,\olv_k^i)$, and
\item receives the payoff of the chosen action $\olv_{\olact^i}^i$.
\end{enumerate}
The total payoff of the algorithm is
\[
\ALG_n = \sum\nolimits_{i=1}^n \olv_{\olact^i}^i,
\]
while the total payoff of the best fixed action in hindsight is
\[
\OPT_n = \max_\olact \sum\nolimits_{i=1}^n \olv_\olact^i.
\]
The per-round \emph{regret} after $n$ rounds is defined as
\[
\Regret\nolimits_n
= \frac{1}{n}\!\left[\OPT_n - \ALG_n\right]
= \frac{1}{n}\!\left[\max\nolimits_{\olact}\sum\nolimits_{i=1}^n \olv_\olact^i
- \sum\nolimits_{i=1}^n \olv_{\olact^i}^i\right].
\]
An algorithm has \emph{vanishing regret} if on all input sequences;
$\olvs^1,\ldots,\olvs^n$; the regret satisfies $\lim_{n \to \infty}
\Regret_n = 0$.

This framework assumes a worst-case (adversarial) environment, which
will later be useful for reductions from more complex learning models.
The adversarial model contrasts with the i.i.d.\ stochastic model
where the payoffs in each round are drawn independently and
identically, but from an unknown distribution.  The adversarial
analysis of online learning is especially useful when attempting to
develop algorithms for more challenging learning problems.
Specifically, the next two sections will give algorithms for such
problems via a reduction to the online learning problem solved in this
section.  

\subsection{Follow the Leader, $\FTL$}

First, we see that the natural Follow the Leader ($\FTL$) algorithm
fails completely in the adversarial environment.

\begin{definition}[Follow the Leader, $\FTL$]
  Let $\cumval_\olact^i = \sum_{\roundalt=1}^i \olv_\olact^\roundalt$ be the cumulative payoff of action $\olact$ up to round $i$. In each round $i$, the algorithm chooses
$
\olact^i = \argmax_\olact \cumval_\olact^{i-1}.
$
\end{definition}

The follow the leader algorithm is, in fact, optimal in the
i.i.d.\ stochastic environment \citep{HJL-20}.  Our focus on the
adversarial model is motivated by applications that cannot be captured
by the i.i.d.\ model, and for these applications its failure is
significant.  The follow the leader algorithm is the one player
version of fictitious play.  It is also related to the stateless
Q-learning algorithm that has been recently studied in the economics
literature on algorithmic collusion.

The following example shows that the algorithm fails.  Moreover, as
established next, the pathology is not just with this algorithm but
with any deterministic algorithm.  Essentially, learning well against
an adversarial online input requires randomization.

\begin{example}\label{ex:ftl}
Consider the case of $k=2$ actions and payoffs, and assume without loss of generality that in the first round that $\FTL$ chooses action 1.

\begin{center}
\begin{tabular}{c|ccccccc}
Round   & 1 & 2 & 3 & 4 & 5 & 6 & \ldots\\
\hline
Action 1 & \bf 0 & 1 & \bf  0 & 1 & \bf  0 & 1 & \ldots\\
Action 2 & 1/2 & \bf 0 & 1 & \bf  0 & 1 & \bf  0 & \ldots
\end{tabular}
\end{center}
On this sequence $\OPT_n \approx n/2$ by choosing either action while
$\FTL_n = 0$ by choosing the actions in bold.  Thus, the per-round
regret of $\FTL$ is $\frac{1}{n}[\OPT_n - \FTL_n] \approx 1/2$, which
does not vanish with $n$.
\end{example}

\begin{theorem}\label{thm:deterministic-lb}
All deterministic online learning algorithms have constant per-round
regret in worst case.
\end{theorem}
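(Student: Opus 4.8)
The plan is to exploit the fact that a deterministic algorithm's choice in round $i$ is a fixed function of the payoff vectors $\olvs^1,\dots,\olvs^{i-1}$ observed so far, so an adversary that controls those vectors can predict the algorithm's move exactly. First I would build the hard instance inductively: having already fixed $\olvs^1,\dots,\olvs^{i-1}$, simulate the algorithm to determine the action $\olact^i$ it will play in round $i$, then set $\olv_{\olact^i}^i = 0$ and $\olv_\olact^i = h$ for every $\olact \neq \olact^i$. Because the algorithm is deterministic, this procedure yields a single well-defined sequence $\olvs^1,\dots,\olvs^n$; in particular the adversary need not be adaptive, since the sequence can be written down in advance once the algorithm is fixed. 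This is exactly the phenomenon already visible for $k=2$ in \Cref{ex:ftl}, now pushed to full generality.

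On this sequence the algorithm earns $0$ in every round, so $\ALG_n = 0$. Next I would lower bound $\OPT_n$ by averaging over actions: in each round exactly one of the $k$ actions receives payoff $0$ and the remaining $k-1$ receive payoff $h$, hence $\sum_{\olact}\sum_{i=1}^n \olv_\olact^i = (k-1)hn$. Therefore the best action in hindsight satisfies $\OPT_n = \max_\olact \sum_{i=1}^n \olv_\olact^i \ge \frac{1}{k}\sum_{\olact}\sum_{i=1}^n \olv_\olact^i = \frac{(k-1)hn}{k}$, and so $\Regret_n = \frac1n\big(\OPT_n - \ALG_n\big) \ge \frac{(k-1)h}{k} \ge \frac{h}{2}$, a positive constant independent of $n$, which proves the theorem.

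I do not anticipate a real obstacle: the only delicate point is justifying that the adversary may "read off" the algorithm's next move, which is precisely where determinism is used, and the averaging bound on $\OPT_n$ is elementary. If one wants robustness of the statement, it is worth remarking that the very same construction degrades for randomized algorithms — the adversary can then only zero out the action played in expectation — which motivates the necessity of randomization treated in the next subsection.
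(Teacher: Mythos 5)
Your proposal is correct and follows essentially the same argument as the paper: the adversary zeroes out the deterministic algorithm's chosen action and gives full payoff to the others, so $\ALG_n=0$, while an averaging/pigeonhole bound shows $\OPT_n$ is a constant fraction of the maximum possible, yielding constant per-round regret. Your added observations (that the sequence can be pre-computed by simulating the algorithm, and the slightly sharper $\tfrac{k-1}{k}$ bound) are fine refinements of the same approach.
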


\begin{proof}
  Since the algorithm is deterministic, in each round Nature can
  adversarially choose the payoff as $0$ for the action chosen by the
  algorithm and payoff $1$ to all other actions.  With this choice,
  clearly $\ALG_n = 0$ while $\OPT \geq n/2$ as there must be an
  action that is not chosen by the algorithm more than half the time.
  The per-round regret is at least $1/2$.
\end{proof}

This impossibility result implies that randomization is essential for
achieving vanishing regret in the adversarial model.

\subsection{Learning Algorithms}

A natural approach is to replace maximum with a soft-max:
multiplicatively increase (resp.\ decrease) the probability of
choosing actions with high (respectively low) past payoffs.  The
resulting algorithm is known as {\em exponential weights}, {\em
  multiplicative weights update}, and {\em Hedge} and originated in
\citet{LW-94} and \citet{FS-97}.

\begin{definition}[Exponential Weights, $\EW$]\label{def:EW}
Let $\epsilon>0$ denote the learning rate. In round $i$, define $\cumval_\olact^{i-1}=\sum_{\roundalt=1}^{i-1}\olv_\olact^\roundalt$ and choose action $\olact$ with probability
\[
\pr_\olact^i
= \frac{(1+\epsilon)^{\cumval_\olact^{i-1}/h}}{\sum_{j'} (1+\epsilon)^{\cumval_{j'}^{i-1}/h}}.
\]
\end{definition}

\begin{example}
  For $\epsilon = 1$ and $\olv_\olact^i \in \{0,1\}$, Exponential Weights
  doubles the weight of an action whenever it achieves payoff $1$.
  For the following payoffs (left), Exponential Weights calculates the
  weights (center) and action selection probabilities (right) as follows:

\begin{center}
\begin{tabular}{c|cccc}
 Round $i$ & 1 & 2 & 3 & 4 \\
\hline
$\olv_1^i$ & 1 & 1 & 0 & 0 \\
$\olv_2^i$ & 0 & 0 & 1 & 1
\end{tabular}
\quad
\begin{tabular}{c|cccc}
 Round $i$ & 1 & 2 & 3 & 4 \\
\hline
$2^{\cumval_1^{i-1}}$ & 1 & 2 & 4 & 4 \\
$2^{\cumval_2^{i-1}}$ & 1 & 1 & 1 & 2
\end{tabular}
\quad
\begin{tabular}{c|cccc}
 Round $i$ & 1 & 2 & 3 & 4 \\
\hline
$\pr_1^i$ & 1/2 & 2/3 & 4/5 & 2/3 \\
$\pr_2^i$ & 1/2 & 1/3 & 1/5 & 1/3 \\
\end{tabular}
\end{center}
The total expected payoff of Exponential Weights is
given by the dot product of all values and probabilities.  In this example:  $1/2 + 2/3 +
1/5 + 1/3 = 1.7$.
\end{example}

The following theorem and corollary show that Exponential Weights has vanishing best-in-hindsight regret.

\begin{theorem}\label{thm:ew-bound}
For payoffs in $[0,h]$, $k$ actions, and learning rate $\epsilon$:
\[
\expect{\EW} \geq (1-\epsilon)\OPT - \frac{h}{\epsilon}\ln k.
\]
\end{theorem}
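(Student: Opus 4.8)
The plan is to track the evolution of the total weight $\Phi^i = \sum_{j} (1+\epsilon)^{\cumval_j^{i-1}/h}$ (the standard potential-function argument). First I would establish a per-round multiplicative lower bound and upper bound on $\Phi^{i+1}/\Phi^i$. For the lower bound, note that $\Phi^{n+1} \ge (1+\epsilon)^{\cumval_{\olact^*}^n/h} = (1+\epsilon)^{\OPT_n/h}$, where $\olact^*$ is the best fixed action in hindsight, simply because the sum dominates any single term. For the upper bound, I would expand one step: $\Phi^{i+1} = \sum_j (1+\epsilon)^{\cumval_j^{i-1}/h}\,(1+\epsilon)^{\olv_j^i/h} = \Phi^i \sum_j \pr_j^i (1+\epsilon)^{\olv_j^i/h}$, using the definition of $\pr_j^i$ from \Cref{def:EW}. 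The key inequality is that since $\olv_j^i/h \in [0,1]$, convexity of $x \mapsto (1+\epsilon)^x$ gives $(1+\epsilon)^{\olv_j^i/h} \le 1 + \epsilon\,\olv_j^i/h$; hence $\Phi^{i+1} \le \Phi^i \bigl(1 + \tfrac{\epsilon}{h}\sum_j \pr_j^i \olv_j^i\bigr) = \Phi^i\bigl(1 + \tfrac{\epsilon}{h}\expect{\olv_{\olact^i}^i}\bigr)$, where the expectation is over the algorithm's randomization in round $i$.

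Next I would chain these bounds over all $n$ rounds. Starting from $\Phi^1 = k$, iterating the upper bound yields $\Phi^{n+1} \le k \prod_{i=1}^n \bigl(1 + \tfrac{\epsilon}{h}\expect{\olv_{\olact^i}^i}\bigr)$, and combining with the lower bound gives $(1+\epsilon)^{\OPT_n/h} \le k \prod_{i=1}^n \bigl(1 + \tfrac{\epsilon}{h}\expect{\olv_{\olact^i}^i}\bigr)$. Taking natural logarithms and using $\ln(1+x) \le x$ on the right-hand side, $\frac{\OPT_n}{h}\ln(1+\epsilon) \le \ln k + \frac{\epsilon}{h}\sum_{i=1}^n \expect{\olv_{\olact^i}^i} = \ln k + \frac{\epsilon}{h}\expect{\EW}$. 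Rearranging, $\expect{\EW} \ge \frac{\ln(1+\epsilon)}{\epsilon}\OPT_n - \frac{h}{\epsilon}\ln k$.

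Finally I would clean up the leading constant. The target bound has coefficient $(1-\epsilon)$ on $\OPT$, so it remains to check $\frac{\ln(1+\epsilon)}{\epsilon} \ge 1-\epsilon$ for $\epsilon > 0$ (equivalently $\ln(1+\epsilon) \ge \epsilon - \epsilon^2$); this follows from the standard series estimate $\ln(1+\epsilon) \ge \epsilon - \epsilon^2/2 \ge \epsilon - \epsilon^2$, and since $\OPT_n \ge 0$ the coefficient can be weakened from $\frac{\ln(1+\epsilon)}{\epsilon}$ down to $1-\epsilon$. I expect the only real subtlety — the "main obstacle" in an otherwise routine argument — to be getting the direction of every inequality consistent: the convexity bound on $(1+\epsilon)^x$ goes one way, the $\ln(1+x)\le x$ bound goes the other way, and the single-term lower bound on $\Phi^{n+1}$ must be combined with the product upper bound in the correct order so that $\OPT$ ends up with a positive coefficient. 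Everything else is bookkeeping, and the per-round corollary on regret then follows immediately by dividing by $n$ and optimizing $\epsilon$ in terms of $n$, $h$, and $k$.
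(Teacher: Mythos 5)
Your argument is correct: the potential $\Phi^i=\sum_j(1+\epsilon)^{\cumval_j^{i-1}/h}$ starts at $k$, is bounded below by the single term of the best fixed action, and grows per round by the factor $\sum_j\pr_j^i(1+\epsilon)^{\olv_j^i/h}\le 1+\tfrac{\epsilon}{h}\,\prs^i\cdot\olvs^i$ (the chord bound on the convex map $t\mapsto(1+\epsilon)^t$ for $t\in[0,1]$, which is why the normalization by $h$ matters); taking logarithms, using $\ln(1+x)\le x$, and then $\ln(1+\epsilon)\ge\epsilon-\epsilon^2/2\ge\epsilon-\epsilon^2$ together with $\OPT\ge 0$ gives exactly the claimed bound, in fact the slightly stronger coefficient $\tfrac{\ln(1+\epsilon)}{\epsilon}$ on $\OPT$. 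One cosmetic remark: since the payoff sequence is fixed in advance and the $\pr_j^i$ are deterministic functions of it, $\expect{\EW}$ is literally $\sum_i\prs^i\cdot\olvs^i$, so no conditioning subtlety arises in your per-round expectation. Note, however, that the paper itself gives no proof of \Cref{thm:ew-bound}: it explicitly defers, remarking that the proof "follows from algebraic manipulations and inequalities that, though simple, do not afford clear intuition," and instead builds intuition by analyzing $\FTPL$ through the Be-the-Leader benchmark (\Cref{thm:btl-ge-opt}) together with the stability and small-perturbation lemmas (\Cref{lem:stability,lem:small-pert}). So your write-up supplies the standard Littlestone--Warmuth/Freund--Schapire potential-function proof that the paper omits; it is more self-contained for Exponential Weights specifically, while the paper's coupling-based $\FTPL$ route buys a more transparent explanation of where the $(1-\epsilon)$ stability loss and the $\tfrac{h}{\epsilon}\ln k$ perturbation loss come from.
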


Small $\epsilon$ slows adaptation to data, making the algorithm more
stable to adversarial noise but slower to learn good actions; large
$\epsilon$ yields rapid adjustment but higher sensitivity to
adversarial noise.  This tradeoff can be better understood by
comparing the performance of Exponential Weights on the extreme input
of \Cref{ex:ftl} and the extreme input given by the constant input
$\olvs^i = (1,0)$ for all rounds $i$.

\begin{corollary}\label{cor:ew-regret}
For $n$ rounds with payoffs in $[0,h]$, choosing $\epsilon = \sqrt{(\ln k)/n}$ yields
\[
\expect{\Regret(\EW)} \leq 2h\sqrt{\frac{\ln k}{n}}.
\]
\end{corollary}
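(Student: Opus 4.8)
The plan is to derive the corollary directly from \Cref{thm:ew-bound} by a short rearrangement, the trivial upper bound $\OPT \le nh$, and a one-line optimization over the learning rate. First I would rewrite the guarantee of \Cref{thm:ew-bound} as
\[
\OPT - \expect{\EW} \;\le\; \epsilon\,\OPT + \frac{h}{\epsilon}\ln k .
\]
Since every payoff lies in $[0,h]$, the best fixed action in hindsight earns at most $h$ per round over $n$ rounds, so $\OPT = \OPT_n \le nh$. Substituting this into the term $\epsilon\,\OPT$ gives $\OPT_n - \expect{\EW_n} \le \epsilon n h + \tfrac{h}{\epsilon}\ln k$, and dividing through by $n$ yields, since $\OPT_n$ is fixed by the (oblivious) input sequence,
\[
\expect{\Regret\nolimits_n(\EW)} \;=\; \frac{1}{n}\bigl(\OPT_n - \expect{\EW_n}\bigr) \;\le\; \epsilon h + \frac{h\ln k}{\epsilon n}.
\]

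Next I would choose $\epsilon$ to balance the two terms on the right. Writing $g(\epsilon) = \epsilon h + \tfrac{h\ln k}{\epsilon n}$, either by AM--GM or by setting $g'(\epsilon)=0$, the minimizer is $\epsilon = \sqrt{(\ln k)/n}$, at which point the two terms are equal and $g(\epsilon) = 2h\sqrt{(\ln k)/n}$. Plugging this value of $\epsilon$ back in gives exactly the claimed bound $\expect{\Regret(\EW)} \le 2h\sqrt{(\ln k)/n}$, completing the argument. (One could equally just verify the bound at the prescribed $\epsilon$ without mentioning optimality, but the balancing viewpoint explains why this $\epsilon$ is the right choice and connects to the stability/adaptivity tradeoff discussed after \Cref{thm:ew-bound}.)

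There is essentially no obstacle here: once \Cref{thm:ew-bound} is available, the corollary is a two-step computation. The only point worth flagging is the degenerate regime $n < \ln k$, in which $\epsilon = \sqrt{(\ln k)/n} > 1$ and the claimed bound already exceeds $2h$; there the statement is vacuous, since per-round regret is always at most $h$ (payoffs are nonnegative and at most $h$, so $\OPT_n \le nh$ and $\ALG_n \ge 0$), which is below $2h\sqrt{(\ln k)/n}$. Thus the substantive content is for $n \ge \ln k$, exactly where the balancing step above applies.
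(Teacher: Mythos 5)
Your proposal is correct and follows essentially the same route as the paper: bound $\OPT \le nh$, plug into \Cref{thm:ew-bound}, divide by $n$, and balance the two error terms at $\epsilon = \sqrt{(\ln k)/n}$. The extra remark about the vacuous regime $n < \ln k$ is a sensible sanity check but not needed for the argument.
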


\begin{proof}
Since $\OPT \leq hn$, \cref{thm:ew-bound} gives $\expect{\EW} \geq \OPT{} - \epsilon hn - \frac{h}{\epsilon}\ln k$. Balancing the error terms via $\epsilon hn = \tfrac{h}{\epsilon}\ln k$ yields $\epsilon=\sqrt{(\ln k)/n}$ and the bound $2h\sqrt{(\ln k)/n}$.
\end{proof}

From \Cref{cor:ew-regret}, it is evident that a slower learning rate
is beneficial when the number of rounds $n$ is large, whereas a faster
one is beneficial when the number of actions $k$ is large.  The proof
of \Cref{thm:ew-bound} follows from algebraic manipulations and
inequalities that, though simple, do not afford clear intuition.  The
remainder of the section will present a different algorithm that
affords more intuition.

\subsection{An upper-bound: Be the Leader}

A standard approach in developing approximately optimal algorithms is
to (a) identify a benchmark that upper bounds the optimal performance
and (b) follow the intuition of the benchmark to develop an algorithm
with provably similar performance. In the following definition, Be the
Leader ($\BTL$) is a benchmark, not an implementable algorithm.

\begin{definition}[Be the Leader, $\BTL$]
The be-the-leader benchmark chooses, at each round $i$, the action
that would be optimal \emph{after} observing round $i$: $\olact^i =
\argmax_\olact \cumval_\olact^i$.
\end{definition}

The subsequent example and theorem show that $\BTL$ upper bounds the
best-in-hindsight performance $\OPT$.  See
\Cref{fig:btl-vs-opt}.

\begin{example}\label{ex:btl-vs-opt}
For two actions with payoffs
\[
\begin{array}{c|cccc}
 & 1 & 2 & 3 & 4\\
\hline
\text{Action 1} & 0.4 & 0.3 & 0.3 & 1.0\\
\text{Action 2} & 0.2 & 0.2 & 1.0 & 0.2
\end{array}
\]
the cumulative performances after each round are $(0.4,0.7,1.7,2.7)$
for $\BTL$ and $(0.4,0.7,1.4,2.0)$ for $\OPT$.  $\BTL$ takes action 1 in rounds
1, 2, and 4 and obtains the full payoff from those rounds, denoted as
$\btl^i$ for round $i$.  $\OPT$ switches to track the leader in these
rounds, but only obtains the marginal increase in the cumulative
performance, denoted as $\opt^i$ in round~$i$.  
\end{example}

\begin{figure}
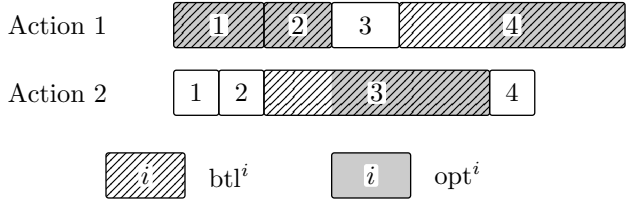

\centering
\btlvsoptfig
\caption{Illustration of \Cref{ex:btl-vs-opt} and \Cref{thm:btl-ge-opt}.  Payoff for each of the actions are depicted as the width of rectangles numbered by their round.  In each round $i$, the payoffs $\btl^i$ selected by $\BTL$ are striped, the payoffs $\opt^i$ accumulated by $\OPT$ are shaded gray, and we have $\btl^i \geq \opt^i$.}
\label{fig:btl-vs-opt}
\end{figure}

\begin{theorem}\label{thm:btl-ge-opt}
$\BTL \geq \OPT$.
\end{theorem}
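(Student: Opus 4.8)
The claim is $\BTL_n \geq \OPT_n$ for every $n$, where $\BTL_n = \sum_{i=1}^n \olv_{\olact^i}^i$ with $\olact^i = \argmax_\olact \cumval_\olact^i$, and $\OPT_n = \max_\olact \cumval_\olact^n$. I would prove it by induction on $n$, which is really just the telescoping identity suggested by \Cref{fig:btl-vs-opt}: the payoff $\btl^i = \olv_{\olact^i}^i$ that $\BTL$ collects in round $i$ dominates the marginal gain $\opt^i = \OPT_i - \OPT_{i-1}$ of the best-in-hindsight benchmark, and summing over $i$ collapses to $\OPT_n$.

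\textbf{Base case.} Take $n = 0$ (empty sums), so $\BTL_0 = 0 = \OPT_0$; alternatively one can start at $n=1$, where $\olact^1 = \argmax_\olact \olv_\olact^1$ gives $\BTL_1 = \max_\olact \olv_\olact^1 = \OPT_1$.

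\textbf{Inductive step.} Assume $\BTL_{n-1} \geq \OPT_{n-1} = \max_\olact \cumval_\olact^{n-1}$. Then
\[
\BTL_n = \BTL_{n-1} + \olv_{\olact^n}^n \;\geq\; \max_\olact \cumval_\olact^{n-1} + \olv_{\olact^n}^n \;\geq\; \cumval_{\olact^n}^{n-1} + \olv_{\olact^n}^n \;=\; \cumval_{\olact^n}^n.
\]
Here the second inequality just drops from the max to the particular action $\olact^n$, and the final equality is the definition of cumulative payoff. Since $\olact^n = \argmax_\olact \cumval_\olact^n$, we have $\cumval_{\olact^n}^n = \max_\olact \cumval_\olact^n = \OPT_n$, completing the induction. (The telescoped version of this same computation: $\olv_{\olact^i}^i = \cumval_{\olact^i}^i - \cumval_{\olact^i}^{i-1} \geq \OPT_i - \OPT_{i-1}$ because $\cumval_{\olact^i}^i = \OPT_i$ while $\cumval_{\olact^i}^{i-1} \leq \OPT_{i-1}$; summing over $i = 1, \ldots, n$ telescopes to $\OPT_n - \OPT_0 = \OPT_n$.)

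\textbf{Main obstacle.} There is no genuinely hard step here; the content is entirely in the single-round inequality $\btl^i \geq \opt^i$, i.e. that the action which is the leader \emph{through} round $i$ must have picked up at least the increment $\OPT_i - \OPT_{i-1}$ during that round. The only thing to be careful about is the bookkeeping at the boundary (empty-sum base case) and keeping straight that $\olact^i$ is the post-round-$i$ leader, so that $\cumval_{\olact^i}^i = \OPT_i$ exactly while $\cumval_{\olact^i}^{i-1}$ is only bounded by $\OPT_{i-1}$.
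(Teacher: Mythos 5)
Your proof is correct and takes essentially the same approach as the paper: the paper also proves the per-round inequality $\btl^i \geq \opt^i$ and sums (telescopes) over rounds. You spell out the one-line justification (that $\cumval_{\olact^i}^i = \OPT_i$ while $\cumval_{\olact^i}^{i-1} \leq \OPT_{i-1}$) a bit more explicitly, and phrase it as an induction, but this is the same argument.
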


\begin{proof}
Let $\OPT^i$ denote the best-in-hindsight value after $i$ rounds, and
$\opt^i = \OPT^i - \OPT^{i-1}$. I.e., $\opt^i$
equals the change in the leaders' cumulative payoffs before and after
round $i$.  Similarly define $\BTL^i$ and $\btl^i$ where $\btl^i$ is
the full payoff that the leader receives in round $i$. Thus, $\btl^i
\geq \opt^i$ for all $i$. Summing over all rounds gives $\BTL \geq
\OPT$.
\end{proof}

Intuitively, $\BTL$ is unimplementable in the online learning model
because it requires foresight, while $\FTL$ is implementable but can
perform poorly due to adversarial noise. To connect the two, we
introduce random perturbations to smooth out the noise.

\subsection{Perturbed Follow the Leader, $\FTPL$}

We now design an algorithm that approximates the unattainable $\BTL$ upper bound. The \emph{Perturbed Follow the Leader} ($\FTPL$) algorithm randomizes initial values to break ties and stabilize learning.  The analysis presented here is from \citet{KV-05}.

\begin{definition}[Perturbed Follow the Leader, $\FTPL$]\label{def:ftpl}
Fix a learning rate $\epsilon>0$. For each action $\olact$, hallucinate an initial payoff $\olv_\olact^0 = h \times Z_\olact$, where $Z_\olact$ is the number of consecutive tails obtained by flipping an $\epsilon$-biased coin (a geometric random variable). Let $\cumval_\olact^i = \sum_{\roundalt=1}^i \olv_\olact^\roundalt$. In each round $i$, choose
\[
\olact^i = \argmax_\olact \bigl(\olv_\olact^0 + \cumval_\olact^{i-1}\bigr).
\]
\end{definition}

\begin{example}
Revisiting \Cref{ex:ftl} with $\FTPL$:
\begin{center}
\begin{tabular}{c|c|ccccccc}
 & 0 & 1 & 2 & 3 & 4 & 5 & 6 & \ldots \\
\hline
\text{Action 1} & 2 & 0 & 1 & 0 & 1 & 0 & 1 &\ldots \\
\text{Action 2} & 3 & \bf 1/2 & \bf 0 & \bf 1 & \bf 0 & \bf 1 & \bf 0 &\ldots
\end{tabular}
\end{center}
The bold payoffs are selected by $\FTPL$ for the round 0 perturbation
shown.  On average over the perturbations, about half the time $\FTPL$
selects the bold payoffs and half the time it selects the other
payoffs; on average its payoff is about $n/2$.  Thus, $\FTPL \approx
n/2$.  Of course, $\OPT \approx n/2$ as well.
\end{example}

\begin{theorem}\label{thm:ftpl-main}
For payoffs in $[0,h]$,
\[
\expect{\FTPL} \geq (1-\epsilon)\OPT - \frac{h}{\epsilon}(1 + \ln k).
\]
\end{theorem}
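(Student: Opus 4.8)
The plan is to follow the Kalai--Vempala argument: exhibit $\FTPL$ as (essentially) the unimplementable benchmark $\BTL$ run on an \emph{augmented} payoff sequence, and then pay for two sources of slack --- the one-round lookahead $\BTL$ enjoys over $\FTPL$, and the hallucinated round-$0$ payoffs. Throughout I will take the adversary to be oblivious, so that the realized payoff vectors $\olvs^1,\dots,\olvs^n$ are fixed and only the perturbation $Z=(Z_1,\dots,Z_k)$ is random.

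First set up the augmented sequence. Prepend a ``round $0$'' with payoff vector $\olvs^0=(hZ_1,\dots,hZ_k)$, write $\hat\cumval_\olact^i=\sum_{\roundalt=0}^{i}\olv_\olact^\roundalt=\olv_\olact^0+\cumval_\olact^i$, and let $L_i=\argmax_\olact \hat\cumval_\olact^i$ (ties broken by a fixed rule, say least index). Applying \Cref{thm:btl-ge-opt} to the $(n{+}1)$-round sequence $\olvs^0,\dots,\olvs^n$ gives $\sum_{i=0}^{n}\olv_{L_i}^i \ge \max_\olact \hat\cumval_\olact^n \ge \max_\olact \cumval_\olact^n=\OPT$, the last step using $Z_\olact\ge 0$. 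The round-$0$ term is $\olv_{L_0}^0=h\max_\olact Z_\olact$, so $h\max_\olact Z_\olact+\sum_{i=1}^{n}\olv_{L_i}^i\ge\OPT$. The payoff of this construction is that $\FTPL$'s round-$i$ action is exactly $\olact^i=\argmax_\olact(\olv_\olact^0+\cumval_\olact^{i-1})=L_{i-1}$, so $\FTPL=\sum_{i=1}^n \olv_{L_{i-1}}^i$; it remains to compare $\olv_{L_{i-1}}^i$ with $\olv_{L_i}^i$ round by round, and to bound $\expect{\max_\olact Z_\olact}$.

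The heart of the proof --- and the step I expect to be the main obstacle --- is the one-round stability bound $(1-\epsilon)\,\expect{\olv_{L_i}^i}\le\expect{\olv_{L_{i-1}}^i}$. I would get this from the per-action claim $\prob{L_{i-1}=\olact}\ge(1-\epsilon)\,\prob{L_i=\olact}$ for every $\olact$; since $\olv_\olact^i\ge 0$, summing this against the weights $\olv_\olact^i$ yields the bound. To prove the claim, fix a realization $Z$ with $L_i(Z)=\olact$ and set $Z'=Z+e_\olact$ (increment only coordinate $\olact$). Because $0\le\olv_\olact^i\le h$, we have $h Z'_\olact+\cumval_\olact^{i-1}=h(Z_\olact{+}1)+\cumval_\olact^{i-1}\ge hZ_\olact+\cumval_\olact^{i-1}+\olv_\olact^i$, and by optimality of $\olact$ for $\hat\cumval^i$ this is $\ge hZ_{\olact'}+\cumval_{\olact'}^{i-1}+\olv_{\olact'}^i\ge hZ'_{\olact'}+\cumval_{\olact'}^{i-1}$ for every $\olact'\neq\olact$; hence $L_{i-1}(Z')=\olact$ (one checks the least-index tie-break survives). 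The map $Z\mapsto Z+e_\olact$ is injective on the fiber $\{Z:L_i(Z)=\olact\}$, and by geometric memorylessness $\prob{Z=z+e_\olact}=(1-\epsilon)\prob{Z=z}$; summing over the fiber gives $\prob{L_{i-1}=\olact}\ge\sum_{z:L_i(z)=\olact}\prob{Z=z+e_\olact}=(1-\epsilon)\prob{L_i=\olact}$.

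Finally, assemble. Taking expectations in $h\max_\olact Z_\olact+\sum_{i=1}^n\olv_{L_i}^i\ge\OPT$ and applying the stability bound together with $\sum_{i=1}^n\expect{\olv_{L_{i-1}}^i}=\expect{\FTPL}$ gives $h\expect{\max_\olact Z_\olact}+(1-\epsilon)^{-1}\expect{\FTPL}\ge\OPT$, i.e.\ $\expect{\FTPL}\ge(1-\epsilon)\OPT-h\expect{\max_\olact Z_\olact}$. It remains to bound the maximum of $k$ i.i.d.\ geometrics with $\prob{Z_\olact\ge j}=(1-\epsilon)^j$: write $\expect{\max_\olact Z_\olact}=\sum_{j\ge1}\prob{\max_\olact Z_\olact\ge j}\le\sum_{j\ge1}\min(1,\,k(1-\epsilon)^j)$, split the sum at $j^\star\approx(\ln k)/\epsilon$ where $k(1-\epsilon)^{j^\star}\le 1$, and use $-\ln(1-\epsilon)\ge\epsilon$ to geometric-sum the tail; this yields $\expect{\max_\olact Z_\olact}\le(1+\ln k)/\epsilon$ and hence the claimed $\expect{\FTPL}\ge(1-\epsilon)\OPT-\tfrac{h}{\epsilon}(1+\ln k)$. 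Two technical points deserve a sentence in passing: oblivious adversary (so $\cumval^{i-1}$ and $\olv^i$ are non-random in the stability step), and a single consistent tie-breaking rule for all the $L_i$ so the domination argument genuinely identifies an argmax.
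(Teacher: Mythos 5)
Your proof is correct, and at the level of the decomposition it is the same Kalai--Vempala argument the paper uses: you compare $\FTPL$ to the perturbed be-the-leader quantity $\sum_i \olv_{L_i}^i$ (the paper's $\BTPL$), bound $\BTPL \ge \OPT - h\max_\olact Z_\olact$ by applying \Cref{thm:btl-ge-opt} to the sequence augmented with the hallucinated round $0$ (exactly \Cref{lem:small-pert}), and control $\expect{\max_\olact Z_\olact}$ by $(1+\ln k)/\epsilon$ (\Cref{lem:geom-max}). Where you differ is in how the two lemmas are proved. For stability, the paper argues by a sequential coin-flipping coupling (reveal the geometric perturbations from the lowest-scoring action upward; with probability $1-\epsilon$ the final coin gives the leader a margin exceeding $h$, so $\FTPL$ and $\BTPL$ pick the same action), whereas you prove the per-action marginal inequality $\prob{L_{i-1}=\olact}\ge(1-\epsilon)\prob{L_i=\olact}$ via the injective shift $Z\mapsto Z+e_\olact$ on the fiber $\{L_i=\olact\}$ together with memorylessness, and then sum against the nonnegative weights $\olv_\olact^i$; this is a clean, fully rigorous rendering of the same idea (and your tie-breaking check does go through: any lower-indexed action tied with $\olact$ at round $i-1$ under $Z+e_\olact$ would already have been tied, and hence selected, at round $i$ under $Z$). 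For the geometric maximum, the paper's proof is a heuristic parallel-coin argument invoking induction and Jensen, while your tail-sum/union-bound computation with $-\ln(1-\epsilon)\ge\epsilon$ is more elementary and explicit (done carefully, e.g.\ by comparing the sum to the corresponding integral, it yields exactly $(1+\ln k)/\epsilon$). Your explicit flagging of the oblivious-adversary assumption matches the paper's remark that the stated bound is for input sequences fixed in advance. In short: same skeleton and same two key lemmas, with your lemma-level arguments trading the paper's pictorial couplings for sharper, more self-contained calculations.
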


\begin{corollary}\label{cor:ftpl-regret}
For $n$ rounds and payoffs in $[0,h]$, tuning $\epsilon = \sqrt{(1 + \ln k)/n}$ gives
\[
\expect{\Regret(\FTPL)} \leq 2h\sqrt{\frac{1 + \ln k}{n}}.
\]
\end{corollary}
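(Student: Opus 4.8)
The plan is to mirror the proof of \Cref{cor:ew-regret}, treating \Cref{thm:ftpl-main} as a black box. First I would note that in the adversarial model the benchmark $\OPT$ is a deterministic function of the (fixed) payoff sequence $\olvs^1,\ldots,\olvs^n$; the only randomness entering $\Regret(\FTPL)$ comes from the hallucinated perturbations $Z_\olact$. Hence $\expect{\Regret(\FTPL)} = \tfrac1n\bigl(\OPT - \expect{\FTPL}\bigr)$, and it suffices to lower bound $\expect{\FTPL}$ and then divide by $n$.

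Next, since every per-round payoff lies in $[0,h]$ we have $\OPT \le hn$, so $(1-\epsilon)\OPT \ge \OPT - \epsilon h n$. Substituting this into the bound of \Cref{thm:ftpl-main} gives
\[
\expect{\FTPL} \ge \OPT - \epsilon h n - \frac{h}{\epsilon}(1 + \ln k),
\]
and therefore
\[
\expect{\Regret(\FTPL)} \le \epsilon h + \frac{h(1 + \ln k)}{\epsilon n}.
\]

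Finally I would optimize the free parameter $\epsilon$. The right-hand side is a sum of one term increasing in $\epsilon$ and one term decreasing in $\epsilon$; balancing them via $\epsilon h = \frac{h(1 + \ln k)}{\epsilon n}$ gives $\epsilon = \sqrt{(1 + \ln k)/n}$, at which point each term equals $h\sqrt{(1 + \ln k)/n}$ and the bound becomes $2h\sqrt{(1 + \ln k)/n}$, as claimed. (An AM--GM step makes precise that this choice minimizes the bound, so no other tuning improves it by more than a constant.)

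I expect no real obstacle: all the work is in \Cref{thm:ftpl-main}, and this corollary is just the routine "balance the error terms" step, structurally identical to \Cref{cor:ew-regret} with $\ln k$ replaced by $1 + \ln k$. The only point worth stating carefully is that the expectation is taken over the algorithm's internal randomization while $\OPT$ is fixed given the input, so the expectation passes inside cleanly.
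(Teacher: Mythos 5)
Your proposal is correct and follows exactly the paper's route: the paper proves this corollary by citing the same argument as \Cref{cor:ew-regret}, namely bounding $\OPT \le hn$ in \Cref{thm:ftpl-main} and balancing $\epsilon h n$ against $\tfrac{h}{\epsilon}(1+\ln k)$ to get $\epsilon=\sqrt{(1+\ln k)/n}$ and the stated bound. Your added remark that the expectation is only over the algorithm's internal randomization while $\OPT$ is fixed is a fine clarification but not a departure from the paper.
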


\begin{proof}
The argument is the same as for \cref{cor:ew-regret}.
\end{proof}

Many distributions can be used to perturb the scores in
$\FTPL$;\footnote{In fact the exponential weights algorithm is
equivalent to a variant of $\FTPL$ with Gumbel perturbations.} the geometric
distribution is especially convenient because it is memoryless.
Specifically, for any integer $\ell$, a geometric random variable $X$
satisfies $\expect{X \given X \geq \ell} = \ell + \expect{X}$.

\begin{lemma}[Maximum of geometric variables,]\label{lem:geom-max}
Let $X_1,\ldots,X_k$ be independent geometric random variables with parameter $\epsilon$. Then
$\expect{\max_i X_i} \leq \frac{1 + \ln k}{\epsilon}$.
\end{lemma}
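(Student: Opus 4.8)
The plan is to combine the tail-sum identity for nonnegative integer-valued random variables with a union bound, and then balance two contributions at a well-chosen threshold. Write $M = \max_i X_i$. Since $M$ takes values in $\{0,1,2,\dots\}$, the tail-sum identity gives $\expect{M} = \sum_{\ell \ge 1}\prob{M \ge \ell}$. The event $\{M \ge \ell\}$ equals $\bigcup_i \{X_i \ge \ell\}$, and for a geometric variable with parameter $\epsilon$ — the number of tails before the first heads of an $\epsilon$-biased coin — we have $\prob{X_i \ge \ell} = (1-\epsilon)^\ell$. Combining the trivial bound with a union bound therefore yields $\prob{M \ge \ell} \le \min\{1,\ k(1-\epsilon)^\ell\}$.

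Next I would split the sum at the point where the union bound becomes the binding one. Let $\ell^\star$ be the smallest nonnegative integer with $k(1-\epsilon)^{\ell^\star} \le 1$. For $1 \le \ell \le \ell^\star - 1$ I bound $\prob{M \ge \ell} \le 1$, which contributes at most $\ell^\star - 1$ in total; for $\ell \ge \ell^\star$ I bound $\prob{M \ge \ell} \le k(1-\epsilon)^\ell$ and sum the geometric tail, $\sum_{\ell \ge \ell^\star} k(1-\epsilon)^\ell = k(1-\epsilon)^{\ell^\star}/\epsilon \le 1/\epsilon$, using the defining property of $\ell^\star$. It then remains to check $\ell^\star - 1 \le (\ln k)/\epsilon$: when $\ell^\star \ge 1$, minimality forces $k(1-\epsilon)^{\ell^\star - 1} > 1$, i.e. $(\ell^\star - 1)\bigl(-\ln(1-\epsilon)\bigr) < \ln k$, and since $-\ln(1-\epsilon) \ge \epsilon$ (equivalently $1-\epsilon \le e^{-\epsilon}$) this gives $\ell^\star - 1 < (\ln k)/\epsilon$; when $\ell^\star = 0$ the first sum is empty. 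Adding the two pieces gives $\expect{M} \le (\ln k)/\epsilon + 1/\epsilon = (1 + \ln k)/\epsilon$.

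The argument is essentially routine, so there is no serious obstacle; the one place that requires care is the off-by-one in the choice of threshold. Defining $\ell^\star$ as the smallest integer with $k(1-\epsilon)^{\ell^\star}\le 1$ — rather than, say, $\lceil(\ln k)/\epsilon\rceil$ — is exactly what makes the additive ``$1$'' in the bound come out without slack, and one should also note the degenerate case $k=1$ separately (then $\ell^\star = 0$, the first sum vanishes, and the claim reduces to $\expect{X_1} = (1-\epsilon)/\epsilon \le 1/\epsilon$). Everything else is just the tail-sum formula, a union bound, and summing a geometric series.
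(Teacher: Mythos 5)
Your proof is correct, but it takes a genuinely different route from the paper's. You use the standard ``tail-sum plus union bound'' recipe: write $\expect{\max_i X_i}=\sum_{\ell\ge 1}\prob{\max_i X_i\ge\ell}$, bound each term by $\min\{1,\,k(1-\epsilon)^\ell\}$, split the sum at the threshold $\ell^\star$ where the union bound becomes active, and sum the geometric tail. The paper instead argues probabilistically via the memorylessness it highlights just before the lemma: imagine all $k$ coins flipped in parallel rounds; decompose $\max_i X_i$ into the (random) time $T_1$ until at most one coin is still active, plus the residual lifetime $T_2$ of that last coin. Memorylessness gives $\expect{T_2}=1/\epsilon$ exactly, and the paper asserts $\expect{T_1}\le\log_{1/(1-\epsilon)}k$ ``by induction and Jensen's inequality,'' then uses $-\ln(1-\epsilon)\ge\epsilon$ just as you do. The two approaches buy different things: yours is fully self-contained and rigorous with essentially no hidden steps, and is the standard textbook argument; the paper's is a stopping-time decomposition that gives cleaner process-level intuition (the $1/\epsilon$ is exact by memorylessness, and the ``race to elimination'' picture matches how $\FTPL$'s perturbations actually behave in the coupling of \Cref{lem:stability}), but it leaves the bound on $\expect{T_1}$ as a sketch. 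Your handling of the off-by-one at $\ell^\star$ and the degenerate $k=1$, $\ell^\star=0$ case is careful and correct.
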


\begin{proof}
  If we flip coins of each of the $k$ geometric
  distributions in parallel rounds, then in each round we expect an
  $\epsilon$ fraction of the $k$ coins to succeed ({\em Heads}; and $1-\epsilon$
  fraction remain). The expected number of rounds until there are at most one
  left is at most $\log_{1/(1-\epsilon)} k$---by induction and
  Jensens' inequality---and the final one lasts $1/\epsilon$ rounds in
  expectation giving a total of $\frac{1 + \ln k}{\epsilon}$.
\end{proof}

\subsection{Analysis of $\FTPL$}

Our strategy for analysis of $\FTPL$ will be to introduce Perturbed Be
The Leader ($\BTPL$).  $\BTPL$ is the same as $\FTPL$, except that
round $i$ payoffs are included in the cumulative scores, i.e.,
$\olact^i = \argmax_\olact (\olv_\olact^0 + \cumval_\olact^{i})$.  We
will (a) relate $\FTPL$ and $\BTPL$ via a stability lemma and (b)
relate $\BTPL$ to $\OPT$ via a small perturbation lemma and
\Cref{thm:btl-ge-opt}.

\begin{lemma}[Stability]\label{lem:stability}
$\expect{\FTPL} \geq (1-\epsilon)\expect{\BTPL}.$
\end{lemma}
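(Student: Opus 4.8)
The plan is to reduce the inequality to a per-round and then a per-action comparison, and to close the remaining gap with a single use of the memorylessness of the geometric perturbation. Fix the worst-case payoff sequence $\olvs^1,\ldots,\olvs^n$, so that the only randomness is the perturbation vector $(Z_1,\ldots,Z_k)$ drawn once at the start; write $\olact^i=\argmax_\olact(\olv^0_\olact+\cumval^{i-1}_\olact)$ for the action $\FTPL$ plays in round $i$ and $\tilde{\olact}^i=\argmax_\olact(\olv^0_\olact+\cumval^{i}_\olact)$ for the action $\BTPL$ plays in round $i$ (same perturbations, same fixed tie-breaking rule), so that $\FTPL=\sum_i\olv^i_{\olact^i}$ and $\BTPL=\sum_i\olv^i_{\tilde{\olact}^i}$. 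By linearity of expectation it suffices to prove the per-round bound $\expect{\olv^i_{\olact^i}}\ge(1-\epsilon)\expect{\olv^i_{\tilde{\olact}^i}}$ for each $i$. Since $\expect{\olv^i_{\olact^i}}=\sum_b\olv^i_b\,\prob{\olact^i=b}$ and likewise for $\tilde{\olact}^i$, and since every $\olv^i_b\ge0$, this follows in turn from the per-action bound
\[
\prob{\olact^i=b}\ \ge\ (1-\epsilon)\,\prob{\tilde{\olact}^i=b}\qquad\text{for every action }b.
\]

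To establish the per-action bound, fix $b$ and condition on the perturbations $(Z_\olact)_{\olact\ne b}$ of the other actions. Increasing $Z_b$ only improves $b$'s perturbed cumulative score relative to every competitor, under both $\FTPL$ and $\BTPL$, so conditionally $\{\olact^i=b\}$ is increasing in $Z_b$ and hence equals $\{Z_b\ge\tau'\}$ for some integer threshold $\tau'$, and similarly $\{\tilde{\olact}^i=b\}=\{Z_b\ge\tau''\}$ (both finite, since a large $Z_b$ makes $b$ the unique maximizer). The crucial claim is $\tau'\le\tau''+1$. To see it, note that the round-$i$ scores used by $\BTPL$ differ from those used by $\FTPL$ only by adding $\olv^i_\olact\in[0,h]$ to action $\olact$. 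Writing $c'_b=\cumval^{i-1}_b$, $M'=\max_{\olact\ne b}(hZ_\olact+c'_\olact)$ and $M''=\max_{\olact\ne b}(hZ_\olact+c'_\olact+\olv^i_\olact)$, the win condition for $b$ is $hZ_b\ge M'-c'_b$ under $\FTPL$ and $hZ_b\ge M''-c'_b-\olv^i_b$ under $\BTPL$; since $M''\ge M'$ (competitors' scores only go up) and $\olv^i_b\le h$, we get $M''-c'_b-\olv^i_b\ge(M'-c'_b)-h$, so dividing by $h$ and rounding up (and clipping at $0$) gives $\tau''\ge\tau'-1$. The strict-versus-weak inequalities dictated by the tie-breaking convention shift $\tau'$ and $\tau''$ by at most one unit, which is absorbed into this slack.

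It remains to cash in memorylessness. Because $Z_b$ counts the tails before the first head of an $\epsilon$-biased coin, $\prob{Z_b\ge m}=(1-\epsilon)^m$, equivalently $\prob{Z_b\ge m+1}=(1-\epsilon)\prob{Z_b\ge m}$. Combined with $\tau'\le\tau''+1$ this gives, conditionally on $(Z_\olact)_{\olact\ne b}$,
\[
\prob{Z_b\ge\tau'}=(1-\epsilon)^{\tau'}\ \ge\ (1-\epsilon)^{\tau''+1}=(1-\epsilon)\,\prob{Z_b\ge\tau''}.
\]
Taking expectations over $(Z_\olact)_{\olact\ne b}$ yields $\prob{\olact^i=b}\ge(1-\epsilon)\prob{\tilde{\olact}^i=b}$; multiplying by $\olv^i_b\ge0$, summing over $b$, and then summing over rounds $i$ delivers $\expect{\FTPL}\ge(1-\epsilon)\expect{\BTPL}$.

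The one step I expect to require genuine care is the inequality $\tau'\le\tau''+1$: one must check it uniformly across configurations — $b$ already so far ahead that $\tau'=0$; $b$'s own round-$i$ payoff near $h$ (the case that can pull $\tau''$ below $\tau'$); a competitor's round-$i$ payoff near $h$ (which only pushes $\tau''$ up, so is harmless) — and verify that the tie-breaking rule does not cost a second unit. This is elementary casework, but it is exactly where the hypotheses "payoffs in $[0,h]$" and "perturbation $\olv^0_\olact=h\cdot Z_\olact$" are used in an essential way: the bound would break if a single round could move the perturbed leader by more than one step of the geometric.
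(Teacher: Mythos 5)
Your proof is correct and reaches the same per-round, per-action inequality $\prob{\olact^i=b}\ge(1-\epsilon)\prob{\tilde\olact^i=b}$ that underlies the lemma, but it gets there by a genuinely different route than the paper. The paper's proof is a coupling/revelation argument: it builds the geometric perturbations one coin at a time (always flipping the coin of the currently lowest-scoring action), observes that the last surviving action $\olact^*$ is what $\FTPL$ plays, and notes that one additional \emph{Tails} (probability $1-\epsilon$) opens a gap of at least $h$ that the round-$i$ payoffs cannot close, so $\BTPL$ agrees. Your proof instead conditions on $(Z_\olact)_{\olact\ne b}$, expresses the two ``$b$ wins'' events as upward-closed threshold events $\{Z_b\ge\tau'\}$ and $\{Z_b\ge\tau''\}$, proves $\tau'\le\tau''+1$ from $M''\ge M'$ and $\olv^i_b\le h$, and closes with the geometric tail identity $\prob{Z_b\ge m}=(1-\epsilon)^m$. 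Both approaches hinge on the same two facts — a single round can move the perturbed leader by at most one geometric step (this is exactly why the perturbation has step size $h$ matching the payoff range), and the geometric is memoryless — but yours is algebraic and self-contained where the paper's is operational and visual (with an accompanying figure). Your version also makes the final step airtight: the paper asserts the \emph{conditional} claim ``conditioned on the action of $\BTPL$, $\FTPL$ agrees with probability $\ge 1-\epsilon$'' without fully spelling out why, whereas your threshold computation proves exactly that. One small caution on your side: the sentence claiming the tie-breaking convention ``shifts $\tau'$ and $\tau''$ by at most one unit, which is absorbed into this slack'' is a bit optimistic — the slack of one geometric step is already spent on the $M''-\olv^i_b\ge M'-h$ estimate, so in the degenerate case $M'=M''$, $\olv^i_b=h$, and an exact tie you would need the strict/weak convention to be the \emph{same} under $\FTPL$ and $\BTPL$ (which holds as long as you commit to a single fixed tie-breaking order, as the paper's ``break ties consistently'' instructs); you should either invoke that consistency explicitly or note that exact ties are a measure-zero concern.
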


\begin{lemma}[Small perturbation]\label{lem:small-pert}
$\expect{\BTPL} \geq \OPT{} - \tfrac{h}{\epsilon} (1 + \ln k).$
\end{lemma}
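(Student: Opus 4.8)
The plan is to recognize $\BTPL$ as nothing more than the $\BTL$ benchmark run on an \emph{augmented} payoff sequence that carries an extra ``round $0$'' holding the hallucinated payoffs $\olv_\olact^0 = h Z_\olact$, and then to invoke \Cref{thm:btl-ge-opt} on that augmented sequence. Concretely, I would first set up the augmented instance with rounds $0,1,\ldots,n$ and payoff vectors $\olvs^0,\olvs^1,\ldots,\olvs^n$, where $\olvs^0 = (hZ_1,\ldots,hZ_k)$. For $i \geq 1$, the cumulative payoff of action $\olact$ on this instance through round $i$ is exactly $\olv_\olact^0 + \cumval_\olact^i$, so (with a consistent tie-breaking rule) the action $\BTL$ selects at round $i$ of the augmented instance coincides with the action $\BTPL$ selects at round $i$ of the real instance. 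Hence the payoff $\BTL$ earns on the augmented instance decomposes as its round-$0$ payoff plus its payoff over rounds $1,\ldots,n$; since at round $0$ the leader is the action maximizing $\olv_\olact^0$, this gives $\BTL_{\mathrm{aug}} = h\max_\olact Z_\olact + \BTPL$ (as random quantities of the perturbation).

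Next I would apply \Cref{thm:btl-ge-opt} to the augmented instance, whose proof makes no use of boundedness and so applies verbatim, obtaining $\BTL_{\mathrm{aug}} \geq \OPT_{\mathrm{aug}}$ where $\OPT_{\mathrm{aug}} = \max_\olact(\olv_\olact^0 + \cumval_\olact^n)$. Because the hallucinated payoffs are nonnegative, $\OPT_{\mathrm{aug}} \geq \max_\olact \cumval_\olact^n = \OPT$. Combining the two displays yields the pointwise (in the perturbation) bound $h\max_\olact Z_\olact + \BTPL \geq \OPT$, i.e., $\BTPL \geq \OPT - h\max_\olact Z_\olact$.

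Finally, I would take expectations over the independent geometric perturbations $Z_1,\ldots,Z_k$ and invoke \Cref{lem:geom-max}, giving $\expect{\BTPL} \geq \OPT - h\,\expect{\max_\olact Z_\olact} \geq \OPT - \tfrac{h}{\epsilon}(1+\ln k)$, as desired.

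The only genuine obstacle is the round-$0$ bookkeeping: one must verify carefully that $\BTPL$'s round-$i$ choice matches $\BTL$'s round-$i$ choice on the augmented instance (immediate from the definitions, but it requires using the same tie-breaking), and that the \emph{only} payoff $\BTL$ collects on the augmented instance beyond what $\BTPL$ collects is the single round-$0$ term $h\max_\olact Z_\olact$. Everything else is a direct application of the two cited results, with no new inequalities required.
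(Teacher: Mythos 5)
Your proof is correct and takes essentially the same approach as the paper: view the hallucinated round-$0$ payoffs as real, apply \Cref{thm:btl-ge-opt} on the augmented sequence, note the round-$0$ contribution of $\BTL$ is exactly $\max_\olact \olv_\olact^0$, observe $\OPT$ on the augmented instance dominates $\OPT$ on the original, rearrange, and take expectations via \Cref{lem:geom-max}. Your write-up just makes the round-$0$ bookkeeping and the decomposition $\BTL_{\mathrm{aug}} = h\max_\olact Z_\olact + \BTPL$ more explicit than the paper does.
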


\begin{figure}
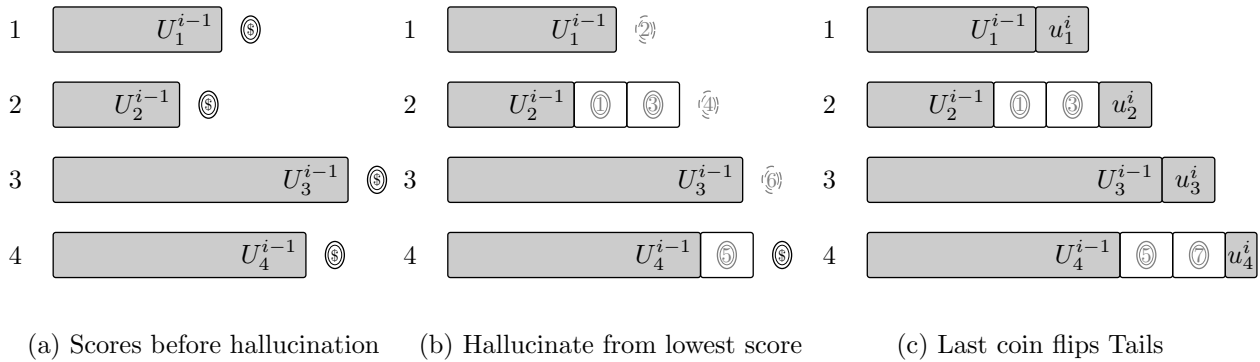

  \begin{subfigure}{0.31\textwidth}
    \pftlreadycoinsfig
    \caption{Scores before hallucination}
    \label{fig:pftl-ready}
  \end{subfigure}
  \begin{subfigure}{.33\textwidth}
    \pftlflippedcoinsfig
    \caption{Hallucinate from lowest score}
    \label{fig:pftl-flipped}
  \end{subfigure}
\begin{subfigure}{.33\textwidth}
\pftlfinalcoinfig
\caption{Last coin flips Tails}
\label{fig:pftl-final}
\end{subfigure}
\caption{The coupling argument of \Cref{lem:stability} depicted.  The
  coin flips depicted in the center figure are {\em Tails; Heads; Tails;
  Heads; Tails; Heads} (as numbered~1--6 in the figure).  Coins that
  flip {\em Heads} are dashed gray.  The coin flip in the right figure is
  {\em Tails} (numbered~7).  Depicted on the right, for these coin flips in
  round $i=4$, $\FTPL$ and $\BTPL$ obtain the same payoff.}
\end{figure}

\begin{proof}[Proof of \cref{lem:stability}]
Couple $\FTPL$ and $\BTPL$ under the same random perturbations and
break ties consistently. For each round $i$, begin with the
unperturbed scores and place a coin on each action (See
\Cref{fig:pftl-ready}), then add perturbations as follows: pick the
action with the lowest total score; flip its coin; if {\em Tails}
(probability $1-\epsilon$), add $h$ to its score, otherwise discard it
(on {\em Heads}); repeat until only one action $\olact^*$ remains (See
\Cref{fig:pftl-flipped}).  Flip $\olact^*$'s coin.  If it lands {\em
  Tails} (with probability $1-\epsilon$), then the best action’s score
exceeds the second-best by more than $h$, so both $\FTPL$ and $\BTPL$
choose the same action.  Hence, conditioned on the action of $\BTPL$,
$\FTPL$ agrees with $\BTPL$ with probability at least $1-\epsilon$;
and otherwise $\FTPL \geq 0$.  The lemma follows by linearity of
expectation.
\end{proof}

\begin{proof}[Proof of \Cref{lem:small-pert}]
  Let $\MAXPTRB = \max_\olact \olv_\olact^0$ denote the maximum hallucinated
  perturbation.  The same argument as for \cref{thm:btl-ge-opt} shows
  that $\BTPL + \MAXPTRB \geq \OPT$: consider the round 0 payoffs
  as real payoffs and then note that $\OPT$ without the hallucinations
  is less than $\OPT$ with the hallucinations.  Rearrange as $\BTPL
  \geq \OPT - \MAXPTRB$, take expectations, and apply
  \Cref{lem:geom-max} to get the desired bound.
\end{proof}

Combining \Cref{lem:stability,lem:small-pert} yields
\Cref{thm:ftpl-main}, establishing that $\FTPL$ achieves the same
asymptotic regret rate as Exponential Weights.  Stability
(\Cref{lem:stability}) gives robustness to adversarial noise and small
perturbation (\Cref{lem:small-pert}) means the algorithm learns to
track the best action quickly.

Note: the formal definition of Perturbed Follow the Leader (\Cref{def:ftpl})
randomizes only before making any decisions.  For this definition, the
vanishing-regret analysis above holds on any input that is fixed in
advance.  When employing this algorithm in games as discussed in later
sections, we will prefer a bound that holds against adaptive input
sequences.  It is easy to see from the analysis above that, on such
inputs the same bound can be proved for a version of the algorithm
that re-randomizes the hallucination in each round.

\section{(Online) Multi-Armed Bandit Learning}
\label{s:mab}

Multi-armed bandit learning extends online learning to the setting of \emph{partial feedback}. Whereas in full-feedback online learning the learner observes the payoffs of all actions after each round, in the bandit setting the learner only observes the payoff of the action played. This informational restriction creates a fundamental tradeoff between \emph{exploration} (learning about uncertain actions) and \emph{exploitation} (playing the best-known action).

\subsection{Model}

Consider $k$ actions $A$ and $n$ rounds. Let $\olv_\olact^i \in [0,h]$ denote the payoff of action $\olact$ in round $i$. In each round, the learner:
\begin{enumerate}
  \item chooses an action $\olact^i$,
  \item observes the realized payoff $\olv_{\olact^i}^i$, and
  \item receives that payoff.
\end{enumerate}
The total payoff is
$
\ALG_n = \sum_{i=1}^n \olv_{\olact^i}^i.
$

Identically to the online learning problem, the objective remains to
achieve vanishing regret against best fixed action in hindsight,
$\OPT_n = \max_\olact \sum_{i=1}^n \olv_\olact^i.$ The difference between the
online learning model of \Cref{s:online-learning} and the multi-armed
bandit model is that in the latter only $\olv_{\olact^i}^i$ is observed in
each round; the learner does not see the other $\olv_\olact^i$ for $\olact \neq
\olact^i$. Hence, if an action $\olact$ is not played, its quality cannot be
directly assessed.

It will be useful to have explicit notation for the probability
distribution over actions of the algorithm.  Let $\prs^i =
(\pr_1^i,\ldots,\pr_k^i)$ denote the vector of probabilities used to
sample an action in round $i$. The learner draws $\olact^i \sim \prs^i$.
On payoffs $\olvs^i = (\olv_1^i,\ldots,\olv_k^i)$ the learner obtains
expected payoff $\expect{\olv_{\olact^i}^i} = \sum_\olact \olv_\olact^i \pr_\olact^i =
\olvs^i \cdot \prs^i$ where $\olvs^i \cdot \prs^i$ is the vector dot
product.

\subsection{Reduction to Full-Feedback Learning}

We will \emph{reduce} the partial-feedback bandit setting to the
full-feedback online learning setting.  In this reduction the
multi-armed bandit algorithm (henceforth $\MAB$) will interact with an
online learning algorithm (henceforth $\OLA$).  The $\OLA$ will
recommend actions.  The multi-armed bandit algorithm will choose an
action that may or may not be what the $\OLA$ recommended.  The
multi-armed bandit algorithm will learn the payoff of the chosen
action.  It will then have to construct a full vector of payoffs, one
for each action, to feed into the $\OLA$.  This structure is depicted
in \Cref{fig:mab-reduction}.  The key difficulty of what to input into
the $\OLA$ for actions that are not played will be solved by producing
unbiased estimates of the payoffs of all actions.

\begin{figure}
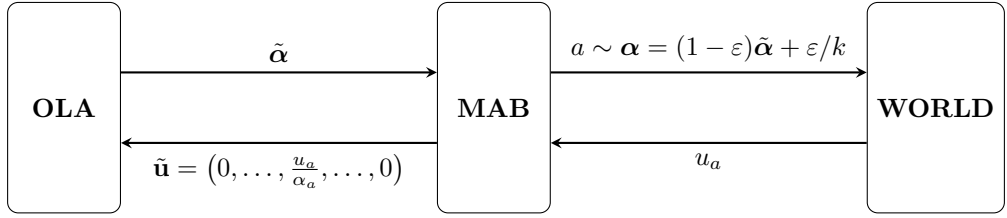

\centering
\mabreductionfig
\caption{Reduction from multi-armed bandit learning to full-feedback online learning.}
\label{fig:mab-reduction}
\end{figure}

\subsubsection{Propensity Scoring}

We must specify what payoff vector to report to the underlying
full-feedback algorithm. Let $\eolvs^i$ denote the reported
(estimated) payoff vector. The goal is to ensure that $\eolvs^i$ is an
unbiased estimator of the true payoff vector $\olvs^i$.  This
technique, known as \emph{propensity scoring}, scales the observed
payoff inversely by its sampling probability and records a score of zero
for unobserved payoffs.

\begin{definition}[Propensity Scores]
  For distribution $\prs^i$ over realized actions $\olact^i$, the {\em
    propensity score} is the vector of estimates $\eolvs^i$ of true
  payoffs $\olvs^i = (\olv_1^i,\ldots,\olv_k^i)$ defined as
\[
\eolv_\olact^i =
\begin{cases}
  \nicefrac{\olv_{\olact}^i}{\pr_{\olact}^i} & \text{if } \olact=\olact^i,\\
  0 & \text{otherwise.}
\end{cases}
\]
\end{definition}

\begin{lemma}\label{lem:unbiased}
  The propensity score $\eolvs^i$ is an {\em unbiased estimator} of the true payoff $\olvs^i$, i.e.\ $\expect{\eolvs^i} = \olvs^i$.
\end{lemma}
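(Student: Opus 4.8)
The plan is to condition on the history up to the start of round $i$, so that the sampling distribution $\prs^i$ and the true payoff vector $\olvs^i$ are fixed, and the only remaining randomness is the draw $\olact^i \sim \prs^i$. Unbiasedness is a coordinatewise claim, so I would fix an action $\olact \in A$ and compute $\expect{\eolv_\olact^i}$ directly from the definition of the propensity score. Writing the estimate via the indicator of the event $\{\olact^i = \olact\}$,
\[
\eolv_\olact^i = \Indic{\olact^i = \olact}\,\frac{\olv_\olact^i}{\pr_\olact^i},
\]
and taking expectations over $\olact^i \sim \prs^i$ gives $\expect{\eolv_\olact^i} = \Pr[\olact^i = \olact]\cdot \frac{\olv_\olact^i}{\pr_\olact^i} = \pr_\olact^i \cdot \frac{\olv_\olact^i}{\pr_\olact^i} = \olv_\olact^i$. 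Since this holds for every coordinate $\olact$, we conclude $\expect{\eolvs^i} = \olvs^i$ by linearity of expectation (applied coordinatewise to the vector identity).

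The only point that needs care — and the one I would flag explicitly — is that the division by $\pr_\olact^i$ must be well defined, i.e.\ every action must be sampled with strictly positive probability. This is exactly why the reduction in \Cref{fig:mab-reduction} plays $\olact \sim \prs = (1-\varepsilon)\eprs + \varepsilon/k$ rather than directly from the $\OLA$'s recommendation $\eprs$: the uniform mixing term $\varepsilon/k$ guarantees $\pr_\olact^i \geq \varepsilon/k > 0$ for all $\olact$, so the propensity score is finite and the computation above goes through. I would state this as a one-line remark; there is no real obstacle here, just a hypothesis ($\pr_\olact^i > 0$ for all $\olact$) that must be in force for the lemma to make sense, and which the $\varepsilon$-exploration in the reduction supplies.

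Everything else is a routine one-line calculation, so I do not anticipate any difficulty; the substance of the section lies not in this lemma but in controlling the variance that propensity scoring introduces, which is handled subsequently when the regret of the composed $\MAB$ algorithm is bounded.
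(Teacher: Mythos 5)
Your proof is correct and takes essentially the same approach as the paper: the paper expands $\expect{\eolv_\olact^i}$ via the law of total expectation conditioned on $\{\olact^i=\olact\}$, which is a trivial rewriting of your indicator calculation, and then concludes coordinatewise exactly as you do. Your extra remark about $\pr_\olact^i>0$ being guaranteed by the $\varepsilon$-mixing is accurate and worth keeping, though the paper leaves it implicit.
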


\begin{proof}
For each action $\olact$,
\[
\begin{aligned}
\expect{\eolv_\olact^i}
&= \expect{\eolv_\olact^i \mid \olact^i = \olact}\,\prob{\olact^i = \olact}
  + \expect{\eolv_\olact^i \mid \olact^i \neq \olact}\,\prob{\olact^i \neq \olact}\\
&= \tfrac{\olv_\olact^i}{\pr_\olact^i} \pr_\olact^i + 0\,(1-\pr_\olact^i)
= \olv_\olact^i.
\end{aligned}
\]
Applying this equality to all actions $\olact$ gives $\expect{\eolvs^i} = \olvs^i$.
\end{proof}

In the reduction following \Cref{fig:mab-reduction}, the propensity
scores $\eolvs^i$ will be used as input to the online learning
algorithm.  Intuitively, an online learning algorithm that obtains
good performance against the best-in-hindsight action is also
able to do so when it only has access to unbiased estimates. Indeed,
this result will be shown in the subsequent analysis.

\subsubsection{Bounding the Effective Payoff Range}

The performance of the online learning algorithm $\OLA$ depends on the range
of payoffs it receives. (Recall \Cref{thm:ew-bound}.)  Denote this
range $[1,\tilde{h}]$. Using the propensity scores, the non-zero inputs to the
online learning algorithm are $\eolv_\olact^i =
\nicefrac{\olv_\olact^i}{\pr_\olact^i}$.  Note that rare actions, with small
$\pr_\olact^i$, can produce large estimated payoffs.  The range of these
payoffs is bounded by $\tilde{h} = \max_{i,\olact}
\nicefrac{\olv_\olact^i}{\pr_\olact^i}$.  If the probabilities $\pr_\olact^i$ are not
bounded away from zero, $\tilde{h}$ may be arbitrarily large, and the
regret bounds of the underlying algorithm become vacuous.

Fortunately, our multi-armed bandit algorithm is free to pick the
probabilities and it can ensure they do not get too small.  The most
straightforward approach to ensuring these probabilities are not too
small is to require a minimal probability $\epsilon$ of exploring a
random one of the $k$ actions.  Such random exploration would give a lower
bound on the probability of taking each action $\olact$ of $\pr_\olact^i \ge
\nicefrac{\epsilon}{k}$.

\begin{lemma}\label{lem:h-tilde} For payoffs $\olv_\olact^i \in [0,h]$,
if $\pr_\olact^i \ge \nicefrac{\epsilon}{k}$ for all $i,\olact$, then $\eolv_\olact^i \le \tilde{h} = \nicefrac{kh}{\epsilon}$.
\end{lemma}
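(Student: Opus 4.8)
The plan is to prove the stated bound entrywise, directly from the definition of the propensity score, by splitting into the two cases that appear in that definition. For an action $\olact \neq \olact^i$ that was not played in round $i$, the propensity score is $\eolv_\olact^i = 0$, which satisfies $\eolv_\olact^i \le \nicefrac{kh}{\epsilon}$ trivially, since $h \ge 0$ and $\epsilon, k > 0$. For the played action $\olact = \olact^i$, the propensity score is $\eolv_\olact^i = \nicefrac{\olv_\olact^i}{\pr_\olact^i}$; here I would bound the numerator from above by $h$ using $\olv_\olact^i \in [0,h]$, and the denominator from below by $\nicefrac{\epsilon}{k}$ using the hypothesis $\pr_\olact^i \ge \nicefrac{\epsilon}{k}$. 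Since $\pr_\olact^i \ge \nicefrac{\epsilon}{k} > 0$, inverting the latter inequality gives $\nicefrac{1}{\pr_\olact^i} \le \nicefrac{k}{\epsilon}$, and combining with $\olv_\olact^i \le h$ yields $\eolv_\olact^i = \nicefrac{\olv_\olact^i}{\pr_\olact^i} \le \nicefrac{kh}{\epsilon}$. Taking the maximum over $i$ and $\olact$ then gives $\tilde{h} \le \nicefrac{kh}{\epsilon}$, as claimed.

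There is no real obstacle here: the argument is a one-line arithmetic bound, and the only subtlety is that the denominator must be strictly positive in order to invert the inequality on $\pr_\olact^i$ without ambiguity in its direction, which is guaranteed because $\nicefrac{\epsilon}{k} > 0$. It is worth recording, as context for the subsequent regret analysis, that this lemma is precisely what makes the reduction of \Cref{fig:mab-reduction} usable: after shifting the reported payoffs into a range of the form $[1, \tilde{h}]$, the effective range seen by the online learning algorithm $\OLA$ is at most $[1, \nicefrac{kh}{\epsilon}]$, so the guarantee of \Cref{thm:ew-bound} applied to $\OLA$ remains nonvacuous and degrades only polynomially in $k$ and $\nicefrac{1}{\epsilon}$ --- the tension this creates with wanting $\epsilon$ small is exactly what will drive the tuning of $\epsilon$ in the resulting bandit regret bound.
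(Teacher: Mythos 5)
Your proof is correct and matches the paper's argument: both simply bound the numerator by $h$ and the denominator from below by $\epsilon/k$, then combine. The only difference is that you explicitly note the trivial zero case for unplayed actions, which the paper omits.
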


\begin{proof}
Since $\olv_\olact^i \le h$, we have
$\eolv_\olact^i = \nicefrac{\olv_\olact^i}{\pr_\olact^i} \le \nicefrac{h}{\epsilon/k} = \nicefrac{kh}{\epsilon}$.
\end{proof}

The tradeoff is clear: ensuring sufficient exploration (large
$\epsilon$) keeps $\tilde{h}$ small, while excessive exploration
reduces the payoff from exploitation.

\subsubsection{Reduction Algorithm}

We can now combine unbiased estimation with minimal exploration. The
following reduction converts any full-feedback online learning
algorithm ($\OLA$) into a multi-armed bandit algorithm ($\MAB$).
Applying this reduction to the Exponential Weights algorithm gives a
variant of the {\em Exp3} algorithm of \citet{ACFS-02}.  In the
reduction, the learning rate and the exploration probability are
equated at $\epsilon$.

\begin{definition}[Reduction from $\MAB$ to $\OLA$]\label{def:mab-reduction}
In each round $i$:
\begin{enumerate}
  \item Obtain probabilities $\eprs^i$ from $\OLA$.
  \item Define the actual sampling probabilities $\prs^i$ as
  \[
  \pr_\olact^i = (1-\epsilon)\,\epr_\olact^i + \nicefrac{\epsilon}{k}.
  \]
  \item Draw an action $\olact^i \sim \prs^i$ and receive payoff $\olv_{\olact^i}^i$.
  \item Report the estimated payoff vector $\eolvs^i$ to $\OLA$ as
  \[
  \eolv_\olact^i =
  \begin{cases}
  \nicefrac{\olv_\olact^i}{\pr_\olact^i} & \text{if } \olact=\olact^i,\\
  0 & \text{otherwise.}
  \end{cases}
  \]
\end{enumerate}
\end{definition}

\begin{theorem}\label{thm:mab-to-ola}
Suppose the online learning algorithm $\OLA$ on payoffs in $[0,\tilde{h}]$ guarantees:
\[
\expect{\OLA} \ge (1-\epsilon)\widetilde{\OPT} - \smash{\frac{\tilde{h}}{\epsilon}}\ln k.
\]
Then, for payoffs in $[0,h]$, the reduced multi-armed bandit algorithm satisfies
\[
\expect{\MAB} \ge (1-2\epsilon)\OPT - \smash{\frac{h\,k}{\epsilon^2}}\ln k.
\]
\end{theorem}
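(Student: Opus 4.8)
The plan is to run a single chain of inequalities that walks from $\expect{\MAB}$, the expected payoff of the reduced bandit algorithm, down to the expected payoff of the embedded $\OLA$ evaluated on the propensity-score estimates, and then invoke the assumed regret guarantee of $\OLA$. Write $\prs^i = (1-\epsilon)\eprs^i + \tfrac{\epsilon}{k}\ones$ for the round-$i$ sampling distribution and $\eolvs^i$ for the estimated payoff vector reported back to $\OLA$. Throughout, condition on the history $\mathcal{H}_{i-1}$ preceding the round-$i$ draw; this history determines $\eprs^i$ and hence $\prs^i$, but not the fresh draw $\olact^i \sim \prs^i$. Three facts drive the argument: (i) mixing in uniform exploration costs only a $(1-\epsilon)$ factor because payoffs are nonnegative; (ii) the propensity scores are conditionally unbiased, so $\OLA$'s expected payoff on the estimates coincides with its expected payoff on the true payoffs; and (iii) the best estimated action in hindsight is, in expectation, at least as good as $\OPT$.

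\textbf{Steps (i) and (ii).} Since $\olvs^i \ge \zeros$, we have $\prs^i \cdot \olvs^i = (1-\epsilon)\,\eprs^i\cdot\olvs^i + \tfrac{\epsilon}{k}\sum_\olact \olv_\olact^i \ge (1-\epsilon)\,\eprs^i\cdot\olvs^i$; summing over rounds and taking expectations gives $\expect{\MAB} \ge (1-\epsilon)\,\expect{\sum_i \eprs^i\cdot\olvs^i}$. For unbiasedness: $\pr_\olact^i$ is $\mathcal{H}_{i-1}$-measurable and $\prob{\olact^i=\olact \mid \mathcal{H}_{i-1}} = \pr_\olact^i$, so the computation of \Cref{lem:unbiased} gives $\expect{\eolv_\olact^i \mid \mathcal{H}_{i-1}} = \olv_\olact^i$; because $\eprs^i$ is also $\mathcal{H}_{i-1}$-measurable, $\expect{\eolvs^i\cdot\eprs^i \mid \mathcal{H}_{i-1}} = \eprs^i\cdot\olvs^i$, and hence $\expect{\OLA} = \expect{\sum_i \eolvs^i\cdot\eprs^i} = \expect{\sum_i \eprs^i\cdot\olvs^i}$, identifying the quantity appearing in step (i) with $\expect{\OLA}$.

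\textbf{Step (iii) and the finish.} By \Cref{lem:h-tilde} the estimated payoffs lie in $[0,\tilde h]$ with $\tilde h = \nicefrac{kh}{\epsilon}$, so the hypothesis on $\OLA$ (applied to the realized input sequence and then averaged) yields $\expect{\OLA} \ge (1-\epsilon)\,\expect{\eOPT} - \tfrac{\tilde h}{\epsilon}\ln k$, where $\eOPT = \max_\olact \sum_i \eolv_\olact^i$. For any fixed action $\olact_0$, $\eOPT \ge \sum_i \eolv_{\olact_0}^i$, so taking expectations and using unbiasedness, $\expect{\eOPT} \ge \sum_i \olv_{\olact_0}^i$; maximizing over $\olact_0$ gives $\expect{\eOPT} \ge \OPT$. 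Chaining, $\expect{\MAB} \ge (1-\epsilon)\expect{\OLA} \ge (1-\epsilon)\bigl[(1-\epsilon)\OPT - \tfrac{\tilde h}{\epsilon}\ln k\bigr] \ge (1-2\epsilon)\OPT - \tfrac{\tilde h}{\epsilon}\ln k$, using $(1-\epsilon)^2 \ge 1-2\epsilon$ and $1-\epsilon \le 1$; substituting $\tilde h = \nicefrac{kh}{\epsilon}$ produces $\expect{\MAB} \ge (1-2\epsilon)\OPT - \tfrac{hk}{\epsilon^2}\ln k$.

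\textbf{Anticipated obstacle.} The one delicate point is that the sequence $\eolvs^1,\ldots,\eolvs^n$ fed to $\OLA$ is generated \emph{adaptively}, depending on $\OLA$'s own past outputs through the sampling probabilities, whereas the hypothesized guarantee is stated for a fixed input sequence. For a deterministic $\OLA$ such as Exponential Weights this causes no trouble: the guarantee holds for every realized input, in particular for the realized reduction transcript. In general one appeals to the re-randomized variant noted at the end of \Cref{s:online-learning}. The conditional-unbiasedness bookkeeping in step (ii)—keeping $\eprs^i$ measurable with respect to the pre-draw history—is precisely what makes this adaptivity harmless.
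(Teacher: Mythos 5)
Your proposal is correct and follows essentially the same three-part decomposition as the paper's proof: cost the uniform exploration with nonnegativity of payoffs, use conditional unbiasedness to move from $\OLA$'s payoff on the estimates to its payoff on the true values, and invoke the assumed $\OLA$ guarantee with $\tilde h = kh/\epsilon$. The one cosmetic difference is in how you invoke the $\OLA$ hypothesis: you apply it against the random quantity $\eOPT = \max_\olact \sum_i \eolv_\olact^i$ and then lower-bound $\expect{\eOPT}\ge\OPT$ via unbiasedness, whereas the paper relaxes the hypothesis to a comparison against the \emph{fixed} action $\olact^* = \argmax_\olact \sum_i \olv_\olact^i$ before taking expectations (so it never needs to reason about $\expect{\eOPT}$ at all). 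Both orderings are valid and give the same bound; your explicit discussion of the adaptivity issue and the conditional-measurability bookkeeping is a bit more careful than the paper's treatment.
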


Intuitively, by \cref{lem:unbiased}, the estimated payoffs are
unbiased. Using \cref{lem:h-tilde}, payoffs are bounded by $\tilde{h}
= \nicefrac{kh}{\epsilon}$ which plugs into the second term of the
regret bound.  Since we only follow the $\OLA$ with probability
$1-\epsilon$ we lose an additional $(1-\epsilon)$ on the entire bound
and the $(1-\epsilon) \OPT$ we would get if we always followed $\OLA$
degrades to $(1-\epsilon)^2\OPT$ which is more convenient to lower
bound as $(1-2\epsilon)\OPT$.  The formal proof given next will ensure
our intuition about the unbiased estimates is correct.

We can solve for the optimal learning and exploring rate $\epsilon$
and plug in the Exponential Weights algorithm to obtain the following
corollary.  This combined algorithm is a variant of Exp3, the
Exponential-weights algorithm for Exploration and Exploitation.  The
regret bound of the corollary is improved to $2.63\, h \, \sqrt{\frac{k}{n}
  \ln k}$ in the direct analysis of \citet{ACFS-02}---which applies a
better bound on the variance of the unbiased estimates, sets the
learning and exploration rates differently, and achieves the optimal
statistical dependence on the number of rounds $n$.

\begin{corollary}\label{cor:mab-vanishing}
For payoffs in $[0,h]$, by setting $\epsilon = \sqrt[3]{\frac{k}{n}\ln
  k}$ the multi-armed bandit algorithm based on exponential weights
(Exp3) satisfies vanishing per-round regret $$\Regret_n \le
3h\sqrt[3]{\frac{k}{n} \ln k}.$$
\end{corollary}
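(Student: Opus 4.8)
The plan is to read this off from \Cref{thm:mab-to-ola} by exactly the balancing trick already used in \Cref{cor:ew-regret} and \Cref{cor:ftpl-regret}. Starting from the guarantee $\expect{\MAB} \ge (1-2\epsilon)\OPT - \frac{hk}{\epsilon^2}\ln k$ of \Cref{thm:mab-to-ola} (whose hypothesis is supplied by \Cref{thm:ew-bound} when $\OLA$ is instantiated with Exponential Weights, so that $\MAB$ is the Exp3 variant), I would rearrange to $\OPT - \expect{\MAB} \le 2\epsilon\,\OPT + \frac{hk}{\epsilon^2}\ln k$ and then apply the crude bound $\OPT \le hn$ to the first term only. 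Dividing by $n$ yields
\[
\expect{\Regret_n} \;\le\; 2h\epsilon \;+\; \frac{hk}{n\,\epsilon^2}\ln k .
\]

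Next I would optimize $\epsilon$. The first term grows linearly in $\epsilon$ and the second decays like $\epsilon^{-2}$, so the sum is (up to constants) minimized where the two are of the same order; setting them equal gives $\epsilon^3 \asymp (k/n)\ln k$. With the stated choice $\epsilon = \sqrt[3]{(k/n)\ln k}$ one has $\epsilon^3 = (k/n)\ln k$ exactly, so the second term equals $h\cdot\frac{(k/n)\ln k}{\epsilon^2} = h\epsilon$ while the first is $2h\epsilon$; summing gives precisely $3h\epsilon = 3h\sqrt[3]{(k/n)\ln k}$, which is the claimed bound.

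Finally I would dispose of the degenerate regime in which this $\epsilon$ is not an admissible exploration rate, i.e.\ $\epsilon = \sqrt[3]{(k/n)\ln k} \ge 1$ (equivalently $k\ln k \ge n$), where the mixture $\pr_\olact^i = (1-\epsilon)\epr_\olact^i + \epsilon/k$ of \Cref{def:mab-reduction} is no longer a valid distribution. There the asserted bound is vacuous anyway, since per-round payoffs lie in $[0,h]$ and hence $\Regret_n \le h \le 3h \le 3h\sqrt[3]{(k/n)\ln k}$ trivially. Thus the bound holds in all cases. There is no real obstacle here: all of the content has already been packed into \Cref{thm:mab-to-ola}, and the only step is the one-line choice of $\epsilon$ that makes the two error terms balance at $2h\epsilon$ and $h\epsilon$.
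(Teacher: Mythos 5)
Your proposal is correct and is exactly the argument the paper has in mind; the paper does not spell out a proof for this corollary but prefaces it with ``We can solve for the optimal learning and exploring rate $\epsilon$,'' which is precisely your balancing computation, mirroring the proofs of \Cref{cor:ew-regret} and \Cref{cor:ftpl-regret}. Your handling of the degenerate regime $\epsilon \ge 1$ (where the mixture in \Cref{def:mab-reduction} is not a valid distribution but the stated bound exceeds $h$ and is vacuous since $\Regret_n \le h$) is a careful addition that the paper omits.
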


We complete the section with the formal proof of \Cref{thm:mab-to-ola}.

\begin{proof}[Proof of \Cref{thm:mab-to-ola}]
  This analysis applies the guarantee of the online learning algorithm
  ($\OLA$) on its input, which is the estimated payoffs, to derive a
  guarantee for the multi-armed bandit algorithm ($\MAB$) on the true
  payoffs.  Let $\eregret = \nicefrac{\tilde{h}}{\epsilon}\ln k$ be
  the additive loss on the estimated payoffs and let $\olact^* = \argmax_\olact
  \sum_i \olv_\olact^i$ denote the best fixed action in hindsight on the
  true payoffs. By the assumption of the theorem statement, the
  full-feedback algorithm $\OLA$ ensures for any input
  $\eolvs^1,\ldots,\eolvs^n$ and an arbitrary action $\olact^*$ (while
  $\olact^*$ is the optimal action for the true payoffs, it is an arbitrary
  action for the estimated payoffs):
\begin{align*}
\OLA
  &= \sum\nolimits_i \eprs^i \cdot \eolvs^i \qquad\quad \geq  (1-\epsilon) \sum\nolimits_i \eolv_{\olact^*}^i - \eregret
\\
\intertext{Taking expectations of both sides, we have:}
\expect[\eprs,\eolvs]{\OLA}
  &= \sum\nolimits_i \expect[\eprs,\eolvs]{\eprs^i \cdot \eolvs^i} \geq  (1-\epsilon) \sum\nolimits_i \expect[\eolvs]{\eolv_{\olact^*}^i} - \eregret
\\
  & \qquad \qquad \shortparallel \qquad\qquad \qquad \qquad\qquad \shortparallel \\
  &\quad \sum\nolimits_i \expect[\eprs]{\eprs^i \cdot \olvs^i} \quad \geq (1-\epsilon) \sum\nolimits_i \olv_{\olact^*}^i - \eregret. \addtag \label{eq:mab1}
\end{align*}
The equalities between the two lines follow from $\eolvs^i$ being an unbiased estimator of $\olvs^i$, even when conditioned on $\eprs^i$, i.e., $\expect[\eolvs^i,\eprs^i]{\eolvs^i \given \eprs^i} = \olvs^i$.

The multi-armed bandit algorithm satisfies,
\begin{align*}
\expect{\MAB}
= \sum\nolimits_i \expect{\prs^i \!\cdot\! \olvs^i}
  &= (1-\epsilon)\sum\nolimits_i \expect{\eprs^i \!\cdot\! \olvs^i}
     + \frac{\epsilon}{k}\sum\nolimits_{i,\olact} \olv_\olact^i\\
&\ge (1-\epsilon)\sum\nolimits_i \expect{\eprs^i \!\cdot\! \olvs^i}. \addtag \label{eq:mab2}
\end{align*}

Combine inequalities \eqref{eq:mab1} and \eqref{eq:mab2} with $(1-\epsilon)^2 \geq (1-2\epsilon)$ and $\eregret = \frac{\tilde{h}}{\epsilon} \ln k = \frac{h k}{\epsilon^2} \ln k$:
\[
\expect{\MAB}
  \ge (1-2\epsilon)\OPT - \eregret
  = (1-2\epsilon)\OPT - \smash{\frac{h\,k}{\epsilon^2}}\ln k. \qedhere
\]
\end{proof}

Notice that the adversarial bound of $\OLA$ was crucial for this
reduction.  Even if the original payoffs of the multi-armed bandit
problem are i.i.d.\ across rounds, the unbiased estimates are not.
Thus, the guarantee of an online learning algorithm for the
i.i.d.\ model cannot be extended to the multi-armed bandit setting by
this proof.

\section{Swap Regret}
\label{s:swap-regret}

The goal in this section is to extend the analysis of (full feedback)
online learning algorithms from minimizing best-in-hindsight regret
(also called external regret) to minimizing \emph{swap regret} (also
called internal regret).  Swap regret considers the total improvement
from considering each action, looking at the rounds in which the
action is used, swapping it with the best action in hindsight for
these rounds.  As we will see in subsequent sections, vanishing swap
regret algorithms have important economics properties.  This section
will return to the full feedback model of \Cref{s:online-learning} and
follow the analysis of \citet{BM-07}; however, the reduction of this
section can also be combined with the reduction of \Cref{s:mab} to give a
vanishing swap regret algorithm for the partial feedback model.

Swap regret minimization affords the following intuition
\citep{CHJ-20}.  A learner using different actions in different rounds
reveals the timing of payoff-relevant information resulting in this
distinct choice of actions.  Vanishing swap regret requires that the
learner has fully optimized with respect to this revealed information.

\subsection{Model}

Following the model of \Cref{s:online-learning}, an online learning
algorithm chooses, in each round $i = 1,\ldots,n$, a mixed strategy
$\pveci \in \distof{k}$, where $\pij$ is the probability that the
algorithm picks action $\olact$ in round $i$ and $\distof{k}$ is the
set of distributions over actions $\{1,\ldots,k\}$. Let $\vveci \in
[0,h]^k$ denote the realized payoff vector in round $i$.

\begin{definition}[Best-in-hindsight Regret]
An algorithm has \emph{best-in-hindsight regret} at most $\regret$
over $n$ rounds if, for all deviation actions $\olact \in \{1,\ldots,k\}$,
\[
\ALG = \sum\nolimits_{i=1}^n \pveci \cdot \vveci \geq \sum\nolimits_{i=1}^n \olv^i_\olact - \regret n.
\]
\end{definition}

\begin{definition}[Swap Regret] 
  An algorithm has \emph{swap regret} at most $\regret$ over $n$ rounds if, for any function
  $f: \{1,\ldots, k\} \to \{1,\ldots, k\}$ representing a deviation
  rule,
\[
\ALG = \sum\nolimits_{i=1}^n \pveci \cdot \vveci \geq \sum\nolimits_{i=1}^n \sum\nolimits_{\olact=1}^k \pij\, \olv^i_{f(\olact)} - \regret n.
\]
\end{definition}

For a fixed input, swap regret compares to what would have happened
had the learner consistently replaced each of its actions $\olact$
with some alternative $f(\olact)$. Clearly, minimizing swap regret is
a stronger condition than minimizing best-in-hindsight regret.  Thus,
algorithms with vanishing swap regret also have vanishing
best-in-hindsight regret.

\subsection{Reduction from Swap Regret to Best-in-hindsight Regret}

Recall, an important feature of the worst-case online learning model
is that more complex regret minimization problems, such as swap regret
minimization, can be reduced to best-in-hindsight regret
minimization. Again, this reduction does not hold under stochastic (i.i.d.)
assumptions, highlighting the relevance of the worst-case model.
The reduction  proceeds by maintaining
one best-in-hindsight algorithm for each action and combining them
through a stationary mixing rule that ensures consistency between the
probability of delegating to an algorithm $\olact$ to choose an action
and the probability that action $\olact$ is chosen (see \Cref{fig:swap-reduction}).

\begin{figure}[t]
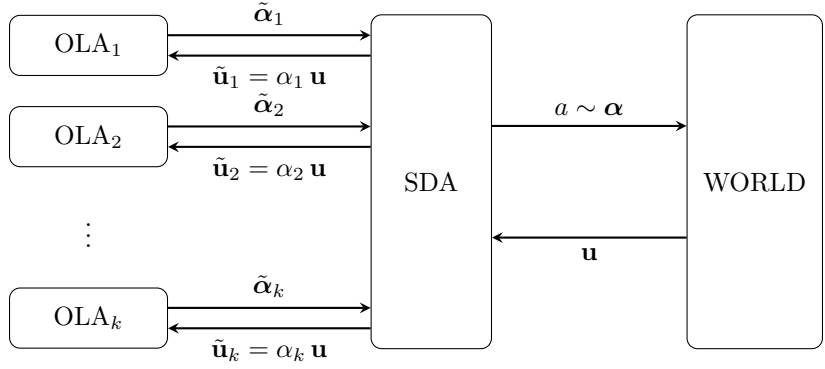

\centering
\swapreductionfig
\caption{Reduction from swap regret to best-in-hindsight regret.}
\label{fig:swap-reduction}
\end{figure}

A Markov chain on $k$ states is given by $k$ vectors of transition
probabilities $\eprs_1,\ldots,\eprs_k$ which can be organized in rows
of a row-stochastic matrix $\Qmat$.  Every finite Markov chain has at least
one stationary distribution $\pvec \in \distof{k}$ which can be viewed as a row
vector and satisfies $\pvec \Qmat = \pvec$.  In other words, if we
start with probabilities given by $\pvec$ and we take one transition
through $\Qmat$, the resulting distribution over states is again
$\pvec$.  Note, a stationary distribution $\pvec$ is also a principal
eigenvector of $\Qmat$, i.e., an eigenvector with eigenvalue equal to
1.  With such an $\pvec$, we have $(\prs \Qmat)_\olact = \sum_{\ell}
\pr_{\ell} \Qmat_{\ell \olact} = \pr_\olact$.

\subsection{Stationary Delegation Algorithm}
For learning with $k$ actions, the Stationary Delegation algorithm
($\SDA$) achieves low swap regret by delegating the decision to one of
a set of $k$ best-in-hindsight algorithms.  When delegated to,
algorithm $\OLA_{\olact}$ decides what action to take and
observes the payoffs of all actions (in expectation over the
delegation decision).  These algorithms are identical, but their
inputs are different, and thus, they will generally make distinct
recommendations.

\begin{definition}[Stationary Delegation Algorithm, $\SDA$]
Instantiate $k$ best-in-hindsight algorithms $\OLA_1, \ldots, \OLA_k$.  In round $i$:
\begin{enumerate}
\item Each algorithm $\OLA_\olact$ recommends a mixed strategy $\qvec^i_\olact \in \distof{k}$, and we collect these as the rows of the matrix:
\newcommand{\rowbar}{\raisebox{0.4ex}{\rule{1.2cm}{0.4pt}}}
\[
\Qmat^i =
\begin{bmatrix}
\rowbar\;\qvec^i_1\;\rowbar \\
\vdots                      \\
\rowbar\;\qvec^i_k\;\rowbar
\end{bmatrix}
\]
  
  \item Let $\pveci$ be the stationary distribution of $\Qmat^i$, satisfying $\pveci \Qmat^i  = \pveci$.
  \item The algorithm chooses an action by sampling $\olact^i \sim \pveci$. Equivalently, it first selects algorithm $\olact$ with probability $\pij$ and then selects an action according to $\qvec^i_\olact$.
  \item Input payoffs $\eolvs^i_\olact = \pij \vveci$ to each
    algorithm $\OLA_\olact$.
\end{enumerate}
\end{definition}

\begin{theorem}
\label{thm:sda-swap}
If each base algorithm $\OLA_\olact$ guarantees best-in-hindsight regret at most $\regret$, then the $\SDA$ algorithm achieves swap regret at most $k\regret$. That is, for all swap functions $f: \{1,\ldots, k\} \to \{1,\ldots, k\}$,
\[
\frac{1}{n} \sum_{i=1}^n \pveci \cdot \vveci \geq \frac{1}{n} \sum_{i=1}^n \sum_{\olact=1}^k \pij\, v^i_{f(\olact)} - k\regret.
\]
\end{theorem}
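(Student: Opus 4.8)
The plan is to apply the best-in-hindsight guarantee separately to each of the $k$ base algorithms $\OLA_\olact$ on the payoff stream it actually receives, and then stitch the $k$ resulting inequalities together using the defining property of the stationary distribution $\pveci$. First I would observe that the stream fed to $\OLA_\olact$, namely $\eolvs^i_\olact = \pij\,\vveci$ with $\pij = \pr^i_\olact \in [0,1]$ and $\vveci \in [0,h]^k$, again lies in $[0,h]^k$, so the hypothesized regret bound applies verbatim with the same $\regret$. Fixing a swap function $f$ and invoking the guarantee of $\OLA_\olact$ against the single fixed deviation action $f(\olact)$ gives
\[
\sum_{i=1}^n \qvec^i_\olact \cdot \eolvs^i_\olact \;\geq\; \sum_{i=1}^n \bigl(\eolvs^i_\olact\bigr)_{f(\olact)} - \regret\, n ,
\]
which, after substituting $\eolvs^i_\olact = \pij\,\vveci$, becomes
\[
\sum_{i=1}^n \pij\,\bigl(\qvec^i_\olact \cdot \vveci\bigr) \;\geq\; \sum_{i=1}^n \pij\, \olv^i_{f(\olact)} - \regret\, n .
\]

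Next I would sum this inequality over all $\olact \in \{1,\ldots,k\}$. The additive term aggregates to $k\regret\, n$, and the right-hand side becomes exactly (after interchanging the two finite sums) $\sum_{i=1}^n \sum_{\olact=1}^k \pij\, \olv^i_{f(\olact)}$, which is the swap-regret benchmark appearing in the theorem. It then remains to identify the left-hand side with $\ALG = \sum_{i=1}^n \pveci \cdot \vveci$. For each round $i$, I pull the dot product outside the sum over $\olact$ and apply step~2 of the $\SDA$ definition, $\pveci \Qmat^i = \pveci$, equivalently $\sum_{\olact} \pij\, \qvec^i_\olact = \pveci$:
\[
\sum_{\olact} \pij\,\bigl(\qvec^i_\olact \cdot \vveci\bigr) = \Bigl(\sum_{\olact} \pij\, \qvec^i_\olact\Bigr) \cdot \vveci = \pveci \cdot \vveci .
\]
Summing over $i$ turns the left side into $\ALG$, and dividing the resulting inequality by $n$ yields the stated bound $\frac{1}{n}\sum_i \pveci \cdot \vveci \geq \frac{1}{n}\sum_i \sum_\olact \pij\, \olv^i_{f(\olact)} - k\regret$.

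The main obstacle — indeed the one place where the construction earns its keep — is the identity $\sum_{\olact} \pij\, \qvec^i_\olact = \pveci$: it is precisely the stationarity of $\pveci$ under $\Qmat^i$ that makes the per-round ``realized payoff of the delegate'' $\sum_\olact \pij(\qvec^i_\olact \cdot \vveci)$ collapse to $\pveci \cdot \vveci$ with \emph{no slack}, so that aggregating the $k$ base guarantees reconstructs $\ALG$ exactly rather than some lossy surrogate; without this self-consistency of the mixing rule the reduction would not close. A secondary point I would state carefully is that each base guarantee must hold against the (possibly adaptive) stream $\eolvs^i_\olact$ that $\OLA_\olact$ sees; this is legitimate because those streams are generated causally — round-$i$ recommendations $\qvec^i_\olact$ depend only on payoffs through round $i-1$, then $\pveci$ and hence $\eolvs^i_\olact$ are formed — and the hypothesis is exactly that $\OLA_\olact$ attains regret at most $\regret$ on whatever it is fed.
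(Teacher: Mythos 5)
Your proof is correct and follows essentially the same route as the paper: apply the best-in-hindsight guarantee of each $\OLA_\olact$ against the fixed deviation $f(\olact)$ on the scaled stream $\eolvs^i_\olact = \pij\,\vveci$, sum the $k$ inequalities, and use stationarity $\pveci\Qmat^i = \pveci$ (equivalently $\sum_\olact \pij\,\qvec^i_\olact = \pveci$) to collapse the left-hand side to $\sum_i \pveci\cdot\vveci$. Your added remarks on the preserved payoff range $[0,h]$ and on the causal/adaptive-stream issue are sound and make explicit two points the paper leaves implicit.
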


\begin{proof}
For any $\olact$ and $f(\olact)$, each base algorithm $\OLA_\olact$ satisfies
\begin{align*}
  \sum_{i=1}^n \pij (\qvec^i_\olact \cdot \vveci) &\geq \sum_{i=1}^n \pij v^i_{f(\olact)} - \regret n.
  \addtag \label{eq:swap}
  \\
\intertext{Summing the left-hand side over all $\olact$ gives}
\sum_{i=1}^n \sum_{\olact=1}^k \pij (\qvec^i_\olact \cdot \vveci)
&= \sum_{i=1}^n (\pveci \Qmat^i ) \cdot \vveci
= \sum_{i=1}^n \pveci \cdot \vveci,\\
\intertext{where the last equality holds because $\pveci$ is the stationary distribution of $\Qmat^i$. Thus, summing both sides of inequality \eqref{eq:swap} over $\olact$ yields}
\sum_{i=1}^n \pveci \cdot \vveci &\geq \sum_{i=1}^n \sum_{\olact=1}^k \pij v^i_{f(\olact)} - k\regret n,
\end{align*}
and proves the theorem.
\end{proof}

Combining \Cref{thm:sda-swap} with \Cref{thm:ew-bound} gives the
following corollary.  Though we will not give the details, it is
possible to improve this corollary to obtain a swap regret bound of
$2h \sqrt{\frac{k \ln k}{n}}$, i.e., moving the $k$ inside the
square-root.  This improvement follows from instantiating the
reduction with best-in-hindsight learning algorithms that obtain the
optimal regret bound without knowledge of the time horizon $n$.  In
fact, both Perturbed Follow the Leader and Exponential Weights can be
modified to satisfy this condition.

\begin{corollary}
\label{c:sda-swap}
The $\SDA$ reduction instantiated with Exponential Weights as the
online learning algorithm achieves per-round swap regret at most $2kh
\sqrt{\frac{\ln k}{n}}$.
\end{corollary}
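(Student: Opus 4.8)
The plan is to read the bound off directly from \Cref{thm:sda-swap} by instantiating each of the $k$ base algorithms $\OLA_1,\ldots,\OLA_k$ inside $\SDA$ with Exponential Weights and invoking \Cref{cor:ew-regret}. The only preliminary step is a range check: in round $i$ algorithm $\OLA_\olact$ is fed the payoff vector $\eolvs^i_\olact = \pij\,\vveci$, and since $\pij \in [0,1]$ is a probability and every coordinate of $\vveci$ lies in $[0,h]$, every coordinate of $\pij\,\vveci$ also lies in $[0,h]$. So the ``payoffs in $[0,h]$'' hypothesis of the Exponential Weights guarantee holds on the sequence actually presented to each base algorithm. (Note this coarse $[0,h]$ range — rather than the sharper coordinate bound $\pij h$, which cannot be exploited without foreknowledge of $\pij$ — is exactly what produces a factor of $k$ outside the square root, consistent with the remark preceding the corollary.)

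\textbf{Main steps.} First I would fix the horizon $n$ (known to $\SDA$, hence to each base copy) and tune every copy of Exponential Weights to $\epsilon = \sqrt{(\ln k)/n}$; by \Cref{cor:ew-regret} applied with payoff range $[0,h]$ and $k$ actions, each $\OLA_\olact$ then has per-round best-in-hindsight regret at most $2h\sqrt{(\ln k)/n}$, i.e.\ best-in-hindsight regret at most $\regret\,n$ with $\regret = 2h\sqrt{(\ln k)/n}$. Second, I would plug this value of $\regret$ into \Cref{thm:sda-swap}, which gives per-round swap regret at most $k\regret = 2kh\sqrt{(\ln k)/n}$ for every swap function $f$ — exactly the claimed bound. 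These two steps are the entire argument; there is no hidden calculation.

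\textbf{Main obstacle.} The one point that needs a moment's care is that the input presented to $\OLA_\olact$ in round $i$ depends, through the stationary distribution $\pveci$ of $\Qmat^i$, on the round-$i$ recommendations of all base algorithms, including $\OLA_\olact$ itself, so the input sequence each base algorithm faces is adaptively — indeed self-referentially — generated rather than fixed in advance. I would address this by observing that the Exponential Weights analysis underlying \Cref{thm:ew-bound} is round-by-round and at round $i$ uses only that $\vveci$ is \emph{some} vector in $[0,h]^k$, never that future payoffs were committed to ahead of time; hence the best-in-hindsight guarantee continues to hold on adaptive inputs. Equivalently, \Cref{thm:sda-swap} already treats the base-algorithm guarantee as a black box applied to whatever sequence the reduction produces, so it suffices that Exponential Weights meets its regret bound on that sequence, which the previous observation supplies. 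With that settled the corollary follows by the two-line combination above.
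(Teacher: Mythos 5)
Your proposal is correct and follows exactly the route the paper intends: instantiate each base algorithm in $\SDA$ with Exponential Weights tuned via \Cref{cor:ew-regret} (noting the fed payoffs $\pij\,\vveci$ stay in $[0,h]$), then multiply the per-algorithm regret $2h\sqrt{(\ln k)/n}$ by $k$ via \Cref{thm:sda-swap}. Your extra care about the adaptively generated input sequences is a reasonable addition but does not change the argument, which matches the paper's stated combination of \Cref{thm:sda-swap} with the Exponential Weights bound.
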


\section{Learning in Repeated Games}
\label{s:repeated-games}

A repeated game is one where a stage game is repeated many times.  A
standard question of learning in games is whether there are natural
procedures by which players playing in a repeated game with initially
unknown payoffs will eventually learn to play an equilibrium of the
stage game.  This section will take a similar approach looking at
convergence of the joint distribution of the empirical distribution of
play to standard static equilibrium concepts.

\subsection{Model}

Consider a repeated finite two–player bimatrix game in which the row
player chooses an action $a \in A$, and the column player chooses an
action $b \in B$.  The corresponding payoffs are denoted $U_R(a,b)$
for the row player and $U_C(b,a)$ for the column player.  On each day $i$,
\begin{enumerate}
\item Row and Column choose actions $a^i$ and $b^i$, respectively, and
\item observe realized payoffs; for example, Row observes $U_R(a^i,b^i)$.
\end{enumerate}
Average payoffs are computed over the $n$ days; for instance, Row’s
average payoff is $\nicefrac{1}{n}\sum_{i=1}^n U_R(a^i,b^i)$.

Recall that a mixed Nash equilibrium imagines the players
independently randomize their actions, i.e., the joint distribution of
actions is a product distribution.  Learning in repeated games does
not, in general, converge quickly to independent randomization.

\subsection{Non-convergence to Nash Equilibrium}

A central result in the computational study of Nash equilibrium is
that computing Nash equilibrium is not generally tractable \citep{DGP-09}.  This
impossibility implies that simple learning dynamics are not generally
going to quickly converge, in any sense, to Nash equilibria.  If even
centralized algorithms cannot efficiently compute such an equilibrium,
neither can learning algorithms.

\begin{theorem}\label{thm:nash-intractable}
Computing a Nash equilibrium of a bimatrix game is, in the worst case
and under standard assumptions in computational complexity,
computationally intractable 
\end{theorem}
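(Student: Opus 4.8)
The plan is to prove the stronger statement that computing a Nash equilibrium of a bimatrix game is complete for the complexity class $\mathrm{PPAD}$ (Polynomial Parity Argument, Directed), and then to invoke the standard hardness assumption that $\mathrm{PPAD}$-complete problems admit no polynomial-time algorithm (equivalently, $\mathrm{PPAD}\neq\mathrm{FP}$). Recall that $\mathrm{PPAD}$ is the class of total search problems polynomial-time reducible to \textsc{End-of-the-Line}: given a succinctly represented directed graph on $\{0,1\}^m$ in which every vertex has in-degree and out-degree at most one, together with a known source, find another degree-one vertex (a second source or a sink). Such a vertex is guaranteed to exist by a parity argument over the endpoints of the paths in the graph---precisely the non-constructive flavor of Nash's existence theorem, which routes through Brouwer's fixed-point theorem applied to the best-response correction map on the product of strategy simplices.

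First I would establish membership, that finding a Nash equilibrium lies in $\mathrm{PPAD}$. One makes Nash's existence proof constructive in the $\mathrm{PPAD}$ sense: the equilibria are fixed points of the best-response correction map; discretizing the simplices and applying Sperner's lemma (whose own proof is a path-following/parity argument) produces, by tracing the Sperner path, a panchromatic simplex and hence an approximate Brouwer fixed point, which a continuity/rounding argument converts to an approximate equilibrium. Packaging the path-tracing as an \textsc{End-of-the-Line} instance gives the reduction; in the two-player case one additionally uses that bimatrix games have rational equilibria of polynomially bounded bit-complexity (via, e.g., Lemke--Howson), so the exact problem---not merely its approximate version---is a legitimate $\mathrm{PPAD}$ problem.

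The substantive direction is $\mathrm{PPAD}$-hardness. Here I would start from a canonical $\mathrm{PPAD}$-complete problem---finding an approximate fixed point of a Brouwer function given as an arithmetic circuit, equivalently the \textsc{Generalized-Circuit} problem---and encode it as a game. The engine is a library of small ``gadget'' games, each forcing, at every equilibrium, one player's mixed strategy (read as a real number in $[0,1]$) to equal a prescribed function of other players' strategies: addition, subtraction of a constant, multiplication, and comparison/thresholding. Wiring the gadgets according to the circuit yields a bounded-degree graphical (polymatrix) game whose Nash equilibria are exactly the fixed points of the circuit, giving $\mathrm{PPAD}$-hardness for games with many players---this is the heart of \citet{DGP-09}. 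The hard part, and the main obstacle, is collapsing the number of players down to two: one must simulate a many-player graphical game first with a constant number of players and then with exactly two, packing many players' strategy spaces into two large payoff matrices while preserving the equilibrium correspondence, and, crucially, controlling how approximation error propagates so that approximate equilibria of the two-player game still decode to approximate fixed points of the circuit. This error analysis is delicate and is what necessitated the separate, technically involved argument of Chen and Deng.

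Combining the two directions, finding a Nash equilibrium of a bimatrix game is $\mathrm{PPAD}$-complete; a polynomial-time algorithm for it would place all of $\mathrm{PPAD}$ in $\mathrm{FP}$, contradicting the standard assumption. Hence, under that assumption, no such algorithm exists, and the theorem follows.
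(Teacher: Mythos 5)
The paper offers no proof of this theorem at all---it is imported from the literature via the citation to \citet{DGP-09}---and your outline accurately reproduces the standard argument behind that citation: membership in $\mathrm{PPAD}$ by path-following (Sperner/Brouwer, with Lemke--Howson rationality giving a well-posed exact two-player problem), $\mathrm{PPAD}$-hardness via generalized-circuit gadget games wired into graphical games, the collapse to two players due to Chen and Deng, and the standard assumption $\mathrm{PPAD}\neq\mathrm{FP}$. Be aware that what you have written is a correct proof outline rather than a proof: the gadget constructions and the error-propagation analysis in the two-player reduction are described but not carried out, which is unavoidable for a result of this depth and is consistent with the paper's own treatment of it as a cited result.
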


\subsection{Equilibrium of No-regret Learning}

Following the analysis of \citet{FV-97} and \citet{HM-00}, we will show
that outcomes of no-swap-regret learning and
no-best-in-hindsight-regret learning correspond to correlated
equilibria and coarse correlated equilibria, respectively.

\begin{definition}[Correlated Equilibrium, CE]\label{def:CE}
A joint distribution over actions $(a,b)$ is a \emph{correlated equilibrium} if, when a mediator draws $(a,b)$ according to this distribution and privately recommends $a$ to the row player and $b$ to the column player, each player finds it optimal to follow the recommendation. Formally, for the row player,
\[
a \in \argmax_{a'} \, \expect[(a,b)]{U_R(a'\!,b) \mid a},
\]
and analogously for the column player.
\end{definition}

\begin{definition}[Coarse Correlated Equilibrium, CCE]\label{def:CCE}
A joint distribution over actions $(a,b)$ is a \emph{coarse correlated equilibrium} if, when a mediator draws $(a,b)$ from this distribution and privately recommends $a$ to Row and $b$ to Column, each player weakly prefers following the recommendation to committing \emph{ex ante} to any fixed action that ignores the recommendation. Formally, for the row player and all actions $a \in A$,
\[
\expect[(a,b)]{U_R(a,b)} \;\ge\; \expect[(a,b)]{U_R(a'\!,b)}
\quad\text{for all } a',
\]
and analogously for the column player.
\end{definition}

Mixed Nash equilibrium equals CE intersected with independent play and
also equals CCE intersected with independent play. Thus, Nash
$\subseteq$ CE $\subseteq$ CCE.  It is also known that in two-player
zero-sum games Nash $=$ CE $=$ CCE.  In the following No-tie
Rock-Paper-Scissors, these concepts are all distinct.

\begin{example}[No-tie Rock–Paper–Scissors]\label{ex:RPS}
Consider the modified rock–paper–scissors game with payoffs shown below.

\[
\begin{array}{r|ccc}
 & \text{Rock} & \text{Paper} & \text{Scissors}\\
\hline
\textbf{Rock} & \mathbf{-6},-6 & \mathbf{-1},1 & \mathbf{1},-1\\
\textbf{Paper} & \mathbf{1},-1 & \mathbf{-6},-6 & \mathbf{-1},1\\
\textbf{Scissors} & \mathbf{-1},1 & \mathbf{1},-1 & \mathbf{-6},-6
\end{array}
\]

The unique Nash equilibrium is uniform mixing. There are additional
correlated equilibria that are not Nash equilibria; for example,
uniform mixing over $\{(R,P),(R,S),(P,R),\allowbreak(P,S),\allowbreak(S,R),(S,P)\}$ yields a
payoff of $0$ from following the mediator and a best payoff of $0$ on
recommendation $P$. There are also coarse correlated equilibria that are not correlated equilibria:
uniform mixing over $\{(S,P),(P,S)\}$ yields a payoff of $0$ from
following the mediator, while ex ante deviations to $R$, $P$, and $S$ yield
payoffs $0$, $-7/2$, and $-5/2$, respectively.
\end{example}

\begin{theorem}\label{thm:regret-to-ce}
Play has no best-in-hindsight regret if and only if the empirical distribution of play is a CCE, and play has no swap regret if and only if the empirical distribution of play is a CE.
\end{theorem}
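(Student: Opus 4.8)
The plan is to read both notions of regret of the realized play directly off the \emph{empirical distribution of play} $\sigma_n\deq\frac1n\sum_{i=1}^n\dirac{(a^i,b^i)}$, and to recognize each one, term by term, as the exact amount by which $\sigma_n$ fails the defining inequalities of CCE, respectively CE. First I would record two bookkeeping identities for the row player (Column is symmetric): the realized average payoff is $\frac1n\sum_i U_R(a^i,b^i)=\expect[(a,b)\sim\sigma_n]{U_R(a,b)}$, and for any fixed deviation $a'$ the hindsight value of always playing $a'$ is $\frac1n\sum_i U_R(a',b^i)=\expect[(a,b)\sim\sigma_n]{U_R(a',b)}$, which uses $\sigma_n$ only through its marginal on $B$.

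For the coarse-correlated half, these identities say that Row's per-round best-in-hindsight regret after $n$ rounds is \emph{exactly} $\max_{a'}\expect[\sigma_n]{U_R(a',b)}-\expect[\sigma_n]{U_R(a,b)}$ (with $(a,b)\sim\sigma_n$ and $a'$ ranging over $A$), a nonnegative quantity that vanishes precisely when $\sigma_n$ meets Row's CCE constraint in \Cref{def:CCE}; the same holds for Column. Hence the larger of the two players' per-round regrets tends to $0$ iff both CCE gaps do, which---since the set of distributions on $A\times B$ is compact and both gaps are continuous (indeed piecewise-linear) in that distribution---holds iff every limit point of $(\sigma_n)_n$ is a CCE. (For the exact version: zero regret for both players iff $\sigma_n$ is itself a CCE.)

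The correlated half needs one more idea: a deviation rule $f$ splits over the recommended action. Writing $n_a=|\{i:a^i=a\}|$ and noting $\sum_a\Indic{a^i=a}U_R(f(a),b^i)=U_R(f(a^i),b^i)$, I would rewrite Row's per-round swap regret against $f$ as $\sum_a\sigma_n(a)\bigl(\expect[\sigma_n]{U_R(f(a),b)\mid a}-\expect[\sigma_n]{U_R(a,b)\mid a}\bigr)$, whose $a$-th term depends on $f$ only through $f(a)$; since $f$ chooses these values freely and independently, maximizing over $f$ maximizes each term separately, so worst-case swap regret equals $\sum_a\sigma_n(a)\bigl(\max_{a'}\expect[\sigma_n]{U_R(a',b)\mid a}-\expect[\sigma_n]{U_R(a,b)\mid a}\bigr)$---a sum of nonnegative terms, the $a$-th being $\sigma_n(a)$ times the violation of Row's CE constraint on recommendation $a$ (\Cref{def:CE}). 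The weight $\sigma_n(a)$ cancels the denominator inside the conditional expectation, so this equals $\sum_a\bigl(\max_{a'}\sum_b\sigma_n(a,b)U_R(a',b)-\sum_b\sigma_n(a,b)U_R(a,b)\bigr)$: continuous and piecewise-linear in the joint distribution, with recommendations of vanishing probability contributing nothing and no need to define $\expect[\sigma_n]{\cdot\mid a}$ when $\sigma_n(a)=0$. So Row's worst-case per-round swap regret vanishes iff each term vanishes; combining with Column and taking limit points as before gives: play has no swap regret iff every limit point of $(\sigma_n)_n$ is a CE (and zero swap regret iff $\sigma_n$ is a CE).

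The genuinely delicate point, which I would handle carefully, is the bookkeeping around low-probability recommendations in the swap-regret argument: one must check that the equivalence is tight in both directions, and this is exactly where the $\sigma_n(a)$-weighting matters---it turns ``CE-violation weighted by how often the recommendation actually occurs'' into the right object, matching per-round swap regret on the nose and keeping everything continuous in $\sigma_n$. The remaining steps are routine: the two algebraic identities, the fact that $\max$ over $f$ distributes over the per-action sum, and the standard compactness/continuity argument upgrading ``gap $\to 0$'' to ``every limit point is an equilibrium.''
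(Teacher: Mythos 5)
Your proposal is correct and takes essentially the same route as the paper: both identify the empirical distribution of play with a mediator drawing a round uniformly at random and observe that the best-in-hindsight (resp.\ swap) regret inequalities coincide, term by term, with the CCE (resp.\ CE) constraints of \Cref{def:CCE} and \Cref{def:CE}; you simply spell out the per-recommendation decomposition for swap regret and the limit-point/continuity version that the paper's terse ``Analogously, CE follows'' leaves implicit. One small imprecision: best-in-hindsight regret need not be nonnegative (correlated play can beat every fixed action), so the CCE half should be phrased as ``regret at most zero iff the constraints hold'' rather than ``the gap vanishes''---this does not affect the substance of your argument.
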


\begin{proof}
Fix a sequence $((a^1,b^1),\ldots,(a^n,b^n))$. For the CCE statement, no-regret for Row means that, for every $a'$,
\[
\sum_{i=1}^n U_R(a^i,b^i)\;\ge\;\sum_{i=1}^n U_R(a',b^i).
\]
Consider a mediator who picks $i$ uniformly from $\{1,\ldots,n\}$ and
recommends $a^i$ to Row and $b^i$ to Column. Then no-regret for Row
and Column holds if and only if the mediator induces a CCE, since the
inequalities coincide. Analogously, CE follows for no swap
regret play.
\end{proof}

Importantly, it is not the case that the distribution of play that the
algorithms choose converges (a.k.a.\ last iterate convergence); it is
the whole empirical distribution of play that converges.
Specifically, play that cycles satisfies this definition.

\begin{example}
Consider two players in the No-tie Rock-Paper-Scissors game
(\Cref{ex:RPS}) repeatedly best-responding to each other starting with
$(R,S)$.  The progression of subsequent action pairs is $(R,P)$,
$(S,P)$, $(S,R)$, $(P,R)$, $(P, S)$, $(R,S)$ repeating forever.  The
empirical distribution of play converges to the correlated equilibrium
given in \Cref{ex:RPS} that mixes evenly over the six no-tie action
pairs.  The distribution over action played in each round, which is a
pointmass on the best response to the previous round, never converges.
\end{example}

\section{Manipulation}
\label{s:manipulation}

A central question in the study of no-regret learning in games is
understanding the ways in which it is similar to or different from
classical models of strategic behavior.  One important aspect is
manipulation, where the classical model is Stackelberg equilibrium: a
leader commits to a strategy and a follower best responds.  This
section follows the analysis of \citet{DSS-19} and shows that in
complete-information games, vanishing best-in-hindsight regret
learners can be further manipulated while vanishing swap-regret
learners cannot.

\subsection{Model}

Consider a finite two–player bimatrix game in which the \emph{leader} (the
row player) chooses an action $a \in A$, and the \emph{follower} (the
column player) chooses an action $b \in B$.  The corresponding payoffs
are denoted $U_L(a,b)$ for the leader and $U_F(b,a)$ for the follower.

\begin{definition}[Stackelberg equilibrium] When the leader commits to a (possibily randomized) action, and the follower best responds, a \emph{Stackelberg equilibrium} is given by an optimal such strategy for the leader, and the
  \emph{Stackelberg value ($\SV$)} is their expected payoff.
\end{definition}

For ease of exposition, we will assume that the follower's action is
a unique best response to the leader's Stackelberg strategy.  If this
is not the case, the leader can still force their preferred action of
the follower as long as it is undominated.  

\begin{proposition}
All no-regret algorithms can be led to play the Stackelberg outcome.
\end{proposition}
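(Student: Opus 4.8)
The plan is to have the leader \emph{commit to} their Stackelberg strategy $\sigma^*$ and play it i.i.d.\ across all $n$ rounds; the claim then reduces to showing that a follower running \emph{any} no-regret algorithm is driven to best-respond to $\sigma^*$. Write $w_b = \expect[a\sim\sigma^*]{U_F(b,a)}$ for the follower's expected payoff from action $b$ against $\sigma^*$, and let $b^*$ be the follower's best response, which we have assumed is unique; set $\gamma = \min_{b\neq b^*}(w_{b^*}-w_b)>0$. By definition the Stackelberg value is $\SV = \expect[a\sim\sigma^*]{U_L(a,b^*)}$ and the follower's Stackelberg payoff is $w_{b^*}$. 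Against a static leader the follower faces a stationary, oblivious environment: in round $i$ its realized payoff vector is $U_F(\cdot,a^i)$ with $a^i\sim\sigma^*$ drawn independently of the follower's round-$i$ mixed strategy $\beta^i$, which depends only on history through round $i-1$.

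The argument then has three moves. First, condition on the realized sequence $a^1,\dots,a^n$ and apply the no-regret guarantee $\tfrac1n\ALG_n \ge \tfrac1n\OPT_n - \Regret_n$ with $\Regret_n\to 0$; taking expectations and using $\expect{\max_b \sum_i U_F(b,a^i)} \ge \max_b \sum_i w_b = n\,w_{b^*}$, the follower's average expected payoff is at least $w_{b^*}-o(1)$. Second, let $\bar\beta = \tfrac1n\sum_i \expect{\beta^i}$ be the follower's average mixed strategy; since $a^i$ is independent of $\beta^i$, the follower's average expected payoff equals $\sum_b \bar\beta_b\, w_b$, so $w_{b^*} - \sum_b\bar\beta_b w_b \le o(1)$. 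But the left-hand side is $\sum_{b\neq b^*}\bar\beta_b(w_{b^*}-w_b) \ge \gamma(1-\bar\beta_{b^*})$, hence $\bar\beta_{b^*}\ge 1-o(1)$: the follower plays $b^*$ in a $1-o(1)$ fraction of rounds. Third, the leader's average expected payoff is $\sum_b \bar\beta_b \expect[a\sim\sigma^*]{U_L(a,b)}$, which---since $\bar\beta$ concentrates on $b^*$ and payoffs are bounded---converges to $\expect[a\sim\sigma^*]{U_L(a,b^*)} = \SV$; symmetrically the follower's average payoff converges to $w_{b^*}$, and the joint empirical distribution of play converges to $\sigma^*\otimes\delta_{b^*}$, i.e.\ the Stackelberg outcome.

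The main obstacle is the second move: converting the purely scalar no-regret guarantee---that the follower's \emph{average payoff} is nearly optimal---into a statement about which \emph{action} is actually played. This is exactly where uniqueness of the best response is needed, supplying the positive gap $\gamma$; without undominatedness of $b^*$ the follower could settle on a payoff-tied action that is unfavorable to the leader, and the conclusion could fail (this is why the paper's preceding remark allows the leader to force any undominated best response). A secondary, routine point is the bookkeeping of randomness: one applies the no-regret bound conditionally on the oblivious i.i.d.\ realization of the leader's actions and invokes $\expect{\max_b Y_b}\ge \max_b\expect{Y_b}$, which is immediate since a static leader is an oblivious adversary.
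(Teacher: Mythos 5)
Your proposal is correct and takes essentially the same route as the paper: the leader commits statically to the Stackelberg strategy, so the follower faces a stationary (oblivious) environment, and the no-regret guarantee forces the follower's play to concentrate on its unique best response, yielding the Stackelberg payoffs. The paper states this argument only informally, and your write-up simply supplies the quantitative details (the gap $\gamma$ from uniqueness, independence of $a^i$ from $\beta^i$, and the averaging step), which is exactly the bookkeeping the paper's sketch leaves implicit.
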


\begin{proof}
If the leader plays its Stackelberg strategy in every round—i.e.,
plays statically—then the follower faces a stationary environment.  A
no-regret learner in a stationary environment converges to a best
response, which is precisely the follower’s Stackelberg response.
Hence play converges to the Stackelberg outcome.
\end{proof}

\begin{definition}[Manipulability]
  A learning algorithm is \emph{manipulable} if there
  exists a game and a leader strategy such that the difference between
  the leader's payoff and the Stackelberg value is positive and
  non-vanishing.
\end{definition}

Informally, algorithms that only guarantee vanishing best-in-hindsight
regret tend to be manipulable, whereas those that guarantee vanishing
swap regret are not.  Although vanishing swap regret implies vanishing
best-in-hindsight regret, the converse is not true, so to obtain lower
bounds we focus on a weaker subclass of best-in-hindsight learners.

\begin{definition}[Mean-based learner]
  A learning algorithm is \emph{mean-based} if, in each round, actions
  that are $\gamma$ worse than the best-in-hindsight action's
  per-round payoff are played with probability at most $\gamma$.
\end{definition}

Canonical examples of mean-based algorithms include Exponential
Weights ($\EW$) and Perturbed Follow the Leader ($\FTPL$).

\subsection{Manipulability of Mean-based Learners}

\begin{theorem}\label{t:meanbased}
Mean-based no-regret algorithms are manipulable.
\end{theorem}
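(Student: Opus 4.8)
The plan is to exhibit a fixed game $G$ together with, for each given mean-based algorithm, a leader strategy, and to prove that the leader's long-run average payoff against the learner exceeds the Stackelberg value $\SV(G)$ by a constant bounded away from $0$ as $n\to\infty$ and the mean-based slack $\gamma\to 0$. The definition of manipulability lets the leader strategy depend on the algorithm, so the leader may exploit the fact that a mean-based learner's play is essentially pinned down by the vector of cumulative payoffs: only at rounds where several actions lie within $\gamma$ of the best-in-hindsight per-round payoff is the behavior unconstrained, and there the leader falls back on knowing the learner's update rule.

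The construction is the familiar ``bait and switch'' used by \citealp{DSS-19}. One designs $G$ and a leader strategy with alternating phases. In a \emph{bait} phase the leader plays so that a target follower action $b^{\dagger}$ is the uniquely best-in-hindsight action; after a short transient the mean-based learner then plays $b^{\dagger}$ with probability at least $1-\gamma$, and the leader collects the bait outcome's payoff during these rounds. In the ensuing \emph{switch} phase the leader plays an action $a^{\dagger}$ with $U_L(a^{\dagger},b^{\dagger})$ large: because $b^{\dagger}$ enters the switch phase with a cumulative lead of order $n$ and that lead erodes only linearly, $b^{\dagger}$ remains within $\gamma$ of the best-in-hindsight action for a window of $\Theta(n)$ rounds, so the learner keeps playing $b^{\dagger}$ (with probability $\ge 1-\gamma$) while the leader harvests $U_L(a^{\dagger},b^{\dagger})$. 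Choosing the phase lengths---and, in a cyclic version, iterating---so that the switch windows dominate, one shows the leader's average payoff tends to a weighted average that is pinned to the switch outcome, which in $G$ is chosen to exceed $\SV(G)$; absorbing the transient rounds, the $\gamma$-slack losses, and the learner's $o(n)$ regret then leaves a positive, $n$-independent gap.

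The step I expect to be the crux is establishing that the resulting payoff genuinely exceeds $\SV(G)$, i.e.\ engineering $G$ correctly. A careless bait-and-switch does not suffice: if $b^{\dagger}$ is a best response to the crossover mixture of the bait action and $a^{\dagger}$, then committing statically to that mixture induces the follower to play $b^{\dagger}$ and reproduces the manipulation's time-average exactly, so the Stackelberg value is merely tied. The game must therefore be chosen so that the leader's advantage comes from an outcome that is unavailable under best-responding play---and, since the follower's no-regret guarantee restricts how its play can be correlated with the leader's, one must verify the strict inequality against \emph{every} leader commitment, not only the obvious candidates. Once a suitable $G$ is fixed, the remaining bookkeeping (bounding the number and cost of transient and near-tie rounds and taking $\gamma\to0$) is routine.
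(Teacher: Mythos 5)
Your high-level plan is exactly the paper's: a bait-and-switch construction, analyzed first for Be-the-Leader as a clean proxy for a mean-based learner, and you have correctly identified the crux — the game $G$ must be engineered so that the switch outcome $(a^\dagger,b^\dagger)$ is \emph{not} reproducible as a static Stackelberg commitment, and one must verify this against every leader mixture, not just the obvious ones. But you then stop precisely at that crux: you write ``once a suitable $G$ is fixed, the remaining bookkeeping\ldots is routine,'' and never fix $G$. Since, as you yourself note, the bookkeeping around transient rounds and $\gamma$-slack is routine, the construction of $G$ and the verification that $\SV(G)$ is strictly below the dynamic payoff \emph{is} the theorem. As written, the proposal establishes nothing; it is a correct diagnosis of what the proof must contain, not a proof.

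For reference, the paper's game (leader is the row player) is
\begin{center}
  \begin{tabular}{r|ccc}
           & Left & Mid & Right\\ \hline
  Up   & $0,\ \epsilon$ & $-2,\ -1$ & $-2,\ 0$ \\
  Down & $0,\ -1$ & $-2,\ 1$ & $2,\ 0$
  \end{tabular}
\end{center}
The leader plays Up for $n/2$ rounds (bait: Left becomes uniquely best-in-hindsight with a cumulative lead $\approx \epsilon n/2$; leader earns $0$), then Down for $n/2$ rounds (switch: Left's lead erodes linearly, so the learner lingers on Left for $\approx \epsilon n/2$ rounds, then flips to Right; the leader earns $2$ per round thereafter). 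The per-round average is $\approx 1$. To verify $\SV=0$, one enumerates the leader's mixtures $p=\prob{\text{Up}}$: for $p<\tfrac12$ the follower's best response is Mid (leader gets $-2$); for $\tfrac12\le p\le\tfrac1{1+\epsilon}$ it is Right (leader gets $2-4p\le 0$); for $p>\tfrac1{1+\epsilon}$ it is Left (leader gets $0$). The feature you anticipated is exactly what makes this work: $(\text{Down},\text{Right})$ pays the leader $2$, but Right is never a best response to pure Down (Mid strictly dominates it there), so that cell is reachable only by exploiting the learner's reliance on stale cumulative averages — which is precisely what a no-\emph{swap}-regret learner would not do. Without exhibiting such a game and running this check, the argument is not complete.
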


\begin{proof}
Consider the following bimatrix game, with the leader (row player, bold) and learner (column player). 

\begin{center}
  \begin{tabular}{r|ccc}
           & Left & Mid & Right\\ \hline
{\bf Up}   & {\bf 0}, $\epsilon$ & {\bf -2}, -1 & {\bf -2}, 0 \\
{\bf Down} & {\bf 0}, -1 & {\bf -2}, 1 & {\bf 2}, 0
  \end{tabular}
\end{center}

The leader’s manipulation strategy is to first play \emph{Up} for
$m=n/2$ rounds and then play \emph{Down} for $m=n/2$ rounds.  On such
a sequence a mean based learner behaves similarly to Be the Leader
($\BTL$).  For simplicity we give the analysis for $\BTL$.  The main
difference between $\BTL$ and a mean-based learner is that the latter
will gradually shift probability as a new action starts to compete
with the best action in hindsight, while the former shifts it
immediately when the new action overtakes the best action in
hindsight.

For the first $m=n/2$ rounds, the leader plays \emph{Up}.  During these rounds BTL plays \emph{Left} and their cumulative payoffs across ({\em Left, Mid, Right}) become approximately $(\epsilon m,\,-m,\,0)$.  For the remaining $m$ rounds, the leader switches to \emph{Down}.  Initially, BTL continues to play \emph{Left}; after $\epsilon m$ additional rounds, the cumulative payoffs shift to $(0,\,-m+\epsilon m,\,0)$, at which point BTL switches to \emph{Right}.  The play ends with cumulative payoffs roughly $(-m+\epsilon m,\,0,\,0)$.  The leader’s total payoff is then approximately
\[
0 \cdot (m + \epsilon m) \;+\; 2\cdot(m-\epsilon m) \;\approx\; 2m \;=\; n.
\]
On the other hand, the Stackelberg value for this game is $0$.
To see this, consider the analysis of the leader's payoff as a function of their mixed strategy (defined by the probability that they play \emph{Up}):
\begin{center}
\begin{tabular}{c|c|c}
$\prob{\text{play Up}}$ & Learner Action & Leader Payoff \\
\hline
$[0,\frac12)$ & Mid  & $-2$ \\
$(\frac12,\frac{1}{1+\epsilon})$ & Right & $2 - 4 \prob{\text{play Up}} \leq 0$ \\
$(\frac{1}{1+\epsilon},1]$ & Left & $0$
\end{tabular}
\end{center}
Thus, against a mean-based learner the leader achieves a per-round
payoff bounded away from the Stackelberg value, establishing
manipulability of mean-based no-regret learners.
\end{proof}

Notice that a no-swap-regret algorithm would not be manipulated in the
above game.  This is because a swap-regret learner keeps track of
payoffs conditioned on the action they took.  When it switches
actions, it immediately begins gathering fresh data for the new action
rather than relying on stale cumulative averages.  In the game above,
after the initial $m+m\epsilon$ rounds in which \emph{Left} is played
the cumulative payoffs are $(0,\,-m+\epsilon m,\,0)$.  A switch to
\emph{Right} starts freshly collecting payoff ascribed to the right
action of $(-1,1,0)$ across ({\em Left, Mid, Right}), making \emph{Mid}
attractive right away instead of after the $2m$-th round; the leader’s
manipulation therefore fails.

\subsection{Non-manipulability of No-Swap-Regret Learners}

\begin{theorem}\label{t:noswap}
No-swap-regret algorithms are not manipulable.
\end{theorem}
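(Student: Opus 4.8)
The plan is to use the follower's vanishing swap-regret guarantee to read off the incentive constraints satisfied by the empirical joint distribution of play, and then to show that any distribution meeting those constraints gives the leader at most the Stackelberg value up to a vanishing term, against an arbitrary (adaptive) leader. Let $R_n\to 0$ be the follower's per-round swap regret on the realized sequence of leader actions $a^1,\dots,a^n$, and let $\pr^i_b$ be the probability the follower plays $b$ in round $i$ (the swap-regret bound holds against adaptively chosen sequences, as noted for $\FTPL$, so the leader's adaptivity is harmless). Applying the swap-regret inequality to the deviation rule that maps a single action $b$ to $b'$ and is the identity elsewhere, and cancelling the common terms, gives for every pair $b,b'$
\[
\sum\nolimits_{i=1}^n \pr^i_b\, U_F(b,a^i)\;\ge\;\sum\nolimits_{i=1}^n \pr^i_b\, U_F(b',a^i)-R_n n.
\]
Define the empirical joint distribution $\sigma(a,b)=\tfrac1n\sum_i \Indic{a^i=a}\,\pr^i_b$, with follower marginal $w_b=\tfrac1n\sum_i\pr^i_b$ (so $\sum_b w_b=1$) and conditional leader distribution $\mu_b\in\distof{k}$ whenever $w_b>0$. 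Dividing by $n$, the display reads $\expect[a\sim\mu_b]{U_F(b,a)}\ge\expect[a\sim\mu_b]{U_F(b',a)}-R_n/w_b$: each played action $b$ is an $(R_n/w_b)$-approximate best response to the leader mixed strategy $\mu_b$.

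Next I would bound the leader's payoff. In expectation over all randomness, the leader's per-round payoff equals $\expect{\sum_b w_b\, U_L(\mu_b,b)}$ (the realized average concentrates around its mean with $o(1)$ error by a routine Azuma--Hoeffding estimate). For $\delta\ge 0$ set
\[
g(\delta)=\max\bigl\{\,U_L(x,b'):x\in\distof{k},\ b'\text{ undominated and a }\delta\text{-best response to }x\,\bigr\};
\]
by the model's tie-breaking convention $g(0)=\SV$, $g$ is nondecreasing, and $|g|\le\bar U:=\max_{a,b}|U_L(a,b)|$. A compactness argument --- finitely many follower actions, and continuity of $x\mapsto\max_{b'}\expect[a\sim x]{U_F(b',a)}$ --- shows that any sequence attaining $g(\delta_m)$ with $\delta_m\to 0$ subconverges to a pair attaining $g(0)$, so $g(\delta)\to\SV$ as $\delta\to 0^+$. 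Writing $\eta(\delta):=g(\delta)-\SV\in[0,2\bar U]$ with $\eta(0^+)=0$, the approximate-best-response property gives $U_L(\mu_b,b)\le\SV+\eta(R_n/w_b)$, hence the leader's per-round payoff is at most $\SV+\expect{\sum_b w_b\,\eta(R_n/w_b)}$.

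It then remains to show $\sum_b w_b\,\eta(R_n/w_b)\to 0$. With only $k$ summands it suffices to bound each: if $w_b\ge\sqrt{R_n}$ then $R_n/w_b\le\sqrt{R_n}$ so the term is at most $\eta(\sqrt{R_n})$; if $w_b<\sqrt{R_n}$ then it is at most $2\bar U w_b<2\bar U\sqrt{R_n}$. Either way the whole error is at most $k\bigl(\eta(\sqrt{R_n})+2\bar U\sqrt{R_n}\bigr)\to 0$, so for every game and every leader strategy the leader's per-round payoff exceeds $\SV$ by a vanishing amount --- i.e.\ no-swap-regret algorithms are not manipulable.

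The step I expect to be the main obstacle is establishing $U_L(\mu_b,b)\le\SV+\eta(R_n/w_b)$. Two things make it delicate: the best responses furnished by swap regret are only \emph{approximate}, which forces the semicontinuity/compactness argument showing that $\delta$-best-response payoffs degrade continuously to $\SV$ as $\delta\to 0^+$; and follower actions of tiny empirical weight $w_b$ carry essentially no information, since their per-action regret $R_n/w_b$ need not be small, which is resolved by noting that such actions contribute only $O(w_b)$ to the leader's payoff regardless. Everything else --- the swap-regret manipulation, the $\sigma$ bookkeeping, and the concentration estimate --- is routine.
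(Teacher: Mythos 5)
Your proposal is correct in substance, and it shares the paper's skeleton---condition on the follower's action $b$, observe that the conditional leader distribution $\mu_b$ makes $b$ an approximate best response with tolerance equal to the conditional regret $R_n/w_b$, and bound the leader's conditional payoff by the best payoff attainable when $b$ is an approximately-forced response---but the key lemma is genuinely different. The paper controls the degradation \emph{quantitatively}: it defines $\OSV(b,r)$, notes it is a linear program in which $r$ enters through one constraint, deduces piecewise linearity and concavity in $r$, and hence a uniform linear bound $\OSV(b,r)\le \SV+Cr$; the weighted sum then collapses exactly, $\sum_b w_b\,\OSV(b,\regretcond_b)\le \SV + C\sum_b w_b\regretcond_b=\SV+Cr$, with no need to treat small-weight actions separately. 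You instead prove only \emph{qualitative} continuity of the approximate-Stackelberg value at $\delta=0^+$ by a compactness/upper-semicontinuity argument, and compensate for the lack of linearity with the $\sqrt{R_n}$ weight truncation (actions with $w_b<\sqrt{R_n}$ contribute $O(\sqrt{R_n})$ regardless; the rest have conditional regret at most $\sqrt{R_n}$). Your route is more elementary and would extend to settings where the LP/polyhedral structure is unavailable, but it yields no explicit rate, whereas the paper's bound $\SV+Cr$ does (even if $C$ can be enormous, as its Example 6.3 shows). One small repair: drop the word ``undominated'' from the definition of $g(\delta)$. It is not needed for $g(0)=\SV$ (the paper's Fact $\SV=\max_b\OSV(b,0)$ imposes no such restriction), and with it your key inequality $U_L(\mu_b,b)\le \SV+\eta(R_n/w_b)$ can fail when the learner's played action $b$ is weakly dominated yet an approximate best response to $\mu_b$; strictly dominated actions are automatically pushed into your small-weight branch, but weakly dominated ones are not. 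Deleting that restriction makes the argument go through as written.
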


The analysis will make use of the following generalization of the
Stackelberg value.

\begin{definition}[Optimistic Stackelberg value, $\OSV(b,r)$]
For a follower action $b$ and regret tolerance $r$, $\OSV(b,r)$, is the maximum leader payoff over mixed strategies $\rowstrat \in \Delta(A)$ such that $b$ is within $r$ of a best response:
\[
\expect[a\sim\rowstrat]{U_F(b,a)} \ge \max_{b'\in B}\expect[a\sim\rowstrat]{U_F(b',a)} - r.
\]
If action follower action $b$ is not forceable with regret at most $r$ then $\OSV(b,r) = -\infty$.
\end{definition}

The optimistic Stackelberg value has several important properties
that are captured by the subsequent fact, lemma, and corollary.

\begin{fact}\label{f:sv=maxosv}
 The Stackelberg value is $\SV = \max_{b \in B} \OSV(b,0)$.
\end{fact}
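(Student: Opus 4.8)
The plan is to unwind both definitions and observe that the claimed identity amounts to an exchange in the order of two maximizations. Write $\mathrm{BR}(\rowstrat) = \argmax_{b'\in B}\expect[a\sim\rowstrat]{U_F(b',a)}$ for the follower's best-response correspondence. Under the standard (strong) tie-breaking convention, in which the follower breaks ties in the leader's favor, the Stackelberg value is
\[
\SV = \max_{\rowstrat\in\Delta(A)}\ \max_{b\in\mathrm{BR}(\rowstrat)}\ \expect[a\sim\rowstrat]{U_L(a,b)},
\]
the outer maximum being attained because $A$ is finite (so $\Delta(A)$ is compact) and the objective is piecewise linear. On the other hand, unwinding the definition of $\OSV(b,0)$, for each fixed $b$,
\[
\OSV(b,0) = \max\bigl\{\,\expect[a\sim\rowstrat]{U_L(a,b)} : \rowstrat\in\Delta(A),\ b\in\mathrm{BR}(\rowstrat)\,\bigr\},
\]
with the convention that this is $-\infty$ when the feasible set is empty, i.e.\ when $b$ is not forceable with zero regret. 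The feasible set $\{\rowstrat : b\in\mathrm{BR}(\rowstrat)\}$ is cut out of $\Delta(A)$ by the linear inequalities $\expect[a\sim\rowstrat]{U_F(b,a)}\ge\expect[a\sim\rowstrat]{U_F(b',a)}$ for all $b'$, hence is closed and compact, so the maximum is attained whenever it is finite.

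First I would prove $\SV\le\max_{b\in B}\OSV(b,0)$. Take a leader strategy $\rowstrat^\star$ and a follower action $b^\star\in\mathrm{BR}(\rowstrat^\star)$ achieving the Stackelberg value. Then $\rowstrat^\star$ is feasible in the optimization defining $\OSV(b^\star,0)$, so $\SV = \expect[a\sim\rowstrat^\star]{U_L(a,b^\star)} \le \OSV(b^\star,0)\le \max_{b\in B}\OSV(b,0)$.

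For the reverse inequality $\SV\ge\max_{b\in B}\OSV(b,0)$, let $b^\star\in\argmax_{b\in B}\OSV(b,0)$; this maximum is finite because any $b$ that is a best response to some point mass on $A$ is forceable with zero regret, so at least one term is finite. Pick $\rowstrat^\star$ attaining $\OSV(b^\star,0)$, so $b^\star\in\mathrm{BR}(\rowstrat^\star)$. When the leader commits to $\rowstrat^\star$, $b^\star$ lies among the follower's best responses, and the leader-favorable tie-break yields payoff at least $\expect[a\sim\rowstrat^\star]{U_L(a,b^\star)}=\OSV(b^\star,0)$; hence $\SV\ge\OSV(b^\star,0)=\max_{b\in B}\OSV(b,0)$. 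Combining the two inequalities proves the fact.

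The only real subtlety—hardly an obstacle—is the bookkeeping around tie-breaking and the $-\infty$ convention. One must use the strong-Stackelberg tie-break so that $\OSV(b,0)$, which optimizes the leader's payoff over \emph{all} $\rowstrat$ that make $b$ merely \emph{a} best response, never overshoots $\SV$; and one must note that the outer maximum over $b$ ranges over a nonempty finite set containing at least one finite term, so $\max_{b\in B}\OSV(b,0)$ is well-defined and finite. (If one instead wishes to avoid the tie-break assumption entirely, the same argument goes through with suprema in place of maxima and a limiting argument over small perturbations of $\rowstrat^\star$, matching the undominated-action caveat noted before the fact.)
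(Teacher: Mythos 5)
The paper states this as a \emph{Fact} with no accompanying proof---it is treated as a definitional unwinding, so there is no ``paper's own proof'' to compare against line by line. Your argument is a correct elaboration of exactly the interchange-of-maxima observation the paper is taking for granted: you rewrite $\SV$ as $\max_{\rowstrat}\max_{b\in\mathrm{BR}(\rowstrat)}\expect[a\sim\rowstrat]{U_L(a,b)}$ and $\OSV(b,0)$ as $\max_{\rowstrat:\,b\in\mathrm{BR}(\rowstrat)}\expect[a\sim\rowstrat]{U_L(a,b)}$, and note that both are maxima over the same set of feasible pairs $(\rowstrat,b)$ with $b\in\mathrm{BR}(\rowstrat)$, grouped in different orders. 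Both inclusion inequalities then fall out immediately, as you show.

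Your handling of the two bookkeeping issues is appropriate and worth having on the record. For tie-breaking: the paper sidesteps this by assuming (just above the fact) that the follower's best response to the Stackelberg strategy is unique and remarking that otherwise the leader ``can still force their preferred action\ldots as long as it is undominated,'' which is exactly the perturbation-based alternative you offer in your closing parenthetical; your leader-favorable tie-break gives the same conclusion more directly. For finiteness: your observation that at least one $b$ is a best response to some point-mass leader strategy, so $\max_b\OSV(b,0)>-\infty$, is the right way to dispose of the $-\infty$ convention. One minor nit: your attainment claim for the outer max over $\rowstrat$ (``compact and piecewise linear'') glosses over the fact that $\rowstrat\mapsto\max_{b\in\mathrm{BR}(\rowstrat)}\expect[a\sim\rowstrat]{U_L(a,b)}$ is only upper semi-continuous, not continuous; but attainment follows anyway from your own decomposition, since each $\OSV(b,0)$ is a max of a linear function over a nonempty compact polytope and $B$ is finite. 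The proof is correct.
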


\begin{lemma}\label{l:osvLP}
For each $b$, $\OSV(b,r)$ can be expressed as a finite linear program
in which $r$ appears as a single linear constraint; hence $\OSV(b,r)$
is piecewise linear and concave in $r$ (where not $-\infty$).
\end{lemma}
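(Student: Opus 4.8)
The plan is to exhibit $\OSV(b,r)$ as the optimal value of a linear program over the leader's mixed strategy $\rowstrat \in \Delta(A)$ together with one auxiliary scalar, arranged so that $r$ perturbs exactly one right-hand side, and then to read off both conclusions from standard parametric-LP facts.

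First I would linearize the constraint that $b$ be within $r$ of a best response. Written out it is $\expect[a\sim\rowstrat]{U_F(b,a)} \ge \max_{b' \in B} \expect[a\sim\rowstrat]{U_F(b',a)} - r$, which is equivalent to the family of linear inequalities $\expect[a\sim\rowstrat]{U_F(b,a)} \ge \expect[a\sim\rowstrat]{U_F(b',a)} - r$ ranging over $b' \in B$. To keep $r$ confined to one constraint, I would introduce $t \in \reals$ intended to equal $\max_{b'} \expect[a\sim\rowstrat]{U_F(b',a)}$ and consider the linear program that maximizes the linear objective $\expect[a\sim\rowstrat]{U_L(a,b)}$ over $(\rowstrat, t)$, subject to $\rowstrat \in \Delta(A)$, to $t \ge \expect[a\sim\rowstrat]{U_F(b',a)}$ for every $b' \in B$, and to the single further constraint $\expect[a\sim\rowstrat]{U_F(b,a)} \ge t - r$. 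This program has finitely many variables and constraints because $A$ and $B$ are finite.

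Next I would check that this program computes $\OSV(b,r)$ and isolates $r$. The variable $t$ occurs only in constraints that push it upward ($t \ge \expect[a\sim\rowstrat]{U_F(b',a)}$) and in the constraint $\expect[a\sim\rowstrat]{U_F(b,a)} \ge t - r$, which only tightens as $t$ grows; hence at an optimum one may take $t = \max_{b'} \expect[a\sim\rowstrat]{U_F(b',a)}$ without loss, and the last constraint then reduces exactly to the best-response-within-$r$ condition. So the LP value equals $\OSV(b,r)$ whenever the feasible region is nonempty, and when it is empty---precisely when $b$ is not forceable with regret at most $r$---both equal $-\infty$ under the convention that a maximum over the empty set is $-\infty$. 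Moreover $r$ appears only in the right-hand side of the one constraint $\expect[a\sim\rowstrat]{U_F(b,a)} \ge t - r$.

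Finally I would deduce the shape in $r$. Enlarging $r$ only weakens that one constraint, so the set of $r$ for which the program is feasible is an up-set, hence an interval; on it the objective is bounded above (since $\rowstrat$ ranges over the compact simplex and payoffs are finite), so LP strong duality applies. The dual is a minimization program whose feasible region depends only on the payoff matrices---not on $r$---and whose objective is affine in $r$, because $r$ perturbs a single right-hand-side coordinate. Therefore $\OSV(b,r)$ is the minimum over a fixed pointed polyhedron of a function affine in $r$, equivalently the minimum of finitely many affine functions of $r$, one per vertex of the dual feasible region; such a minimum is piecewise linear and concave. This is just the standard sensitivity-analysis fact that the optimal value of an LP is a piecewise-linear concave function of its right-hand side, restricted here to the line traced out by $r$. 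The one genuinely load-bearing step is the reformulation: confining $r$ to a single constraint via the auxiliary $t$ and verifying that collapsing $t$ onto its lower envelope $\max_{b'} \expect[a\sim\rowstrat]{U_F(b',a)}$ leaves the optimum unchanged; the rest---feasibility monotonicity, duality, and the value-function shape---is routine parametric linear programming.
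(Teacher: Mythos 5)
Your proposal is correct: the paper states this lemma without proof, and your argument—rewriting $\OSV(b,r)$ as a finite LP over $(\rowstrat,t)$ with the auxiliary variable $t$ standing in for $\max_{b'}\expect[a\sim\rowstrat]{U_F(b',a)}$ so that $r$ perturbs a single right-hand side, then invoking standard parametric-LP duality to get piecewise linearity and concavity of the value in $r$—is exactly the intended standard argument. The only cosmetic quibble is that you assert the dual feasible polyhedron is pointed rather than simply citing the standard fact that an LP's optimal value is a piecewise-linear concave function of its right-hand side on its effective domain, but this does not affect correctness.
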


\begin{corollary}\label{c:osvlinear}
There exists a constant $C$ such that for all $b$ and $r$, $\OSV(b,r) \le \SV{} + C r$.
\end{corollary}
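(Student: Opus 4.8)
The plan is to bound each function $r \mapsto \OSV(b,r)$ separately by an affine function of the form $\SV + C_b\, r$, and then take $C = \max_b C_b$ over the finitely many follower actions $b$. The two ingredients are \Cref{l:osvLP}, which says $\OSV(b,\cdot)$ is concave and piecewise linear (so it has finitely many breakpoints and a well-defined slope on each piece), and \Cref{f:sv=maxosv}, which gives $\OSV(b,0) \le \SV$ whenever the left-hand side is finite.

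First I would record two facts about a single $\OSV(b,\cdot)$. (i) It is non-decreasing in $r$, since relaxing the regret tolerance only enlarges the feasible set of leader strategies; hence every slope of the piecewise-linear function is non-negative, and because the underlying LP from \Cref{l:osvLP} has finitely many pieces whose slopes are determined solely by the game's payoff matrices, these slopes are bounded by a game-dependent constant $C_b$. (ii) Once $r$ exceeds the range of the follower's payoffs, the constraint $\expect[a\sim\rowstrat]{U_F(b,a)} \ge \max_{b'}\expect[a\sim\rowstrat]{U_F(b',a)} - r$ holds for every $\rowstrat\in\Delta(A)$, so $\OSV(b,r)$ equals the unconstrained leader optimum $M_b := \max_{\rowstrat\in\Delta(A)}\expect[a\sim\rowstrat]{U_L(a,b)}$; in particular $\OSV(b,\cdot)$ is finite for all sufficiently large $r$.

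The main case is $\OSV(b,0)$ finite: then $\OSV(b,0)\le\SV$ by \Cref{f:sv=maxosv}, and concavity together with the slope bound from (i) gives $\OSV(b,r)\le \OSV(b,0)+C_b\, r\le \SV+C_b\, r$ for all $r\ge 0$. The subtle case — the one place I expect to need care — is $\OSV(b,0)=-\infty$, i.e.\ $b$ is not forceable at zero regret. Here let $r_b := \min_{\rowstrat\in\Delta(A)}\bigl(\max_{b'}\expect[a\sim\rowstrat]{U_F(b',a)} - \expect[a\sim\rowstrat]{U_F(b,a)}\bigr)$ be the least tolerance at which $b$ becomes forceable; it is attained and finite (a continuous function minimized over a simplex) and strictly positive precisely because $\OSV(b,0)=-\infty$. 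For $r<r_b$ the desired bound is vacuous since $\OSV(b,r)=-\infty$, and for $r\ge r_b$ we use (ii) to get $\OSV(b,r)\le M_b \le \SV + \tfrac{\max\{0,\,M_b-\SV\}}{r_b}\, r$. Taking $C$ to be the maximum of $C_b$ over the $b$ of the first type and of $\max\{0,M_b-\SV\}/r_b$ over the $b$ of the second type yields a finite constant (finitely many $b$; all payoffs finite) with $\OSV(b,r)\le\SV+Cr$ for every $b$ and every $r\ge 0$.

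In summary, the whole argument reduces to two observations about a concave piecewise-linear function whose data is fixed by the game — a uniform slope bound $C_b$ and a positive, finite threshold $r_b$ — both immediate from the linear-programming description in \Cref{l:osvLP}, followed by bookkeeping over the finitely many follower actions. No genuinely new estimate is required beyond \Cref{l:osvLP} and \Cref{f:sv=maxosv}.
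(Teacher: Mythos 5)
Your proposal is correct and follows essentially the same route as the paper: bound $\OSV(b,0)\le\SV$ via \Cref{f:sv=maxosv}, use the piecewise-linear structure from \Cref{l:osvLP} to control the growth in $r$, and take a maximum over the finitely many follower actions. The only difference is that you spell out the case $\OSV(b,0)=-\infty$ (via the positive forceability threshold $r_b$ and the unconstrained optimum $M_b$), which the paper's terser proof glosses over; this is a welcome bit of extra care, not a different argument.
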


\begin{proof}
  By \Cref{f:sv=maxosv}, all actions $b \in B$ satisfy $\OSV(b,0) \leq
  \SV$.  Each curve $\OSV(b,r)$ is piecewise linear, so
  $\max(\SV,\OSV(b,r))$ is piecewise linear in $r$. Combining,
  $\OSV(b,r)$ is upper bounded by $\SV{} + r\,C$ for some finite
  constant $C$.
\end{proof}

\begin{proof}[Proof of \cref{t:noswap}]
Consider the sequence $((a^1,b^1),\ldots,(a^n,b^n))$.  Let $\colstrat
\in \Delta(B)$ be the empirical frequencies of follower actions, and
let $\regretcond_b$ be the follower’s per-round regret conditional on
playing $b$.  The total average regret is $r = \regretvec \cdot
\colstrat$.  Let $\rowstrat^b$ be the leader’s conditional
distribution over actions when the follower plays $b$.  The leader’s
average payoff equals:
\[
\sum\nolimits_b \colprob_b \, \expect[a \sim \rowstrat^b]{U_L(a,b)}
\;\le\; \sum\nolimits_b \colprob_b \, \OSV(b,\regretcond_b)
\;\le\; \sum\nolimits_b \colprob_b [\SV{} + C\,\regretcond_b]
\;\le\; \SV{} + C\,r,
\]
where the inequalities use the definition of $\OSV$, \cref{c:osvlinear}, and \cref{f:sv=maxosv}.  As regret $r$ vanishes, the leader’s payoff approaches the Stackelberg value, so the leader cannot secure a constant per-round advantage; thus a no-swap-regret learner is not manipulable.
\end{proof}

Thus, we see that the follower's vanishing swap-regret guarantees that
in finite games of complete information that a Stackelberg leader
cannot obtain more than the Stackelberg value.  The rate at which the vanishing swap regret brings the leader's value to the Stackelberg value can be very slow (see \Cref{ex:manipulation-rate}, below). Interestingly, for Bayesian games and extensive form games, stronger conditions than no-swap-regret are needed to guarantee non-manipulability \citep{ACMM+25}.

\begin{example}\label{ex:manipulation-rate}
  Consider the following game with Stackelberg value $\SV{} = 0$:
  \begin{center}
  \begin{tabular}{l|cc}
                 & \em Left & \em Right \\ \hline
    \em \bf Up   & $1,0$ & $0,\epsilon$ \\
    \em \bf Down & $0,0$ & $0,0$
  \end{tabular}
  \end{center}
    On the other hand for $r \leq \epsilon$ the leader can achieve
    $\OSV(\text{Left}, r) = r/\epsilon$, i.e.\ $C=1/\epsilon$, an
    arbitrarily large number, in \Cref{c:osvlinear}.
\end{example}
    
\subsection{Notes}

The manipulability of learning algorithms was introduced by
\citet{BMSW-18} who defined the class of mean-based learners and
showed that, in the Bayesian pricing game between a buyer running a
mean-based learning algorithm and the seller attempting to manipulate the
algorithm to maximize revenue, the seller can extract the full surplus,
which exceeds the Stackelberg value (in this case, the Stackelberg
value is also known as the monopoly revenue, \citealp{mye-81}).  The
analysis above for normal form games, i.e., the manipulability of
mean-based algorithm and the non-manipulability of algorithms with
vanishing swap-regret is from \citet{DSS-19}.  This theory of
manipulability can be generalized to Bayesian games; however, a more
sophisticated notion of swap regret is needed.  See \citet{MMSS-22},
\citet{RZ-24}, and \citet{ACMM+25}.

\section{Inference for Learning Bidders}
\label{s:inference}

There is lots of evidence that bids by persistent bidders in frequent
auctions, such as the online advertisement auctions popularized by the
Google Ad Words platform, are not in equilibrium of the stage game.  These bids exhibit
cyclic behaviors and generally look like they come from algorithms.
This section follows the method of \citet{NST-15} for inferring
bidders' private values in repeated auctions from the assumption that
bidders have low best-in-hindsight regret.

Recall that if all bidders are low-regret learners, then the joint
distribution of play approximates a coarse correlated equilibrium
(\Cref{thm:regret-to-ce}) and that coarse correlated equilibrium is a
generalization of Nash equilibrium.  Thus, the development of
econometric methods for low-regret learners is a relaxation of the
classical econometric approach.

\subsection{Model}

There are $m$ bidders indexed by $j$ with fixed values $\vals =
(\vali[1],\ldots,\vali[m])$. The mechanism is repeated for $n$ rounds,
indexed by $i=1,\ldots,n$. In round $i$,
\begin{enumerate}
  \item bidders submit bids
    $\bids^i$,
  \item the mechanism applies an allocation rule $\ballocs^i$
    and a payment rule $\bpays^i$, and
  \item each bidder $j$ receives allocation $\balloci[j]^i(\bids^i)$,
    pays $\bpayi[j]^i(\bids^i)$, and has linear utility
    $U_j(\bids^i) = \vali[j]\,\balloci[j]^i(\bids^i)-\bpayi[j]^i(\bids^i)$.
\end{enumerate}
We assume bidders' behaviors satisfy low best-in-hindsight regret over
the $n$ rounds.

For bidder $j$, the average realized utility from the observed play $\bids^1,\ldots,\bids^n$ is
\[
\frac 1 n \sum\nolimits_i \bigl[\vali[j]\,\balloci[j]^i(\bids^i)-\bpayi[j]^i(\bids^i)\bigr].
\]
Let $\zee$ denote any alternative bid. Bidder $j$’s observed behavior exhibits per-round regret at most $\regreti[j]$ against hindsight bid $\zee$ if
\[
\frac 1 n \sum\nolimits_i \bigl[\vali[j]\,\balloci[j]^i(\bids^i)-\bpayi[j]^i(\bids^i)\bigr]
\;\ge\;
\frac 1 n \sum\nolimits_i \bigl[\vali[j]\,\balloci[j]^i(\zee,\bidsmi[j]^i)-\bpayi[j]^i(\zee,\bidsmi[j]^i)\bigr]
\;-\;\regreti[j].
\]

\subsection{Rationalizable Set}

An analyst who possesses the bid data $\bids^1,\ldots,\bids^n$ and
knows the stage allocation and payment rules
$\ballocs^1,\ldots,\ballocs^n$ and $\bpays^1,\ldots,\bpays^n$ would like
to infer the values of the bidders.  Unfortunately, while the analyst
is willing to assume that the bidders have low regret, the analyst
does not generally know how low their regrets are.  This lack of
knowledge motivates considering both values and regrets as parameters
and identifying the set of parameters that is consistent with the
data.  The goal is, thus, to characterize the set of values and
regrets that are rationalizable given the sequence of bids.

\begin{definition}[Rationalizable Set, $R_j$]
Bidder $j$'s {\em rationalizable set} $R_j$ is
the set of all pairs $(\vali[j],\regreti[j])$ such that the bid
sequence $\bids^1,\ldots,\bids^n$ satisfies the above regret
inequality for all deviations $\zee$.
\end{definition}

For each deviation $\zee$, define the average changes in allocation and payment that bidder $j$ would experience relative to the observed play:
\begin{align*}
\Delta \balloci[j](\zee)
  &= \frac{1}{n}\sum_i\!\left[\balloci[j]^i(\bids^i)-\balloci[j]^i(\zee,\bidsmi[j]^i)\right],\\
\Delta \bpayi[j](\zee)
  &= \frac{1}{n}\sum_i\!\left[\bpayi[j]^i(\bids^i)-\bpayi[j]^i(\zee,\bidsmi[j]^i)\right].
\end{align*}
Note that $\Delta \balloci[j](\zee)$ and $\Delta \bpayi[j](\zee)$ are
constants in the data.  Rearranging the regret inequality yields, for every $\zee$,
\[
\vali[j]\;\Delta \balloci[j](\zee)\;-\;\Delta \bpayi[j](\zee)\;\ge\;-\regreti[j].
\]
Each deviation $\zee$ thus imposes a linear constraint on $(\vali[j],\regreti[j])$.  See \Cref{fig:inference}.

\begin{figure}[t]
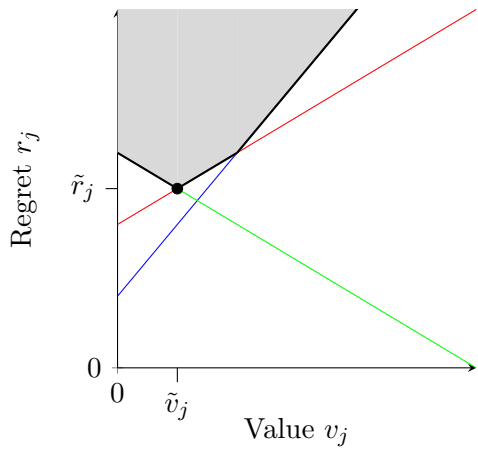

\centering
\inferencefig
\caption{Rationalizable set for bidder $j$ defined by linear inequalities.  The minimum rationalizable regret gives the value-regret pair $(\evali[j],\eregreti[j])$.}
\label{fig:inference}
\end{figure}

\begin{proposition}\label{prop:rationalizable-set}
The rationalizable set of bidder $j$,
\[
R_j \;=\; \bigl\{(\vali[j],\regreti[j]) \;:\; \vali[j]\Delta\balloci[j](\zee)-\Delta\bpayi[j](\zee)\ge -\regreti[j]\ \text{for all }\zee\bigr\},
\]
is convex.
\end{proposition}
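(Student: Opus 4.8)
The plan is to observe that $R_j$ is, by construction, an intersection of closed half-planes in the $(\vali[j],\regreti[j])$-plane---one for each deviation $\zee$---and then invoke the elementary fact that an arbitrary intersection of convex sets is convex.

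First I would fix a single deviation $\zee$ and rewrite the rearranged regret inequality in the form
\[
\vali[j]\,\Delta\balloci[j](\zee) \;+\; \regreti[j] \;\ge\; \Delta\bpayi[j](\zee).
\]
As already noted in the text, $\Delta\balloci[j](\zee)$ and $\Delta\bpayi[j](\zee)$ are constants determined by the observed data, so the left-hand side is an affine function of the pair $(\vali[j],\regreti[j])$. Hence the set $H_\zee \subseteq \reals^2$ of pairs satisfying this one inequality is a closed half-plane, and in particular convex.

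Next I would write $R_j = \bigcap_{\zee} H_\zee$, where the intersection ranges over all admissible deviations $\zee$ (a possibly infinite family, if the bid space is a continuum---this causes no difficulty). To conclude, take any two points $(\vali[j],\regreti[j])$ and $(\vali[j]',\regreti[j]')$ in $R_j$ and any $\lambda \in [0,1]$. For every $\zee$, both points lie in the convex set $H_\zee$, so the convex combination $\lambda(\vali[j],\regreti[j]) + (1-\lambda)(\vali[j]',\regreti[j]')$ lies in $H_\zee$ as well; since this holds for every $\zee$, the convex combination lies in $R_j$. Therefore $R_j$ is convex.

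There is no real obstacle here: the proposition is essentially a definitional consequence of the linearity of the regret constraints in $(\vali[j],\regreti[j])$. The only point worth flagging is that the intersection may involve infinitely many constraints, but convexity (indeed closedness) is preserved under arbitrary intersections, so nothing changes; one could additionally remark that when the deviation set is finite, $R_j$ is a polyhedron, which is the case depicted in \Cref{fig:inference}.
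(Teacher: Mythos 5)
Your proof is correct and follows exactly the same route as the paper's: each deviation $\zee$ imposes a linear (half-plane) constraint on $(\vali[j],\regreti[j])$, and $R_j$ is the intersection of these convex sets, hence convex. The paper states this in one line; your version merely spells out the affine form of each constraint and the convex-combination check, plus the harmless remark about possibly infinitely many constraints.
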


\begin{proof}
Each inequality gives a half-space in $(\vali[j],\regreti[j])$; and an intersection of half-spaces is convex.
\end{proof}

\subsection{Identification and Interpretation}

Because the true $(\vali[j],\regreti[j])$ must lie in $R_j$, a reasonable way to identify one point is to select the value $\evali[j]$ that minimizes the implied regret $\eregreti[j]$:
\[
(\evali[j],\eregreti[j]) \;=\; \arg\min_{(\vali[j],\regreti[j])\in R_j}\ \regreti[j].
\]
This approach identifies the value consistent with the smallest possible regret.

If the learner has large regret, then a wider range of values $\vali[j]$ are consistent with observed play. If the learner has small regret, then fewer values are consistent. Extremely small regrets less than $\eregreti[j]$ (See \Cref{fig:inference}) are not rationalizable.

Note that in the full-feedback setting, i.e., when the counter factual
allocations and payments are known for every counterfactual bid
$\zee$, the rationalizable set is not a statistical object.  It holds
for the ex post realizations.  A similar method of inference can be
performed in the partial-feedback model, as long as the learners are
sufficiently exploring (See \Cref{s:mab}) and as long as the analyst
has access to the exploration probabilities.  In this case the bounds
are statistical because they are based on estimated counterfactual
allocations and payments.

\subsection{Notes}

The methods of \citet{NST-15}, described above, have been refined and
extended in several directions.  \citet{NN-17} give a method for point
identifying the values of the agents under the assumption that values
with lower regret are more likely.  They apply this method to several
data sets from human subjects experiments and observe that its
predictions outperform classical models.  \citet{ABMY-19} and
\citet{NS-21} apply these methods to bid simulation and prediction in
online ad auctions, respectively.  \citet{NS-21} show that their
method is better at predicting counterfactual behavior, e.g., when the
supply changes, than classical econometric
methods.

The methods for regulating algorithmic collusion from data that are
discussed subsequently in \Cref{s:collusion} generalize those of
\citet{NST-15} from full to partial feedback and from
best-in-hindsight to swap regret.  As small swap regret implies small
best-in-hindsight regret, the rationalizable set for small swap regret
bidders is contained in that for small best-in-hindsight regrets.
 
\section{Regulation of Algorithmic Collusion}
\label{s:collusion}

An important application area for online learning is pricing in
repeated markets.  While the families of algorithms developed in
\Cref{s:online-learning,s:mab,s:swap-regret} provably attain
equilibrium in the stage game (which defines competitiveness, see
\Cref{s:repeated-games}), empirical studies of other natural families of
algorithms have shown a tendency towards super competitive prices,
either by explicitly learning tit-for-tat or via statistical mistakes
(discussed more below in \Cref{s:collusion-notes}).  Given this
dichotomy of outcomes, a regulator may wish to enforce that
algorithms of the former family rather than the latter families are
employed.  This section presents the approach of \citet{HLZ-24} for
understanding and regulating algorithmic collusion.

Both the space of possible pricing algorithms and the space of
equilibria in repeated games are complex.  While this survey has
presented several online learning algorithms that could be used to
learn prices, there are certainly other algorithms that sellers may
use.  Moreover, it is well known that equilibria in repeated games
can sustain cooperation between players, even when non-cooperation is
a dominant strategy in the stage game.  For example, two sellers
could cooperate by both posting a high price even when posting a low
price dominates posting a high price (i.e., regardless of what the
other seller does, posting a low price is a best response).  It is
straightforward to construct algorithms that attempt to find and
sustain such cooperation.  Moreover, simple cooperative strategies can
be learned automatically without explicit algorithmic programming.

The current legal theory of collusion focuses on explicit agreements,
allowing other methods that might arrive at the same outcomes as tacit
collusion which are legal.  Nonetheless, a regulator may wish to limit
potential harm to consumers of supra-competitive prices by adopting
regulation.  Should such regulation be to restrict the space of
algorithms that are allowed?  If so, what are the allowed algorithms
and how should such restrictions be enforced?

\subsection{Model}

Consider the following simple model of imperfect price competition
with side information.  Sellers 1 and 2, with fixed marginal costs
$c_1$ and $c_2$, participate in a repeated market. In each round $i$,
both sellers receive private signals $\signal^i_1$ and $\signal^i_2$
that are potentially correlated with each other and the buyer's values
$\vals = (v^i_1,v^i_2)$ for each of the seller's products.  Each
seller $j$ then posts a price $p^i_j$. Buyer $i$ purchases from the
seller $j$ offering the higher net value,
\[
\argmax_j\, \{ v^i_j - p^i_j \},
\]
provided that the buyer’s utility is positive.  Hence, seller $i$
sells whenever the buyer's values $\vals = (v^i_1, v^i_2)$ lie in region $A_i$ of the
valuation space where the buyer prefers seller~$i$’s offer, as
depicted in \cref{f:pricecompfig}.

\begin{figure}[t]
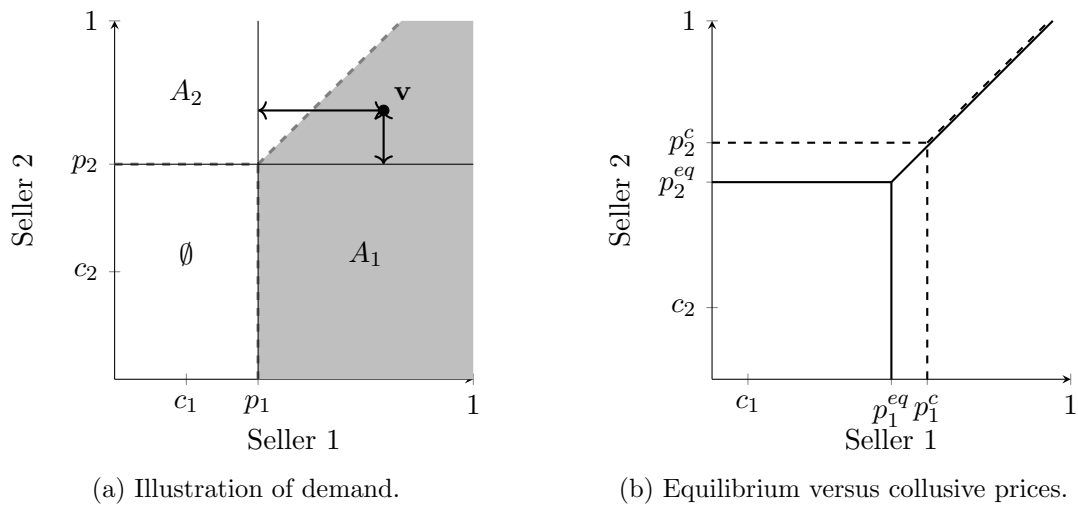

  \centering
  \begin{subfigure}{.45\textwidth}
    \centering
  \pricecompfig
  \caption{Illustration of demand.}
  \label{f:pricecompfig}
  \end{subfigure}
  \quad
  \begin{subfigure}{.45\textwidth}
    \centering
  \pricecompuniformfig
  \caption{Equilibrium versus collusive prices.}
  
  \label{f:pricecompuniformfig}
\end{subfigure}
\caption{On the left, the buyer purchases from seller~$i$ in the
  region $A_i \subseteq \mathbb{R}^2$ where $v_i - p_i \ge \max(v_{-i}
  - p_{-i},0)$.  Prices are shown with solid lines.  An example pair
  of values $\vals = (\val_1,\val_2)$ is depicted as a point, with distances to each
  price corresponding to the buyer's utilities.  The
  regions $\emptyset,A_1,A_2$ are separated by dashed lines; e.g.,
  region $A_1$ is shaded gray and contains the values $\vals$.  On the
  right, the example with costs $c_1 = 0.1$, $c_2 = 0.2$, and $\vals
  \sim U[0,1]^2$ is depicted along with equilibrium and collusive
  prices.  The collusive prices harm consumers because, relative to the equilibrium price, consumers are either
  excluded or are charged a higher price.}
\end{figure}

Both sellers can potentially benefit by setting prices above the
competitive level. For instance, consider an example with $c_1 = 0.1$
and $c_2 = 0.2$, where the buyers’ values are drawn from $\vals \sim
U[0,1]^2$. Figure~\ref{f:pricecompuniformfig} illustrates the Nash
equilibrium prices and the optimal supra-competitive prices attainable
through coordination.  The central policy concern is therefore whether
learning algorithms employed by the sellers converge to competitive or
non-competitive outcomes.

This model allows the possibility that sellers have private
information that is correlated with each other and the demand.  The
regulatory perspective of \citet{HLZ-24} is that the sellers might
possess such side information and that their conditioning their prices
on this side information is not collusion.  Thus, a model for
regulation of algorithmic collusion must allow it.  Of course, the
case that there is no side information is included in the model.

\subsection{Competitive Behavior}

There is opportunity for disagreement as to what sorts of behaviors
constitute collusion.  For example, there are markets where the only
Bertrand equilibrium is for the sellers to all post low prices; while
it is possible for a Stackelberg leader to commit to high prices, for
which followers best respond with medium-high prices, and all sellers
are better off.  This (Stackelberg) behavior and the resulting outcome
are not competitive, but is the behavior collusion?  Moreover, this
Stackelberg leader cannot be distinguished from an ignorant seller who
just posts a high price without understanding whether it is good or
bad.  On the other hand, (unilaterial) competitive behavior is
straightforward to define.  In the example above, both the Stackelberg
seller and the ignorant seller are non-competitive; while a
best-responding follower is being competitive.  It is straightforward
to argue that competitive behavior is not collusive.  Thus, we will
focus on regulating (unilateral) competitive behavior.

In games without side information, the canonical notion of competitive
outcomes is Nash equilibrium.  Unfortunately, Nash equilibrium is not
such a good notion for ensuring algorithms are competitive.  As we saw
in \Cref{s:repeated-games}, Nash equilibria are not generally
computationally tractable (\Cref{thm:nash-intractable}).  Thus, we
cannot generally expect algorithms, in a reasonable amount of time, to
reach a Nash equilibrium.

On the other hand, Nash equilibria is also too restrictive a concept
as it does not allow the sellers to have side information.
Acknowledging that sellers can possess side-information that is
potentially correlated with each other expands the set of possible
equilibria.  Specifically, the set of Bayes-Nash equilibria with all
possible information structures is the set of Bayes-correlated
equilibria \citep{aum-87}.

\begin{proposition}
  \label{prop:bne=bce}
  The set Bayes-Nash equilibria (BNE) of a stochastic game for all
  information structures equals the set of the Bayes-correlated
  equilibria (BCE).
\end{proposition}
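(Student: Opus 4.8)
The plan is to prove the two inclusions separately, matching the obedience constraints that define BCE against the interim best-response constraints that define BNE. Write $\theta$ for the payoff-relevant state (costs, values, and whatever is correlated with them), $\mu$ for its prior, and $a=(a_1,a_2)$ for an action (price) profile. A BCE is a distribution $\nu$ over $(\theta,a)$ with $\theta$-marginal $\mu$ such that, for every seller $j$, every recommendation $a_j$ in the support, and every deviation $a_j'$,
\[
\sum_{\theta,a_{-j}} \nu(\theta,a_j,a_{-j})\bigl[U_j(\theta,a_j,a_{-j})-U_j(\theta,a_j',a_{-j})\bigr]\;\ge\;0;
\]
a BNE for an information structure $\pi(\theta,t_1,t_2)$ (with $\theta$-marginal $\mu$) is a profile $\sigma_j:t_j\mapsto\Delta(A_j)$ in which every action in the support of $\sigma_j(t_j)$ maximizes seller $j$'s interim expected payoff given $t_j$ and $\sigma_{-j}$.

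First I would show $\mathrm{BCE}\subseteq\mathrm{BNE}$ (over all information structures). Given a BCE $\nu$, take the canonical information structure in which seller $j$'s signal \emph{is} the recommended action, $t_j=a_j$ with joint law $\pi=\nu$, and let each seller play obediently, $\sigma_j(a_j)=a_j$. The induced law of $(\theta,a)$ is exactly $\nu$, a deviation by seller $j$ is a map $a_j\mapsto a_j'$, and seller $j$'s interim best-response condition given signal $a_j$ is precisely the obedience inequality above (rescaled by $1/\prob{a_j}$). Hence the obedient profile is a BNE if and only if $\nu$ is a BCE; in particular a finite information structure always suffices.

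Next I would show the converse: the induced outcome distribution of any BNE is a BCE. Fix an information structure $\pi$ and a BNE $\sigma$, and set $\nu(\theta,a)=\sum_{t_1,t_2}\pi(\theta,t_1,t_2)\,\sigma_1(a_1\mid t_1)\,\sigma_2(a_2\mid t_2)$; its $\theta$-marginal is $\mu$. Fix seller $j$, a target recommendation $a_j$, and an alternative $a_j'$, and consider the strategy $\sigma_j'$ obtained by composing $\sigma_j$ with the swap that replaces $a_j$ by $a_j'$ and fixes every other action; $\sigma_j'$ is a function of $t_j$ only, hence feasible. Since $\sigma_j$ best responds to $\sigma_{-j}$, switching to $\sigma_j'$ is ex-ante weakly unprofitable; writing out that inequality and marginalizing $t_1,t_2$ out by the law of iterated expectations turns it into exactly the obedience inequality for $\nu$ at $(a_j,a_j')$. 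Ranging over all $j,a_j,a_j'$ shows $\nu$ is a BCE.

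The step I expect to be the main obstacle to state cleanly is this converse: it rests on the observation that a deviation conditioned only on the recommended action is a \emph{special case} of a deviation conditioned on the full signal---coarsening the information a player uses to deviate cannot enlarge the set of profitable deviations---which is a garbling argument whose bookkeeping (mixed strategies, zero-probability signals and actions, and aligning the summation ranges with the BCE inequality) must be handled with care. One should also pin down the definitional conventions: if each seller is endowed with a baseline signal (the $x_j$ of the model), then ``all information structures'' should mean all refinements of that baseline and BCE should be required consistent with it; with a degenerate state, \Cref{prop:bne=bce} specializes to the classical equivalence between correlated equilibrium and the Nash equilibria of all information structures \citep{aum-87}.
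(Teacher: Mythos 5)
Your proof is correct and takes essentially the same route as the paper: for $\mathrm{BCE}\subseteq\mathrm{BNE}$ you use the canonical information structure in which the signal \emph{is} the recommended action and check that obedience is an interim best response, and for $\mathrm{BNE}\subseteq\mathrm{BCE}$ you take the induced outcome distribution as the mediator's recommendation and observe that a deviation conditioned only on the recommendation is a coarsening of a feasible deviation conditioned on the full signal. The paper states both steps tersely without unpacking the swap-deviation/garbling argument, so your version is a legitimate fleshing-out of the same proof rather than a different one.
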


\begin{proof}
  Any Bayes-Nash equilibrium with side-information induces a joint
  distribution on actions and the state.  Consider the mediator that
  recommends this joint distribution of actions conditioned on the
  state.  The Bayes-Nash property then implies the Bayes-correlated
  property.  Conversely, consider any mediator's conditional joint
  distribution on actions given the state that induces a BCE.
  Consider the side information that is equal to these actions, i.e.,
  each player's side information is their action in this BCE.  A
  possible joint strategy profile is for each seller to take the
  action that corresponds to their side-information.  This joint
  strategy is a BNE with the same outcome as the BCE.
\end{proof}

In \Cref{s:repeated-games}, we saw that (in games without information)
the set of correlated equilibria corresponds to the set of outcomes
where all players have vanishing swap regret
(\Cref{thm:regret-to-ce}).  The same result holds in games with
information for Bayes-correlated equilibria and players with vanishing
contextual swap regret.\footnote{The context is the signal that a
seller obtains.  Contextual swap regret allows the swap function to
depend on both the action and the context.  See \citet{BM-07}, this review
article will not discuss this model further.}  From the regulators
point of view, with the context hidden, vanishing contextual swap
regret looks like vanishing swap regret.  Thus, a unilateral
definition of competitive behavior that is achievable by algorithms is
vanishing swap regret.  Since our goal in identifying competitive
behavior is in regulating algorithmic collusion, it is important to
identify a unilateral property; it should be a choice that sellers can
make to not collude according to the definition.

\begin{definition}
  \label{d:competitive}
  A seller in a pricing game has {\em competitive behavior} if their actions satisfy vanishing swap regret.
\end{definition}

\subsection{Learning and Misspecification}

The literature in economics has focused on understanding algorithmic
collusion of a specific family of \emph{Q-learning} algorithms.
Q-learning is an online learning algorithm for Markov decision
processes.  Q-learning assumes that the environment has a known state
space but unknown, stationary transition dynamics and payoffs that are
independent and identically distributed conditional on the state.

\begin{definition}[Markov Decision Process, MDP]
  A Markov decision process is specified by a finite set of states, a
  set of actions, transition probabilities for each action-state pair
  to other states, and payoff distributions for each action-state
  pair.  When the decision maker is in a given state and taking a
  given action, the state transitions according to the transition
  probabilities and the decision maker obtains payoff according to the
  payoff distribution.  
\end{definition}

Note that in the definition of an MDP, the distribution of next state
and payoff is independent conditioned on the current state and action.
Assuming the feedback includes both the state and the payoff, and that
there exist policies that reach every state infinitely often, the
Q-learning algorithm will learn to maximize its payoff on the Markov
decision processes.

Empirical evidence suggests that some learning algorithms may converge
to supra-competitive prices. In one example, \citet{CCDP-20} show that
Q-learning with the \emph{state} defined as a one-round history can
learn a ``tit-for-tat'' pattern that sustains high prices.  This
result is important as it shows that algorithms that are not
explicitly programmed to sustain collusions can learn to do so.
Importantly, investigation of the source code of such algorithms will
not turn up any intent to collude.  Note that this application of
Q-learning is misspecified because the opponent's strategy is not
determined only by the state (one round of history).  In another
example, \citet{BS-22} consider stateless Q-learning and show that
even the statistical mistakes made because the model is misspecified
can produce supra-competitive prices.

Importantly, though these supra-competitive outcomes might be
equilibria of the repeated game (many outcomes are), they are not
equilibria of the stage game.  Thus, whether or not they are
considered collusive, they are definitely not competitive
(\Cref{d:competitive}).  In the remainder of this section we will see
that it is possible to design a regulation that can audit algorithms
for whether they are competitive.

\subsection{Designing Regulation}

The policy challenge is to regulate algorithmic pricing to ensure
competitiveness while allowing innovation and optimization. A
regulation should satisfy the following desiderata:
\begin{enumerate}
  \item it should permit the use of information and optimization;
  \item it should be unilateral, i.e., any seller can choose to
    be competitive (and non-collusive);
  \item it should be computationally reasonable, i.e., the computational requirements required to satisfy it are not onerous; and
  \item it should rely only on observable data rather than access to
    source code, preserving intelectual property and allowing
    algorithmic innovation.
\end{enumerate}

Recall that computing a Nash equilibrium is, in general,
computationally intractable (\Cref{thm:nash-intractable}). However, in
games with information, a Nash equilibrium coincides with a correlated
equilibrium (\Cref{prop:bne=bce}). Moreover, standard online learning
algorithms (\Cref{s:swap-regret}) are known to guarantee convergence
to correlated equilibria (\Cref{thm:regret-to-ce}).  This observation
suggests a promising regulatory approach: require that learning
algorithms used by market participants achieve vanishing \emph{swap
regret}. Recall that swap-regret-minimizing play converges to the set
of correlated equilibria.

A key detail is that costs and informational structures of sellers may
not be directly observable.  The legal theory of collusion suggests
that judges are reluctant to rule on what is in the minds of the
sellers.  Thus, a compatible theory for evaluating whether a seller is
being competitive should allow any consistent information structure or
costs, i.e., it should use revealed preference and revealed
information.  For revealed preference, the costs for which the swap
regret is the lowest can be inferred using the methods developed in
\Cref{s:inference}.  For revealed information, checking for vanishing
swap regret ensures that all information, that a seller reveals that
they have by selecting different prices, is optimally used.

\citet{HLZ-24} established that, given sufficient data, swap regret
can be estimated with known sample complexity bounds, enabling
empirical regulation of algorithmic behavior without inspecting
proprietary source code.  A key step in their statistical test is the method
of estimating the minimum rationalizable regret (\Cref{s:inference}
and \Cref{fig:inference}) by generalizing the methods of
\Cref{s:inference}.  The main takeaways from the theorem below which
measures the ``sample complexity'' of the statistical test is the
order of the dependence on various model parameters.  Some of the
strong assumptions of this theorem, such as the known discrete price
levels and minimum exploration probability, are relaxed by
\citet{HWZ-25}.

\begin{theorem}
  Given $k$ price levels, maximum price $\overline{\pay}$, minimum
  probability of exploring any price of $\underline{\pr}$, target regret
  $\overline{\regret}$, and failure probability $\delta$; there is a
  statistical test that with probability $1-\delta$ passes algorithms with regret at most
  $\overline{\regret}$ and fails algorithms with regret more than
  $2\overline{\regret}$ in number of rounds $n = O((\frac{k\,
    \overline{\pay}}{\underline{\pr}\,\overline{\regret}})^2(\log
  k/\delta))$.
\end{theorem}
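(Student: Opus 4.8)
The plan is to generalize the full-feedback, best-in-hindsight rationalizable-regret construction of \Cref{s:inference} to \emph{swap} regret under \emph{partial} feedback, and then to pick $n$ so that the sampling error in the estimated minimum rationalizable swap regret is too small to confuse ``regret at most $\overline\regret$'' with ``regret more than $2\overline\regret$.'' Fix the audited seller, whose unknown marginal cost $c$ we may take in $[0,\overline\pay]$. The analyst observes this seller's realized prices $a^i$, its per-round randomization $\prs^i$ (the ``access to the exploration probabilities'' of \Cref{s:inference}, with $\pr^i_z\ge\underline\pr$ for every price $z$), the stage allocation and payment rules, and---being in the partial-feedback world of \Cref{s:mab}---only the realized allocation $x^i_j(\bids^i)$, not counterfactual ones. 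As in \Cref{s:inference}, the seller's per-round regret from consistently swapping a played price $a$ to an alternative $z$ is, conditional on the realizations, $\frac1n\sum_i \pr^i_a\bigl((z-c)\,x^i_j(z,b^i_{-j})-(a-c)\,x^i_j(a,b^i_{-j})\bigr)$, which is affine in $c$ with coefficients bounded by $\overline\pay$ and $1$ (cf.\ the linear constraints behind \Cref{prop:rationalizable-set}). I would replace each unobserved counterfactual allocation $x^i_j(z,b^i_{-j})$ by its propensity score $\hat x^i(z)=\Indic{a^i=z}\,x^i_j(\bids^i)/\pr^i_z$, unbiased by \Cref{lem:unbiased}, and have the test compute
\[
\hat\regret_j \;=\; \min_{c\in[0,\overline\pay]}\;\sum_{a}\;\max_{z}\;\frac1n\sum_i \pr^i_a\bigl((z-c)\,\hat x^i(z)-(a-c)\,\hat x^i(a)\bigr),
\]
passing the seller iff $\hat\regret_j\le\tfrac32\overline\regret$.

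By minimal exploration, $\hat x^i(z)\in[0,1/\underline\pr]$ (\Cref{lem:h-tilde} with $h=1$), so the ``constant'' piece $z\,\hat x^i(z)$ of each summand lies in $[0,\overline\pay/\underline\pr]$ and the ``slope'' piece $\hat x^i(z)$ in $[0,1/\underline\pr]$. For each ordered pair $(a,z)$ the summands minus their conditional means form a bounded martingale-difference sequence, so Azuma--Hoeffding shows the estimated conditional swap regret is, uniformly over $c\in[0,\overline\pay]$, within $O\!\bigl((\overline\pay/\underline\pr)\sqrt{\ln(k^2/\delta)/n}\bigr)$ of its true full-feedback value except with probability $\delta/k^2$ --- the slope error $O((1/\underline\pr)\sqrt{\cdot})$ being magnified by $c\le\overline\pay$. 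A union bound over the at most $k^2$ pairs gives, with probability $1-\delta$, that all these estimates are simultaneously within $\varepsilon:=O\!\bigl(\tfrac{\overline\pay}{\underline\pr}\sqrt{\tfrac{\log(k/\delta)}{n}}\bigr)$ of the true quantities defining the rationalizable constraints.

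On this good event, taking $\max_z$, then $\sum_a$, then $\min_c$ of these $\varepsilon$-accurate affine functions shows $\hat\regret_j$ is within $k\varepsilon$ of the true minimum rationalizable swap regret $\regret_j$: each maximum inherits accuracy $\varepsilon$, summing the $k$ of them multiplies it by $k$, and $\min_c$ preserves it. Imposing $k\varepsilon\le\tfrac12\overline\regret$, i.e.\ $(\overline\pay/\underline\pr)\sqrt{\log(k/\delta)/n}=O(\overline\regret/k)$, and solving for $n$ yields $n=O\!\bigl((\tfrac{k\,\overline\pay}{\underline\pr\,\overline\regret})^2\log(k/\delta)\bigr)$. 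Then, on the good event, a seller with $\regret_j\le\overline\regret$ has $\hat\regret_j\le\tfrac32\overline\regret$ and passes, while a seller with $\regret_j>2\overline\regret$ has $\hat\regret_j>\tfrac32\overline\regret$ and fails, which is the claim.

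The main obstacle is the middle step: pushing the revealed-preference inference of \Cref{s:inference} from best-in-hindsight to swap regret and from full to partial feedback while keeping the error linear in $k$. One must check that the per-round conditional-swap-regret quantities are unbiasedly and boundedly estimable from the realized allocation together with the known randomization $\prs^i$, that restricting $c$ to $[0,\overline\pay]$ loses nothing, and that the $\max_z$/$\sum_a$/$\min_c$ operations blow up the per-pair error by a factor of exactly $k$ rather than $k^2$ --- it is this last point that keeps $\overline\pay$ (not $\overline\pay^2$) in the numerator of the sample-complexity bound. Everything else is the exploration / propensity-scoring / Azuma--Hoeffding / union-bound template already deployed for \Cref{thm:mab-to-ola}.
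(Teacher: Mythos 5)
Your proposal is essentially correct and follows exactly the route the paper itself indicates for this result: the review article states the theorem without proof, citing \citet{HLZ-24}, and describes the key step just as you execute it---estimate the minimum rationalizable swap regret by generalizing \Cref{s:inference} from best-in-hindsight to swap regret and from full to partial feedback via propensity scoring (\Cref{s:mab}), then threshold after a martingale concentration and union bound over the $k^2$ swap pairs (with uniformity in the cost $c$ handled through affinity)---and your accounting of the per-pair error $\varepsilon=O\bigl((\overline{\pay}/\underline{\pr})\sqrt{\log(k/\delta)/n}\bigr)$, amplified to $k\varepsilon$ by the sum over played prices, correctly reproduces $n = O\bigl((k\,\overline{\pay}/(\underline{\pr}\,\overline{\regret}))^2\log(k/\delta)\bigr)$. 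The one point worth flagging is interpretive rather than a gap: your pass/fail guarantee is stated for the minimum rationalizable swap regret rather than for regret at the true (unobserved) cost, which is the reading consistent with the paper's revealed-preference discussion of the test and with the false-negative caveat it attributes to \citet{HWZ-25}.
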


It is also possible to regulate to force coarse correlated equilibria
by auditing for vanishing best-in-hindsight regret.  This has two
drawbacks, first it is a larger set than is necessary
(\Cref{prop:bne=bce}) so it admits potentially worse outcomes for
consumers.  Moreover, by the manipulation results of
\Cref{s:manipulation}, these worse-for-consumer outcomes can be found
by a Stackelberg leader manipulating the best-in-hindsight learner \citep{HWZ-25}.

In summary, the regulation of algorithmic collusion can be grounded in
the theory of no-regret learning. By mandating that learning
algorithms exhibit vanishing swap regret, regulators can ensure that
market outcomes correspond to competitive (stage-game) equilibria,
while maintaining flexibility and innovation in algorithm design. This
approach offers a data-driven, computationally tractable, and
information-based foundation for the governance of algorithmic
markets.

\subsection{Notes}
\label{s:collusion-notes}

See \citet{har-18} for a discussion of the legal theory around
algorithmic collusion as well as discussion of various methods of
regulating algorithmic collusion.  There is now an extensive
mostly-experimental literature in economics that studies various
families of algorithms and observes supra-competitive prices in some
configurations \citep{cal-27}.

One family of papers shows that algorithms that learn over internal
strategies that include collusive ones like tit-for-tat and
price-war-avoidance do learn to employ those strategies, even though
they are not explicitly programmed.  \citet{CCDP-20} does so with
Q-learning with one time-period history. \citet{CCDP-21} extends the
prior analysis to noisy observation. \citet{kle-21} extends to
asynchronous updates.  Finally, \citet{FGS-24} obtains this behavior
from pricing algorithms developed from large language models.

Another family of papers shows that algorithms with misspecified
models make statistical mistakes that result in supra-competitive
prices.  The statistical mistake that is made is assuming that payoffs
are i.i.d.\ when they are not, i.e., misspecification.  The main
result of \citet{AFP-24} compares stateless Q-learning with full
feedback and partial feedback where the partial feedback model makes
statistical mistakes due to misspecification; in their stylized price
competition setting, they find that full feedback gives competitive
prices while misspecified partial feedback gives supra-competitive
prices.  Thus, the statistical mistakes of Q-learning come from the
partial feedback model.\footnote{Note, the success of Q-learning with
full feedback, which is similar to the Follow the Leader algorithm of
\Cref{s:online-learning}, in simulations of stylized models is not
incompatible with its lack of theoretical guarantees in adversarial
models.  E.g., see \citet{BDO-24} a theoretical analysis of these
algorithms in Bertrand competition.}  Note that, without including the
last round of play as in \citet{CCDP-20}, collusive strategies
like tit-for-tat cannot be learned.  In parallel, \citet{BS-22} give
an empirical analysis of stateless Q-learning and \citet{BM-23} give
a theoretical analysis.  Other algorithms have been considered as
well.  \citet{HMP-21} gives an empirical analysis of the Upper
Confidence Bound (UCB) algorithm, which is a multi-armed bandit
algorithm with provable guarantees when the demand is i.i.d.\ across
rounds, but which is misspecified in pricing competition.

In contrast, \citet{HLZ-24} suggest that vanishing swap regret is a
unilateral definition of competitive behavior that, as we have seen in
\Cref{s:swap-regret}, algorithms can easily satisfy.  As discussed
above, the swap regret property can be audited from data.
\citet*{HWZ-25} extend the method to remove some technical assumptions
of \citet*{HLZ-24}.  As a caveat for this method, \citet{HWZ-25}
demonstrate, via a simulation of Q-learning in a setup similar to
\citet{BS-22}, that the audit can have false negatives when the true
costs of the sellers are unknown and the market converges to prices
that look like they are competitive for higher costs.  Nonetheless,
this definition of non-collusion enables regulators to identify
sellers that are running bad algorithms (that make statistical
mistakes) or ones that learn collusive strategies like tit-for-tat.

A central idea in \citet{HLZ-24,HWZ-25}---that information is revealed
in the actions of a no-regret agent and that vanishing swap regret
corresponds to optimal use of this revealed information---is from
\citet{CHJ-20}.  They relax the common prior assumption in
principal-agent mechanism design to no-regret learning by both the
principal and the agent.  \citet{CRS-24} improve on the computational
and statistical efficiency of the approach.

\bibliographystyle{apalike}

\end{document}